\colorlet{Green}{black!30!green}
\definecolor{THc}{rgb}{0.9,0.3,0.2}
\tikzset{arrow data/.style 2 args={%
		decoration={%
			markings,
			mark=at position #1 with \arrow{#2}},
		postaction=decorate}
}
\crefname{section}{Sec.}{Secs.}
\Crefname{section}{Sec.}{Secs.}
\theoremstyle{definition}
\theoremstyle{plain}
\newtheorem{lem}{Lemma}
\newcommand{\bit}{\begin{itemize}}
	\newcommand{\eit}{\end{itemize}}
\renewcommand{\>}{\right\rangle}
\newcommand{\<}{\left\langle}
\newcommand{\ba}{\begin{align}}
	\newcommand{\ea}{\end{align}}
\newcommand{\be}{\begin{equation}}
	\newcommand{\ee}{\end{equation}}
\newcommand{\bi}{\begin{itemize}}
	\newcommand{\ei}{\end{itemize}}
\newcommand{\Tr}{\operatorname{Tr}}
\def\abs#1{\left|#1\right|}
\DeclareMathAlphabet{\mymathbb}{U}{BOONDOX-ds}{m}{n}
\renewcommand{\log}{\ln}
\newcommand{\idg}[1]{{\bfseries #1)}}
\newcommand{\subfigimg}[3][,]{%
	\setbox1=\hbox{\includegraphics[#1]{#3}}
	\leavevmode\rlap{\usebox1}
	\rlap{\hspace*{2pt}\raisebox{\dimexpr\ht1-0.5\baselineskip}{{\bfseries \large\textsf{#2}}}}
	\phantom{\usebox1}
}
\begin{document}
	\date{\today}

	\newcommand{\bbra}[1]{\<\< #1 \right|\right.}
	\newcommand{\kket}[1]{\left.\left| #1 \>\>}
	\newcommand{\bbrakket}[1]{\< \Braket{#1} \>}
	\newcommand{\pll}{\parallel}
	\newcommand{\nn}{\nonumber}
	\newcommand{\transp}{\text{transp.}}
	\newcommand{\nor}{z_{J,H}}
	
	\newcommand{\hL}{\hat{L}}
	\newcommand{\hR}{\hat{R}}
	\newcommand{\hQ}{\hat{Q}}

	\title{A nonstabilizerness monotone from stabilizerness asymmetry}

\begin{abstract}
We introduce a nonstabilizerness monotone which we name basis-minimised stabilizerness asymmetry (BMSA). It is based on the notion of $G$-asymmetry, a measure of how much a certain state deviates from being symmetric with respect to a symmetry group $G$. For pure states, we show that the BMSA is a strong monotone for magic-state resource theory, while it can be extended to mixed states via the convex roof construction. We discuss its relation with other magic monotones, first showing that the BMSA coincides with the recently introduced basis-minimized measurement entropy, thereby establishing the strong monotonicity of the latter. Next, we provide inequalities between the BMSA and other nonstabilizerness measures known in the literature. Finally, we present numerical methods to compute the BMSA, highlighting its advantages and drawbacks compared to other nonstabilizerness measures in the context of pure many-body quantum states. We also discuss the importance of additivity and strong monotonicity for measures of nonstabilizerness in many-body physics, motivating the search for additional computable nonstabilizerness monotones.
\end{abstract}

\author{Poetri Sonya Tarabunga}
\affiliation{Technical University of Munich, TUM School of Natural Sciences, Physics Department, 85748 Garching, Germany}
\affiliation{Munich Center for Quantum Science and Technology (MCQST), Schellingstr. 4, 80799 M{\"u}nchen, Germany}
\affiliation{The Abdus Salam International Centre for Theoretical Physics (ICTP), Strada Costiera 11, 34151 Trieste, Italy}

\author{Martina Frau}
\affiliation{International School for Advanced Studies (SISSA), Via Bonomea 265, I-34136 Trieste, Italy}
\author{Tobias Haug}
\affiliation{Quantum Research Center, Technology Innovation Institute, Abu Dhabi, UAE}
\author{Emanuele Tirrito}
\affiliation{The Abdus Salam International Centre for Theoretical Physics (ICTP), Strada Costiera 11, 34151 Trieste, Italy}
\affiliation{Dipartimento di Fisica ``E. Pancini", Universit\`a di Napoli ``Federico II'', Monte S. Angelo, 80126 Napoli, Italy}
\author{Lorenzo Piroli}
\affiliation{Dipartimento di Fisica e Astronomia, Universit\`a di Bologna and INFN, Sezione di Bologna, via Irnerio 46, I-40126 Bologna, Italy}

\maketitle
	
	
	\section{Introduction}
 \label{sec:intro}

The theory of nonstabilizerness~\cite{bravyi2005universal,veitch2014resource}, also known as magic, is a well-established branch of quantum information theory. It is based on the notion of stabilizer states~\cite{gottesman1997stabilizer}, which are those generated by the class of Clifford unitary operators and which play a crucial role in quantum computation~\cite{nielsen2011quantum}. Stabilizer states have very special properties, which lie at the basis of most quantum-error correcting codes available today~\cite{kitaev2003fault,eastin2009restriction}.

Roughly speaking, nonstabilizerness is the degree to which a certain state or computation cannot be approximated by a stabilizer state or a Clifford unitary operator, respectively. As Clifford gates can be simulated efficiently by a classical computer~\cite{gottesman1998theory,gottesman1998heisenberg,aaronson2004improved}, a non-zero amount of nonstabilizerness is necessary for universal quantum computation~\cite{kitaev2003fault,eastin2009restriction}. This observation motivates the problem of quantifying nonstabilizerness of quantum states and operators, which is formalized within the framework of the resource theory of nonstabilizerness~\cite{veitch2014resource,chitambar2019quantum}.

The notion of nonstabilizerness was introduced almost twenty years ago~\cite{bravyi2005universal}, but the past few years have witnessed an increasing interest in its study in the context of many-body quantum physics~\cite{white2021conformal,ellison2021symmetry,sarkar2020characterization,sewell2022mana,oliviero2022magic,liu2022many,haug2022quantifying}. Arguably, a strong motivation comes from the recent progress in fault-tolerant quantum computing~\cite{bluvstein2024logical,acharya2024quantum}.
An important point is that Clifford operations can be efficiently implemented fault tolerantly~\cite{eastin2009restriction, preskill1998fault,shor1996fault}, often making them less demanding than other operations~\cite{litinski2019magic,orts2023efficient, orts2023efficient}. Therefore, quantifying nonstabilizerness of a physical state becomes relevant for estimating experimental resources needed to prepare quantum states on fault-tolerant quantum computers~\cite{howard2017application}, as well as a benchmark for the performance of the quantum computer~\cite{oliviero2022measuring,haug2022scalable,haug2023efficient,bluvstein2024logical}. From a more theoretical perspective, nonstabilizerness has been argued to provide further understanding on the structure of many-body states~\cite{gu2024doped,haug2023stabilizer,lami2023nonstabilizerness,tarabunga2024nonstabilizerness,oliviero2022magic,odavic2023complexity}, quantum phases of matter~\cite{white2021conformal,ellison2021symmetry,sarkar2020characterization,liu2022many,haug2022quantifying,tarabunga2023many,tarabunga2024critical,falcao2024nonstabilizerness} and pseudorandomness~\cite{gu2023little,haug2023pseudorandom,bansal2024pseudorandomdensitymatrices}. Further, nonstabilizerness characterises chaotic quantum dynamics~\cite{leone2022stabilizer,leone2021quantum,haferkamp2022random,haug2024probing,lopez2024exact,turkeshi2024magic,dowling2024magic}, the structure of entanglement~\cite{tirrito2023quantifying,gu2024magic,fux2023entanglement,frau2024nonstabilizerness,bejan2024dynamical,tarabunga2024magictransition} and quantum chaos~\cite{lami2024quantum,turkeshi2023measuring,leone2021quantum,leone2023nonstabilizerness,leone2023phase,garcia2023resource}.

Within the standard resource theory~\cite{chitambar2019quantum}, a proper measure of nonstabilizerness should satisfy a minimal set of conditions which define \emph{monotones} with respect to Clifford operations. In addition, from the point of view of many-body physics, it is also very natural to require additional properties such as \emph{additivity} (at least asymptotically in the system size) and \emph{strong monotonicity}~\cite{chitambar2019quantum}, which guarantees that nonstabilizerness does not increase, on average, under local measurements.

Unfortunately, most of the measures of nonstabilizerness known from the standard resource theory are hard to compute in the many-body setting. Recently, a breakthrough came with the introduction of the so-called stabilizer R\'enyi entropies~\cite{leone2022stabilizer}, which were shown to be monotones for R\'enyi index $\alpha\geq 2$~\cite{haug2023stabilizer, leone2024stabilizer}. Differently from several known monotones, they do not require minimization procedures, making their computation feasible in a variety of physically interesting situations~\cite{haug2022quantifying,tarabunga2023many,lami2023nonstabilizerness,haug2023efficient,tarabunga2024nonstabilizerness,tarabunga2023magic,liu2024nonequilibrium}. Yet, while being additive, the SRE is not a strong monotone~\cite{haug2023stabilizer}. At the same time, the closely related linear stabilizer entropy is a strong monotone~\cite{leone2024stabilizer}, but it is not additive. Therefore, it is still important to search for novel nonstabilizerness monotones which are both additive and computable, at least in restricted classes of states. 

Motivated by this picture, we introduce a measure of nonstabilizerness which we name basis-minimised stabilizerness asymmetry (BMSA). It is based on the notion of $G$-asymmetry~\cite{vaccaro2008tradeoff,gour2009measuring}, a measure of how much a certain state deviates from being symmetric with respect to a symmetry group $G$. For pure states, we show that the BMSA is a strong monotone for magic-state resource theory, which can be extended to mixed states via a standard convex roof construction. Next, we discuss in detail its relation with other known magic monotones. In particular, we first show that the BMSA coincides with a recently introduced basis-minimized measurement entropy~\cite{niroula2023phase}, thereby establishing the strong monotonicity of the latter. We then provide inequalities between the BMSA and other known nonstabilizerness measures, highlighting the advantages and drawbacks of the BMSA. Finally, we discuss computational methods to evaluate the BMSA and present numerical results in explicit cases. 

The rest of this work is organized as follows. We begin in Sec.~\ref{sec:resource_theory}, where we recall the resource theory of nonstabilizerness and survey known measures. Here, we also discuss the importance of additivity and strong monotonicity in the many-body setting. Next, in Sec.~\ref{sec:BMSA} we introduce the BSMA and study its properties, proving in particular its strong monotonicity. Its relation with other monotones is discussed in Sec.~\ref{sec:relations}, while our numerical results are reported in Sec.~\ref{sec:numerical_results}. Finally, Sec.~\ref{sec:outlook} contains our conclusions, while the most technical aspects of our work are consigned to several appendices.

\section{Nonstabilizerness resource theory}
\label{sec:resource_theory}

\subsection{Preliminaries}

We begin by introducing some notation and recalling basic definitions. We consider a set of $N$ qubits associated with the Hilbert space $\mathcal{H}_N=\bigotimes_{j=1}^N\mathcal{H}_j$, where $\mathcal{H}_j\simeq \mathbb{C}^2$. We denote by $\sigma^\alpha_j$ the Pauli matrices acting on qubit $j$, where $\alpha=0,1,2,3$ and $\sigma^0=\openone$. In addition, we denote by $\ket{0}_j$, $\ket{1}_j$ the eigenbasis of the Pauli operator $\sigma^z_j$, and we call $\{\ket{0}_j, \ket{1}_j\}$ the local computational basis. 

Next, we denote by $\tilde{P}_N$ the Pauli group consisting of all $N$-qubit Pauli strings with phases $\pm 1$, $\pm i$. The Clifford group is the set of unitary operators $U$ such that $U WU^\dagger\in \tilde{\mathcal{P}}_N$ for all $W\in \tilde{\mathcal{P}}_N$, while (pure) stabilizer states are the states generated by applying elements of the Clifford group to the reference state $\ket{0}^{\otimes N}$. In the following, we will also denote by $\mathcal{P}_N$ the set of Pauli strings with trivial phase $+1$.

Given a pure stabilizer state $|s\rangle$, we introduce the unsigned stabilizer group $G(|s\rangle)$ which is the group of the Pauli operators that stabilize $|s\rangle$ up to a sign. Namely,
\begin{equation}
G(| s \rangle) = \left\lbrace P \in \mathcal{P}_N : P | s \rangle = \pm | s \rangle   \right\rbrace \, .
\end{equation} 
It follows from the standard theory of stabilizer states that $G(|s\rangle)$ is generated by $N$ mutually commuting Pauli operators~\cite{nielsen2011quantum}. Conversely, it can be seen that any state which is stabilized by $N$ mutually commuting Pauli operators can be prepared by Clifford unitary operators. 

For a given stabilizer group $G$, the set of stabilizer states $\{\ket{s} : G(\ket{s}) = G \}$ forms an orthonormal basis, commonly referred to as stabilizer basis. This basis can be constructed by introducing destabilizers $d_i \in \mathcal{P}_N, i=0,1, \dots, 2^N-1$ \cite{aaronson2004improved} associated to a stabilizer state $\ket{s}$, such that the $2^N$ stabilizer states above can be written as $\{d_i^{s}\ket{s}\}_i$.

\subsection{Stabilizer protocols and magic monotones}

As mentioned, a precise definition of nonstabilizerness can be given within the mathematical framework of quantum resource theory. In the following, we only review the aspects which are directly relevant to our work, referring to the literature for a thorough introduction~\cite{veitch2014resource,chitambar2019quantum}

The logic of a quantum resource theory is to indirectly define a certain quantity (the resource) by specifying the free operations, namely the operations which do not increase that quantity. In the case of nonstabilizerness the choice of free operations is not unique~\cite{liu2022many,heimendahl2022axiomatic}, but a minimal set is given by so-called stabilizer protocols. Stabilizer protocols are quantum channels obtained by a sequence of the following elementary operations:
\begin{enumerate}
	\item Clifford unitary operations;\label{eq:clifford_unitary_condition}
	\item Tensor product with stabilizer states, $\rho\mapsto \rho \otimes \sigma$, where $\sigma$ is a stabilizer state;
	\item Measurements in the computational basis;
	\item Discarding of qubits; 
	\item The above operations conditioned on the outcomes of measurements.\label{eq:conditioned_operations}
\end{enumerate}
It is worth noticing that any circuit consisting of these operations can be efficiently simulated on a classical computer~\cite{gottesman1998theory,gottesman1998heisenberg,aaronson2004improved}. 

Given a function $\mathcal{M}$ defined on the set of states, we say that $\mathcal{M}$ is a good measure of nonstabilizerness if it is a \emph{monotone}. That is, $\mathcal{M}$ does not increase under any stabilizer protocol $\mathcal{E}$, 
\begin{equation}
	\label{eq:monotonicity_condition}
\mathcal{M}\left[\mathcal{E}(\rho)\right]\leq \mathcal{M}(\rho)\,.
\end{equation}
Monotonicity with respect to stabilizer protocols is a natural minimal requirement for a proper measure of nonstabilizerness. However, additional properties can be required, which can be suitable depending on the context. We will require in particular that $\mathcal{M}(\ket{\psi})\geq 0$ and that $\mathcal{M}(\ket{\psi})=0$ if and only if $\ket{\psi}$ is a stabilizer state. In addition, we now discuss additivity and strong monotonicity. 

First, we say that $\mathcal{M}$ is additive if
\begin{equation}\label{eq:monotonicity}
    \mathcal{M}[\rho\otimes \sigma]=\mathcal{M}[\rho]+\mathcal{M}[\sigma]\,.
\end{equation}
It is also useful to introduce a weaker condition, known as weak-additivity, which states
\begin{equation}
    \mathcal{M}[\rho^{\otimes N}]= N\mathcal{M}[\rho]\,.
\end{equation}
Next, we define strong monotones. Let $\Lambda$ be a subset containing $m<N$ qubits, and suppose we perform a computational-basis measurement on each qubit in $\Lambda$. We denote by ${\lambda}=\{\lambda_1,\ldots, \lambda_m\}$ the set of outcomes ($\lambda_j=0,1$), by $\rho_{\bf \lambda}$ the post-measurement state, and by $p_{\lambda}={\rm Tr}\left[ \Pi_\lambda \rho \Pi_\lambda \right]$ the corresponding probability, where 
\begin{equation}
\Pi_\lambda= |\lambda\rangle \langle \lambda| \otimes \openone_{N\setminus m}\,.
\end{equation}
The strong-monotonicity condition then reads
\begin{equation}
	\mathcal{M}(\rho)\geq \sum_{\lambda}p_\lambda \mathcal{M}[\rho_\lambda]\,.
	\label{eq:strong_monotonicity}
\end{equation} 
That is, $\mathcal{M}$ does not increase, on average, under computational-basis measurements. Note that Eq.~\eqref{eq:strong_monotonicity}, together with conditions~\ref{eq:clifford_unitary_condition}--\ref{eq:conditioned_operations}, also implies that $\mathcal{M}$ does not increase, on average, if the computational-basis measurements are followed by conditioned Pauli operations.

\subsection{Nonstabilizerness measures}
\label{sec:nonstabilizerness_measures}

We now survey some known nonstabilizerness measures and review some of their properties. In the following, we will mostly focus on pure states $\ket{\psi}$, as this is the main focus of our work.

First, we introduce the min-relative entropy of magic~\cite{bravyi2019simulation,liu2022many}, defined by
\begin{equation}
	D_\text{min}(\ket{\psi})=-\log\left[F_\text{STAB}(\ket{\psi})\right]\,,
\end{equation}
where
\begin{equation}
F_\text{STAB}(\ket{\psi})=\text{max}_{\ket{s}}\{\langle\psi|s\rangle^2\}
\end{equation}
is the stabilizer fidelity. Here, the maximum is taken over the set of stabilizer states $\ket{s}$. That is, $D_\text{min}$ measures the distance between $\ket{\psi}$ and its nearest stabilizer state. The min-relative entropy is a monotone, although not a strong monotone~\cite{haug2023stabilizer}. In addition, $D_{\rm min}$ is sub-additive and, in fact, weakly additive, so that $D_\text{min}(\ket{\psi}^{\otimes N})=N	D_\text{min}(\ket{\psi})$~\cite{bravyi2019simulation}. The min-relative entropy is closely related to the relative entropy of nonstabilizerness~\cite{veitch2014resource} 
\begin{equation}
\label{eq:relative_entropy_of_magic}
    \mathrm{r}_{\mathcal{M}}(\ket{\psi}) \equiv \min _{\sigma \in \operatorname{STAB}\left(\mathcal{H}_{\mathrm{d}}\right)} S(\ket{\psi}\bra{\psi} \| \sigma)\,,
\end{equation}
which is sub-additive and a strong monotone. Here, we introduced the quantum relative entropy
\begin{equation}
\label{eq:relative_entropy}
    S(\rho \| \sigma)=
    \begin{cases}
    +\infty,& {\rm if } \operatorname{supp}(\rho) \cap \operatorname{ker}(\sigma) \neq\{0\}\,,\\
    \operatorname{Tr} \rho(\log \rho-\log \sigma),  & {\rm otherwise}\,.
    \end{cases}
\end{equation}

Next, we define the so-called log-free robustness of magic~\cite{howard2017application,liu2022many,howard2017application}
\begin{equation}
	\text{LR}(\ket{\psi})=\log\left[\text{min}_x\left\{ \sum_i \vert x_i \vert: \ket{\psi}\bra{\psi}=\sum_i x_i \sigma_i\right\}\right]\,,
\end{equation}
where $\mathcal{S}=\{\sigma_i \}$ is the set of pure $N$-qubit stabilizer states. By taking the exponent, we obtain the robustness of magic (RoM) 
\begin{equation}\label{eq:robustness}
R(\ket{\psi})=e^{\text{LR}(\ket{\psi})}\,.
\end{equation}
The RoM is also a strong monotone and sub-additive.

Finally, we recall the recently introduced stabilizer Rényi entropies (SREs)~\cite{leone2022stabilizer}, which read
\begin{equation}\label{eq:SRE}
	M_{n}(|\psi\rangle)=(1-n)^{-1} \log \sum_{P \in \mathcal{P}_{N}} \frac{\langle\psi|P|\psi\rangle^{2n}}{2^N}\,.
\end{equation}
The SREs are nonstabilizerness monotones for integer $n\geq 2$~\cite{leone2024stabilizer}, they are additive~\cite{leone2022stabilizer} but not strong monotones~\cite{haug2023stabilizer}. They are closely related to the linear
stabilizer entropies
\begin{equation}
 M^{\rm lin}_n= \frac{1}{2^N}  \sum_{P \in \mathcal{P}_{N}} \frac{\langle\psi|P|\psi\rangle^{2n}}{2^N}\,,
\end{equation}
so that 
\begin{equation}
    M_{n}(|\psi\rangle)=\frac{1}{1-n}\log\left(1-M^{\rm lin}_n\right)\,.
\end{equation}
The linear stabilizer entropies are monotones and strong monotones for $n\geq 2$, but they are not additive.

\subsection{On additivity and strong monotonicity}

As mentioned, the SRE entropies provide a very powerful tool to estimate nonstabilizerness, as their evaluation does not involve minimization over large sets of states. In addition, while they are not strong monotones, the closely related linear R\'enyi entropies are, so that one could wonder about the necessity of additional measures of nonstabilizerness.

In this section, we elaborate on the conditions of additivity and strong monotonicity, discussing their importance in the context of entangled states of many qubits. As the linear stabilizer entropies and the SREs do not display the first and second conditions, respectively, we argue that it is still of interest to look for novel computable nonstabilizerness measures that display both. 

The following discussion is naturally connected with several topics in general resource theory, such as state convertibility and the difference between deterministic and stochastic transformations~\cite{chitambar2019quantum}. However, we do not enter into any technical aspect, but content ourselves with presenting a discussion based on intuitive arguments.

We first note that additivity, or at least its weak version, is a very natural property to be required when considering large collections of qubits. Intuitively, if we view nonstabilizerness as a resource to be depleted to prepare a given state, the amount of it encoded in the tensor product $\ket{\psi}^{\otimes N}$ should increase roughly linearly in $N$ (for large $N$), since we may imagine to simultaneously prepare blocks $\ket{\psi}^{\otimes N/k}$, for any given integer $k$. 

The strong-subadditivity is also a natural requirement when dealing with states of interest for many-body physics. Without it,  a local measurement may increase arbitrarily the nonstabilizerness of a state with finite probability. This would give rise to a measure which is not robust with respect to local perturbations, which one can not typically rule out in physical contexts.

To be concrete, consider the family of states~\cite{haug2023stabilizer}  
\begin{equation}\label{eq:psi_eps_state}
	\ket{\psi_N^\varepsilon}=\frac{1}{\sqrt{\mathcal{N}_\varepsilon}}\left[\ket{0}^{\otimes N}+ \varepsilon\ket{\chi}^{\otimes N}\right]\,,
\end{equation}
where $\varepsilon$ is a small parameter, $\ket{\chi}$ is the magic state~\cite{bravyi2005universal}
\begin{align}\label{eq:t_state}
	\ket{\chi}&=e^{-i(\pi / 4)}\cos \beta|0\rangle+ \sin \beta|1\rangle\,,
\end{align}
with $\cos(2\beta)=1/\sqrt{3}$, and 
\begin{equation}\label{eq:norm}
	\mathcal{N}_\varepsilon=1+\varepsilon^2+2\varepsilon \cos( \beta)^N[\cos(N \pi / 4)]\,.
\end{equation}
As $N\to\infty$, one expects that $\ket{\psi_N^\varepsilon}$ should be an expensive state from the resource theory point of view, when $N$ is very large. The reason is that a computational-basis measurement on a single qubit, say the first one, has a finite probability (independent of $N$) to transform the state into 
\begin{equation}
    \ket{\psi_N^\prime}=\ket{1}\otimes \ket{\chi}^{\otimes (N-1)}\,,
\end{equation}
which has an arbitrary amount of resource as $N\to \infty$.

This intuition is guaranteed to be captured by any measure $\mathcal{M}$ of nonstabilizerness which is both additive and a strong monotone. To see this, let us denote by $p_1$ the probability to obtain the outcome $1$ when measuring the first qubit. Using $p_1\simeq \frac{\varepsilon^2}{1+\varepsilon^2}\sin\beta^2$~\cite{haug2023stabilizer}, we have
\begin{align}
    \mathcal{M}(\ket{\psi_N^\varepsilon})&\geq p_1 \mathcal{M}(\ket{\psi_N^\prime})
    \nonumber\\
    &=p_1 \mathcal{M}(\ket{0})+(N-1)p_1 \mathcal{M}(\ket{\chi})\nonumber\\
    &\simeq  (N-1)\frac{\varepsilon^2}{1+\varepsilon^2}\sin\beta^2\mathcal{M}(\ket{\chi})\,,
\end{align}
so that $\mathcal{M}(\ket{\psi_N^\varepsilon})$ grows linearly in $N$.

Conversely, if a monotone $\mathcal{M}$ fails to be either additive or a strong monotone, then it would not necessarily allow one to appreciate that $\ket{\psi_N^\varepsilon}$ is a highly rich state from the point of view of nonstabilizerness. For instance, one can show that the SREs satisfy~\cite{haug2023stabilizer}
\begin{equation}\label{eq:const_bound}
	M_n(\ket{\psi_N^\varepsilon})<c_n(\varepsilon)\,, \qquad (n>1)\,,
\end{equation}
where $c_n(\varepsilon)$ is a constant which depends on $n$ and $\varepsilon$, but not on $N$. Therefore, according the SREs, $\ket{\psi^\varepsilon_N}$ apparently corresponds to a modest amount of nonstabilizerness, contrary to our intuitive expectation. At the same time, the linear R\'enyi entropies also appear not be adequate to capture the nonstabilizerness structure of $\ket{\psi^\varepsilon_N}$, since $M^{\rm lin}_n(\ket{\psi^\varepsilon_N})<1$. Ultimately, this is due to the fact that $M^{\rm lin}_n$ are not additive.

Finally, we mention that weak additivity is generally a very strong requirement. However, given a sub-additive measure $\mathcal{M}$ satisfying $\mathcal{M}(\rho\otimes \sigma)\leq \mathcal{M}(\rho)+\mathcal{M}(\sigma)$, one can always construct a regularized version~\cite{chitambar2019quantum}
\begin{equation}
\mathcal{M}^{\infty}(\rho)=\lim _{N \rightarrow \infty} \frac{1}{N} f\left(\rho^{\otimes N}\right)\,,
\end{equation}
which can be easily seen to be weakly additive. Unfortunately, this construction trivializes for the linear stabilizer entropies. Indeed, since $M^{\rm lin}_n(\ket{\psi})\leq 1$ for any state $\ket{\psi}$, we have
\begin{equation}
\lim _{N \rightarrow \infty} \frac{1}{N} M^{\rm lin}_n\left(\ket{\psi}^{\otimes N}\right)=0\,,
\end{equation}
so that the regularized version of $M^{\rm lin}_n$ is always zero. 

\section{The basis-minimised stabilizerness asymmetry}
\label{sec:BMSA}

In this section we introduce the BMSA and discuss its monotonicity properties.

\subsection{$G$-asymmetry for groups generated by Pauli operators}

As a preliminary step, we introduce the $G$-asymmetry, which is a quantity that measures the asymmetry of a state $\rho$ with respect to a given group $G$ \cite{vaccaro2008tradeoff,gour2009measuring}. While originally introduced in the quantum-information literature, the $G$-asymmetry has recently attracted significant attention in the context of quantum many-body physics, especially in connection with the study of symmetry-breaking phenomena~\cite{marvian2014extending,ares2023entanglement,joshi2024observing}. 

Formally, the $G$-asymmetry is a monotone in the resource theory of $G$-frameness~\cite{chitambar2019quantum}. In this framework, the resource is given by symmetric states and free operations are the ones which are symmetric with respect to $G$. In the following, we will not need to delve into the details of the resource theory of $G$-frameness. Therefore, we will simply give the definition of $G$-asymmetry and refer the interested reader to Refs.~\cite{vaccaro2008tradeoff,gour2009measuring, chitambar2019quantum} for a thorough introduction. 

Given a unitary representation of the discrete group $G$ on the Hilbert space $\mathcal{H}$, we denote by $U_g$ the unitary operator associated with $g\in G$. Introducing the  symmetrized state 
\begin{equation}
    \mathcal{G}_G(\rho) = \frac{1}{|G|} \sum_{g \in G} U_g \rho U_g^{\dagger},
\end{equation}
the $G$-asymmetry is defined as \cite{vaccaro2008tradeoff}
\begin{equation} \label{eq:ent_asymmetry}
\begin{split}
    A_G (\rho) &= S(\rho \Vert \mathcal{G}_G(\rho))  \\
    &= S(\mathcal{G}_G(\rho)) - S(\rho),
\end{split}
\end{equation}
where the relative entropy $S(\rho \Vert \sigma)$ is given in Eq.~\eqref{eq:relative_entropy}. 
It is immediate to see that $\mathcal{G}_G(\rho)$  is $G$-invariant. In fact, it is known that $\mathcal{G}_G(\rho)$ is the $G$-invariant state which is closest to $\rho$ \cite{gour2009measuring}. In other words, the $G$-asymmetry is the relative entropy between a state and the nearest $G$-invariant state. In the following, we will also consider the R\'enyi generalizations defined as
\begin{equation} \label{eq:renyi_asymmetry}
    A_{G,\alpha}(\rho) = S_\alpha(\mathcal{G}_G(\rho)) - S_\alpha(\rho),
\end{equation}
where $S_\alpha$ is the R\'enyi entropy.

In this work, we will take $G$ to be a group generated by mutually commuting Pauli operators $P_1, P_2, ..., P_k \in \mathcal{P}_N$. In this case, the symmetrized state $\mathcal{G}_G(\rho)$ takes a simple form: denoting by $G^{\perp}$ the subset of $\mathcal{P}_N$ which commute with all $g\in G$, we obtain
\begin{equation} \label{eq:rho_sym}
    \mathcal{G}_G(\rho) = \frac{1}{2^N}  \sum_{P\in G^{\perp}} \Tr[\rho P] P ,
\end{equation}
see Appendix~\ref{sec:properties_BMSA} for a proof. Namely, the Pauli representation of $\mathcal{G}_G(\rho)$ is obtained from that of $\rho$ by keeping only the Pauli strings in $G^{\perp}$ and discarding the other terms. Exploiting this result, we see that the $G$-asymmetry for $\alpha=2$ takes the nice form
\begin{equation} \label{eq:s2_pauli}
    A_{G,2}(\rho) = -\operatorname{log}_2 \frac{\sum_{P\in G^{\perp}} |\Tr[\rho P]|^2 }{ \sum_{P\in \mathcal{P}_N} |\Tr[\rho P]|^2}.
\end{equation}
The ratio in the right-hand side can be interpreted as the probability that a Pauli string sampled over the Pauli set $\mathcal{P}_{N}$ with probability $\propto |\Tr[\rho P]|^2/{2^N}$ is an element of $G^{\perp}$, \emph{i.e.} it commutes with all Pauli strings in $G$.

\subsection{The basis-minimised stabilizerness asymmetry}

We are finally in a position to introduce the BMSA. Let $| \psi \rangle$ be a pure state, with density matrix $\rho=| \psi \rangle  \langle \psi |$ and take $\ket{s}$ to be a pure stabilizer state. Recalling that $G(\ket{s})$ denotes the stabilizer group of $\ket{s}$, we consider the symmetrized state with respect to $G(|s \rangle)$, namely
\begin{align}\label{eq:intermediate_1}
    \mathcal{G}_{G(| s \rangle)}(\rho) &= \frac{1}{2^N} \sum_{P \in G(| s \rangle)} P \rho P^{-1}\nonumber\\
    &= \frac{1}{2^N} \sum_{P \in G(| s \rangle)} \Tr[\rho P] P,
\end{align}
where the second equality follows from~\eqref{eq:rho_sym}. From Eq.~\eqref{eq:intermediate_1}, it is immediate to see that the $G$-asymmetry~\eqref{eq:ent_asymmetry} is vanishing if and only if $\rho$ is stabilized by $G(\ket{s})$, \emph{i.e.} $P\rho P=\rho$ for all $P\in G(\ket{s})$. In turn, this implies that $\ket{\psi}$ is also a stabilizer state, whose stabilizer group generators coincide with those of $\ket{s}$, up to a phase. 

This observation leads to a natural candidate for a measure of nonstabilizerness, by minimizing the $G$-asymmetry over the set of all possible pure stabilizer states $\text{PSTAB}_N$. Formally, we define the BMSA as 
\begin{equation} \label{eq:smin}
\begin{split}
    \mathcal{A}_1(\ket{\psi}) &= \min_{|s \rangle \in \text{PSTAB}_N} A_{G(\ket{s})}(\rho) \\ 
    &= \min_{|s \rangle \in \text{PSTAB}_N}  S(\mathcal{G}_{G(\ket{s})}(\rho)),
\end{split}
\end{equation}
together with its R\'enyi version
\begin{equation} \label{eq:smin_renyi}
\begin{split}
    \mathcal{A}_{\alpha}(\ket{\psi}) &= \min_{|s \rangle \in \text{PSTAB}_N}  S_\alpha(\mathcal{G}_{G(\ket{s})}(\rho)).
\end{split}
\end{equation}
In the following, we will also write $   \mathcal{A}$ instead of $\mathcal{A}_1$ when this does not generate confusion. The rest of this section is devoted to discussing the properties of the BMSA.

First, it follows from elementary considerations that $\mathcal{A}_{\alpha}(\ket{\psi})$ has the following properties:
\begin{enumerate}
    \item Faithfulness: $\mathcal{A}_{\alpha}(\ket{\psi}) =0$ iff $\ket{\psi} \in \text{PSTAB}_N$
    \item Invariance under Clifford unitaries: $\mathcal{A}_{\alpha}(C\ket{\psi})=\mathcal{A}_{\alpha}(\ket{\psi})$ for all Clifford gates $C \in \mathcal{C}_N$
    \item Subadditivity: 
   \begin{equation}
       \mathcal{A}_{\alpha}(\ket{\psi}_{A} \otimes\ket{\phi}_{B}) \leq \mathcal{A}_{\alpha}(\ket{\psi}_{A})+\mathcal{A}_{\alpha}(\ket{\phi}_{B} )
   \end{equation} 
    \item The BMSA is lower and upper bounded by
    \begin{equation}
        0\leq \mathcal{A}_\alpha(\ket{\psi}) \leq N \log 2\,.
    \end{equation}
\end{enumerate}
As we will discuss later, the sub-additivity inequality is generally strict. However, we will provide numerical evidence that $\mathcal{A}_{\alpha}$ increases approximately linearly in $N$ for tensor products $\ket{\phi}^{\otimes N}$, for any $\ket{\phi}$.

Next, when $\alpha=1$, we can prove the anticipated stronger result: $\mathcal{A}_1$ is a strong monotone (and thus also a monotone) for the resource theory of nonstabilizerness. The proof of this statement is technical and relies on known properties of the $G$-asymmetry. We report it in Appendix~\ref{sec:properties_BMSA}. 
We now proceed to discuss whether R\'enyi BMSAs are strong monotones for $\alpha\neq1$. 
By bounding with $D_\mathrm{min}$ (which we will prove below) and the fact that $D_\mathrm{min}$ is not a strong monotone \cite{haug2023stabilizer}, we can show that the R\'enyi BMSA are not strong monotones for $\alpha \geq 2$. It remains an open question whether the R\'enyi BMSA for $1 < \alpha < 2$ satisfy the strong monotonicity property. For $\alpha<1$, we can show that $\alpha=1/2$ is also a strong monotone, by connecting to known properties of coherence measures (see Sec. \ref{sec:coherence}). We leave the question of strong monotonicity for $1<\alpha<2$, and $\alpha<1, \alpha \neq 1/2$ as an open problem. In Appendix~\ref{sec:renyi_BMSA} we discuss the properties of the R\'enyi-$2$ BMSA in more details.

Importantly, computing the BMSA involves a minimization procedure, so it is harder than the stabilizer R\'enyi entropies. Still, the minimization in  Eq. \eqref{eq:smin} is only over pure stabilizer states, making it significantly more efficient to be computed with respect to previously known strong monotones such as the relative entropy of magic~\eqref{eq:relative_entropy_of_magic} or the RoM~\eqref{eq:robustness}. Note in particular that due to the freedom in the signs of the stabilizer generators, the minimization in Eq. \eqref{eq:smin} is only over $|\text{PSTAB}_N|/2^N$ stabilizer basis. In fact, in Sec.~\ref{sec:numerical_results} we will discuss methods to evaluate it in the many-body setting, providing explicit numerical data in several cases up to $N\sim 10$. 

Before leaving this section, it is important to mention that the BMSA has been so far defined only for pure states. However, following a standard \emph{convex-roof} construction, we can extend the BMSA to mixed states as
\begin{equation} \label{eq:smin_mixed}
    \mathcal{A}_1(\rho) = \min_{\{p_i, \rho_i \}}  \sum_i p_i  \mathcal{A}_1(\rho_i) ,
\end{equation}
where the minimum is taken over all possible convex decompositions of $\rho$: $\rho = \sum_i p_i \rho_i$, with $\rho_i$ pure. It can be shown that this extension has the same properties as the pure-state version. However, while providing a valid monotone, Eq.~\eqref{eq:smin_mixed} remains hard to be computed in practice.

\section{Relation with other monotones}
\label{sec:relations}

\subsection{Relation with magic monotones}

In this section we discuss the relation between the BMSA and other monotones known in the literature.

\subsubsection{Relation to basis-minimized measurement entropy} \label{sec:bmme}

As anticipated, we first show that the BSMA coincides with the basis-minimized measurement entropy. This quantity has been recently introduced in Ref.~\cite{niroula2023phase}, which also reported its experimental detection. It is defined. by
\begin{equation} \label{eq:bmme}
    z^*(| \psi \rangle) = \min_{C \in \mathcal{C}_N} S^{\mathrm{part}}_1(C^\dagger | \psi \rangle),
\end{equation}
where $S^{\mathrm{part}}_q (| \psi \rangle)$ are the participation entropies (PEs)
\begin{equation}
    S^{\mathrm{part}}_q (| \psi \rangle) = \frac{1}{1-q} \log \sum_\sigma \vert \langle \sigma | \psi \rangle\vert ^{2q},
\end{equation}
and $\mathcal{C}_N$ is the Clifford group for $N$ qubits. 
We can also consider the R\'enyi generalization of Eq. \eqref{eq:bmme}
\begin{equation} 
    z_q^*(| \psi \rangle) = \min_{C \in \mathcal{C}_N} S^{\mathrm{part}}_q(C^\dagger | \psi \rangle),
\end{equation}
which are also measures of magic. 

We now show that for any $\alpha \geq 0$, we have
\begin{equation} \label{eq:equal_bmme}
    \mathcal{A}_\alpha(\ket{\psi}) = z^*_\alpha(| \psi \rangle).
\end{equation}
To see this, let $|s \rangle=|0 \rangle^{\otimes N}$. We can write
\begin{equation}
    \mathcal{G}_{G(|0 \rangle^{\otimes N})}(\rho) = \frac{1}{2^N} \sum_{P\in \mathcal{P}_z} \Tr[\rho P] P,
\end{equation}
where $\mathcal{P}_z$ is the group of Pauli operators which only contain $I$ and $Z$ operators. One can see that $\mathcal{G}_{G(|0 \rangle^{\otimes N})}(\rho)$ is exactly the diagonal part of $\rho$. Therefore, the R\'enyi entropy $S_\alpha(\mathcal{G}_{G(|0 \rangle^{\otimes N})}(\rho))$ is simply the Shannon entropy of the diagonal elements of $\rho$, which is exactly the participation entropies $S^{\mathrm{part}}_\alpha(\ket{\psi})$. Moreover, for a generic stabilizer state $\ket{s} \neq |0 \rangle^{\otimes N}$, we note that $\mathcal{G}_{G(\ket{S})}(\rho)$ can be brought to a diagonal form by a Clifford unitary (see also Sec. \ref{sec:coherence}). This immediately implies Eq. \eqref{eq:equal_bmme}. 

Importantly, Ref.~\cite{niroula2023phase} only established elementary properties of the basis minimised measurement entropy. Now, the identification in Eq.~\eqref{eq:equal_bmme} allows us to establish that it is a strong monotone of nonstabilizerness. This fact is particularly interesting, given the experimental relevance of the basis minimised measurement entropy~\cite{niroula2023phase}.

\subsubsection{Relation to relative entropy of magic}
Next, we show that the BMSA is related to the relative entropy of magic~\eqref{eq:relative_entropy} by the inequality
\begin{equation} \label{eq:relation_rem}
    r_{\mathcal{M}}(\rho) \leq \mathcal{A}_1(\rho) .
\end{equation}
To see this, we will first show that $\mathcal{G}_{G(\ket{s})}(\rho) \in {\rm STAB}$. We recall that $\mathcal{G}_{G(\ket{s})}(\rho)$ can be brought to a diagonal form by a Clifford unitary. Moreover, its diagonal elements take values between 0 and 1. This implies that the diagonal matrix can be written as a convex combination of the $2^N$ stabilizer states with the same unsigned stabilizer group $G(\ket{s})$. 

Now, let 
\begin{equation}
\mathcal{A}_1(\rho) = \sum_i p_i S(\rho_i \Vert \mathcal{G}_{G(\ket{s_i})}(\rho))\,,   
\end{equation}
 where $\rho = \sum_i p_i \rho_i$. Here, $\rho_i$ is pure and achievies the minimum in~\eqref{eq:smin_mixed}, while $\ket{s_i}$ are the stabilizer states achieving the minimum in the definition of the BMSA for pure states. We have
\begin{align}
    \mathcal{A}(\rho) &= \sum_i p_i S(\rho_i \Vert \mathcal{G}_{G(\ket{s_i})}(\rho)) \nonumber\\
    &\geq S\left(\sum_i p_i \rho_i \Vert \sum_i p_i \mathcal{G}_{G(\ket{s_i})} (\rho)\right) \geq r_{\mathcal{M}}(\rho).
\end{align}
The second inequality is due to the joint convexity of relative entropy.

\subsubsection{Relation to SRE}
It can be shown that $\mathcal{A}_{\alpha}(\ket{\psi})$ is related to the SRE by the following inequality: 
\begin{equation} \label{eq:relation_sre}
    M_{n}(|\psi \rangle) \leq \frac{n}{n-1} \mathcal{A}_{\alpha}(\ket{\psi}) \quad (n >1, \alpha \leq 2).
\end{equation}
The proof is similar to the proof for a similar inequality between the SRE and min-relative entropy of magic \cite{haug2023stabilizer}.

We further find the following bound between SRE and BMSA
\begin{equation}\label{eq:second_inequality}
    2\mathcal{A}_\alpha(\ket{\psi})\geq M_n(\ket{\psi}) \quad (n\geq1/2, \alpha\leq1/2)
\end{equation}
for any $\alpha\leq1/2$ and $n\geq1/2$. Note however that the bound is weaker than Eq.~\eqref{eq:relation_sre} for $n>2$. Eq.~\eqref{eq:second_inequality} is proved in Appendix~\ref{sec:properties_BMSA}.

Finally, using the recent result of Ref.~\cite{arunachalam2024notepolynomialtimetoleranttesting}, one can show there exist a constant $C>1$ (i.e. independent of $N$) such that
\begin{equation}
    \mathcal{A}_\alpha(\ket{\psi})\leq 2 C M_n(\ket{\psi})  \quad (n\leq 2, \alpha\geq 2 )\,.
\end{equation}
This allows us to give an explicit upper and lower bound on $\mathcal{A}_2$ via the SRE
\begin{equation}\label{eq:to_prove_1}
\frac{1}{2}M_2(\ket{\psi}) \leq \mathcal{A}_2(\ket{\psi})\leq 2 C M_2(\ket{\psi})\,.
\end{equation}
Note that this also solves the open problem of Ref.~\cite{haug2023stabilizer,haug2023efficient}, showing that the SRE is indeed a qubit-number independent upper bound for the min-relative entropy of magic $D_\text{min}$ via
\begin{equation}\label{eq:to_prove_2}
    \frac{m-1}{2m} M_m(\ket{\psi}) \leq D_\text{min}(\ket{\psi})\leq CM_n(\ket{\psi})
\end{equation}
for $(n\leq2, m\geq1 )$, where for completeness we also added the previously proven lower bound~\cite{haug2023stabilizer}. We prove Eq.~\eqref{eq:to_prove_1} in Appendix~\ref{sec:proof_inequality}.

\subsubsection{Relation to stabilizer nullity}
Next, it can be shown that $\mathcal{A}_{\alpha}(\ket{\psi})$ is related to the stabilizer nullity $\nu(|\psi \rangle)$ by the following inequality:
\begin{equation} \label{eq:relation_nullity}
    \mathcal{A}_{\alpha}(\ket{\psi})/\ln 2 \leq \nu(|\psi \rangle) .
\end{equation}
To see this, we use the fact that, for a state with stabilizer nullity $\nu$, there exists a Clifford unitary $C$ such that $C\ket{\psi} = \ket{0}^{N-\nu} \ket{\phi}$, where $\ket{\phi}$ is a pure state of $\nu$ qubits~\cite{beverland2020lower}. 
Therefore,
\begin{equation}
    \mathcal{A}_\alpha(\ket{\psi}) = \mathcal{A}_\alpha(\ket{0}^{N-\nu} \ket{\phi}) \leq  S^{\mathrm{part}}_\alpha(\ket{\phi}) \leq  \nu \ln 2.
\end{equation}

\subsubsection{Relation to min-relative entropy of magic}
Furthermore, it can be shown that the min-relative entropy of magic $D_{\mathrm{min}}$ provides both lower and upper bound to $\mathcal{A}_{\alpha}(\ket{\psi})$:
\begin{equation} \label{eq:upper_dmin}
     \mathcal{A}_{\alpha}(\ket{\psi}) \leq 2 D_{\text{min}}(|\psi \rangle) \quad (\alpha \geq 2),
\end{equation}
and
\begin{equation} \label{eq:lower_dmin}
    D_{\text{min}}(|\psi \rangle)  \leq \mathcal{A}_{\alpha}(\ket{\psi})  \quad (\alpha \geq 0).
\end{equation}
For a pure state $| \psi \rangle$, let $| s \rangle$ be the stabilizer state with the largest overlap with $| \psi \rangle$, i.e., $D_{\text{min}}(| \psi \rangle) = -\log |\langle s | \psi \rangle|^2$. We have
\begin{equation}
    \begin{split}
        \frac{1}{2^N} \sum_{P\in G(| s \rangle)} |\Tr[\rho P]|^2  &\geq \frac{1}{2^{2N}} \left( \sum_{P\in G(| s \rangle)} |\Tr[\rho P]| \right)^2\\
        & \geq \frac{1}{2^{2N}} \left| \sum_{P\in G(| s \rangle)} \Tr[\rho P] \right|^2\\
        & =  \left| \Tr[\rho |s \rangle \langle s |] \right|^2\\
        & =  \left|  \langle s | \psi \rangle \right|^4.\\
    \end{split}
\end{equation}
In the first line we used Jensen's inequality, while in the second line we used the triangle inequality. Taking the logarithm on both sides, we obtain
\begin{equation} \label{eq:a}
    S_2(\mathcal{G}_{G(\ket{s})}(\rho)) \leq 2 D_{\text{min}}(| \psi \rangle).
\end{equation}
Finally, using the hierarchy of R\'enyi entropies $S_a \leq S_b$ for $a \geq b$, and combining with Eq. \eqref{eq:a}, we obtain the upper bound in Eq. \eqref{eq:upper_dmin}. 

For the lower bound in Eq. \eqref{eq:lower_dmin}, we first show that, in fact, the min-relative entropy of magic is related in a more direct way to the BMSA via the limit
\begin{equation} \label{eq:dmin_a}
    \lim_{\alpha \to \infty} \mathcal{A}_\alpha(\ket{\psi}) =  D_{\mathrm{min}} (\ket{\psi})
\end{equation}
where $D_{\mathrm{min}}$ is the min-relative entropy of magic. Indeed, by Eq. \eqref{eq:equal_bmme}, we have
\begin{align}
    \lim_{\alpha \to \infty}  \mathcal{A}_\alpha(\ket{\psi}) &= \lim_{\alpha \to \infty}  z^*_\alpha(| \psi \rangle)\nonumber\\
    =& \min_{C \in \mathcal{C}_N} -\log \max_{\sigma} \vert \bra{\sigma} C^\dagger | \psi \rangle \vert^2, 
\end{align}
where $\sigma$ denotes the computational basis states. Since $C \ket{\sigma}$ encompasses the set of pure stabilizer states, it follows that
\begin{align}
    \lim_{\alpha \to \infty}  \mathcal{A}_\alpha(\ket{\psi}) &=  \min_{|S \rangle \in \text{PSTAB}_N} -\log  \vert \bra{s}  \psi \rangle \vert^2 \nonumber\\
    &=  D_{\mathrm{min}} (\ket{\psi}).
\end{align}
Then, using Eq. \eqref{eq:dmin_a} and the hierarchy of R\'enyi entropies, we obtain the lower bound in Eq. \eqref{eq:lower_dmin}.

\subsubsection{Relation to max-relative entropy of magic}
Another important magic monotone is the max-relative entropy of magic, which for pure states is equivalent to the logarithm of the stabilizer extent~\cite{bravyi2019simulation,liu2022many}. It can be written as a linear program 
\begin{equation}
    D_\text{max}(\ket{\psi})=2\ln[\min_{x} (\sum_{i}\vert x_i\vert) \text{ s.t. } \ket{\psi}=\sum_i x_i\ket{s_i}]
\end{equation}
with $x\in \mathbb{C}^{\vert \text{STAB}_N\vert}$ and $\ket{s_i}\in \text{STAB}_N$.

Similarly, the basis-minimized nonstabilizerness  asymmetry for $\alpha=1/2$ can be written in a similar form as
\begin{align}\label{eq:asymmetryProgram}
    \mathcal{A}_{1/2}(\ket{\psi})&=2\ln[\min_{x,\ket{s}\in\text{STAB}_N} \sum_i\vert x_i\vert \text{ s.t. } \ket{\psi}\nonumber\\
    &=\sum_i x_i d_i^{s}\ket{s}]
\end{align}
where $d_i^{s}$ is one of the $2^N$ destabilizers of $\ket{s}$ where $\{d_i^{s}\ket{s}\}_i$ is a basis. As all $d_i^{s}\ket{s}$ are stabilizer states and thus a subset of $\text{STAB}_N$~\cite{yoder2012generalization}, this implies $\mathcal{A}_{1/2}(\ket{\psi})\geq D_\text{max}(\ket{\psi})$ and together with the hierarchy of R\'enyi entropies
\begin{equation}
    \mathcal{A}_{\alpha}(\ket{\psi})\geq D_\text{max}(\ket{\psi})\,, \qquad \left(\alpha\leq \frac{1}{2}\right)\,.
\end{equation}
Note that the log-free robustness of magic $\text{LR}$ is also is an upper bound to $D_\text{max}$ via $\text{LR}\geq D_\text{max}$~\cite{liu2022many} However, counter-examples show that such inequality with prefactor $1$ cannot exist between $\text{LR}$ and $\mathcal{A}_{1/2}$, i.e. in general $\text{LR}\rlap{\kern.45em$|$}\geq \mathcal{A}_{1/2}$ and $\text{LR}\rlap{\kern.45em$|$}\leq \mathcal{A}_{1/2}$.

\subsubsection{Relation to stabilizer rank}
The stabilizer rank $\chi(\ket{\psi})$ is the size of the smallest set of stabilizer states that can represent a given state $\ket{\psi}$~\cite{bravyi2016trading}
\begin{equation}
    \chi(\ket{\psi})=\text{min}_{x} k \text{ s.t. } \ket{\psi}=\sum_{i=1}^k x_i \ket{s_i}
\end{equation}
with stabilizer states $\ket{s_i}$ and coefficients $x_i\in\mathbb{C}$. 
The BMSA for the case $\alpha=0$ is written as
\begin{equation}
\mathcal{A}_0(\ket{\psi})=\ln(\min_{x,\ket{s}} k \text{ s.t. } \sum_{i=1}^k x_i d_i\ket{s})\,,
\end{equation}
which corresponds to finding the stabilizer basis with minimal support that can represent $\ket{\psi}$.
We can immediately see that $\mathcal{A}_0(\ket{\psi})$ is similar to the problem of finding stabilizer rank, but over a more restricted set of stabilizers (i.e. only over stabilizer basis).
Thus, we have
\begin{equation}
    \mathcal{A}_0(\ket{\psi})\geq \ln(\chi(\ket{\psi}))\,.
\end{equation}

\subsection{Connection with coherence measures} \label{sec:coherence}
The BSMA can also be understood in terms of coherence measures. Indeed, $\mathcal{A}(\ket{\psi})$ is equivalent to minimizing the relative entropy of coherence $C_r$~\cite{baumgratz2014quantifying}, which is a strong coherence monotone, over all possible stabilizer basis. Similarly, for pure states, $\mathcal{A}_{1/2}(\ket{\psi})$ is equivalent to minimizing the $l_1$ norm coherence $C_{l_1}$~\cite{baumgratz2014quantifying}, another strong coherence monotone, over all possible stabilizer basis. It can thus be shown that $\mathcal{A}_{1/2}(\ket{\psi})$ is also a strong magic monotone.

This connection to coherence can also be seen in the following way. Usually, the set of stabilizer states is seen as the convex combinations of pure stabilizer states. One can see that this set is equivalent to the convex combination of incoherent states in stabilizer basis. Specifically, we can write the set of stabilizer states as
\begin{equation}
    \mathrm{STAB} = \left \lbrace \rho : \rho = \sum_j p_j  \tau_j  , \forall j \,p_j \geq 0, \sum_j p_j = 1  \right \rbrace,
\end{equation}
where $\tau_j$ are incoherent states in the stabilizer basis.

It is worth noting that similar construction in the case of entanglement leads to R\'enyi entanglement entropy, which are known to be strong measures of entanglement for $0 \leq \alpha \leq 1$ \cite{vidal2000entanglement}, but not for $\alpha>1$ \cite{horodecki2009quantum}.

\section{Numerical approaches}
\label{sec:numerical_results}

In this section, we present our numerical methods to compute the BMSA. Before that, we discuss a natural way in which we can interpret the BMSA in terms of the computational cost of classically simulating  quantum circuits.

\subsection{BMSA and the computational cost of classical simulations}

As discussed above, given any stabilizer state $\ket{s}$, one can define a stabilizer basis via its destabilizers $d_i$~\cite{yoder2012generalization} to represent any state via 
$\ket{\psi}=\sum_{i=1}^{2^n} c_i d_i\ket{s}$ with coefficients $c_i$. The stabilizerness asymmetry corresponds to the entropy of the squared coefficients $\{\abs{c_i}^2\}_i$, i.e. $A_{G(\ket{s})}(\ket{\psi})=S(\{\abs{c_i}^2\}_i)$, and the BMSA corresponds to the stabilizer basis with minimal entropy, which represents the favorable choice for simulation.

Concretely, let us consider the simulation of circuits composed of Clifford gates and non-Clifford Pauli rotations. We would like to compute the expectation value
\begin{equation}
    \langle O \rangle = \langle 0^{\otimes N} | C_1^\dagger U_1^\dagger \cdots C_D^\dagger U_D^\dagger O C_D U_D \cdots C_1 U_1    | 0^{\otimes N} \rangle 
\end{equation}
of a Pauli operator $O$. Here $C_j$ are Clifford gates and $U_j$ are non-Clifford Pauli rotation gates $U_P(\theta)= \exp(i \theta P)$. Note that, without loss of generality, the circuits above can be reduced to circuits containing only Pauli rotations $U_j$ through circuit compilation. Here, we consider the simulation of the circuits by evolving the state through the gates (Schr\"odinger picture).

The action of the non-Clifford Pauli rotation operator $U_P(\theta)= \exp(i \theta P) = \cos(\theta) + i \sin(\theta) P $ on a stabilizer state is given by
\begin{equation}
    U_P(\theta) \ket{s} = 
    \begin{cases}
    \exp(i \theta \lambda_s) \ket{s} , & P \in G(\ket{s}) \\
    \cos(\theta)\ket{s} +  i \sin(\theta) \ket{s'}, & \text{otherwise }
    \end{cases}
\end{equation}
where $\ket{s'} = P \ket{s}$ is a stabilizer state in the same stabilizer basis as $\ket{s}$ and $\lambda_s = \pm 1$. Note that $\ket{s'}$ can be obtained efficiently through the standard stabilizer simulation technique.

Therefore, if we apply a sequence of Pauli rotation gates to the state $\ket{0^{\otimes N}}$, each gate would increase the number of terms in the stabilizer basis at most by a factor of 2. We can thus perform a classical simulation of such circuit by simply keeping track of the coefficients in a given stabilizer basis (and possibly minimized over). The simulation can be done approximately by truncating all coefficients below a given error threshold $|c_i|<\epsilon$ to control the computational cost. The BMSA is related to the accuracy of such truncation: if $\mathcal{A}_1(\ket{\psi})$ scales as $O(N)$, then for a given small $\epsilon$ the required number of coefficients that needs to be retained scales as $O(\exp(N))$. Following an argument in Ref.~\cite{schuch2008entropy}, this can be seen using Fannes’ inequality~\cite{audenaert2007sharp} and the fact that $||\mathcal{G}_G(\ket{\psi})-\mathcal{G}_G(\ket{\phi})||_1\leq ||\ket{\psi}-\ket{\phi}||_1$, which follows from the monotonicity of the trace distance with respect to quantum channels~\cite{nielsen2011quantum}. This simulation technique is very similar, and in fact can be seen as the Schr\"odinger version of the sparse Pauli dynamics method \cite{begui2024fast,begui2024realtime}. A related technique involves approximating the state in a stabilizer basis as matrix product states~\cite{masot2024stabilizer}.

\subsection{Brute-force computation from Pauli vector}

For a small number of qubits, the BMSA can be computed directly by minimizing over the full stabilizer group, as we explain below. The method is adapted from the method in \cite{hamaguchi2024handbook}.  

Consider a single qubit Pauli operator $P_{\alpha,\alpha^{\prime}}=i^{\alpha \alpha^{\prime}} X^{\alpha} Z^{ \alpha^\prime}$. For $N$ qubits, the Pauli strings can then be written as
\begin{equation}
P_{\bold{\alpha},\bold{\alpha^{\prime}}}=P_{\alpha_1,\alpha_1^{\prime}} P_{\alpha_2,\alpha_2^{\prime}} \cdots  P_{\alpha_N,\alpha_N^{\prime}}
\end{equation}
For a quantum state $\rho$, we denote the Pauli vector by $b_{\boldsymbol{\alpha},\boldsymbol{\alpha'}} = \Tr [\rho P_{\boldsymbol{\alpha},\boldsymbol{\alpha'}}]$. Let $G$ be a stabilizer group with generators $g_1, \cdots,g_N$, whereby $g_i = P_{\boldsymbol{\alpha}_i,\boldsymbol{\alpha'}_i}$.  We construct the vector $c_\mathbf{u} = \Tr [\rho g_1^{u_1} \cdots g_N^{u_N} ] = b_{\mathbf{v}, \mathbf{v'}}$, where $\mathbf{v} = \oplus_i u_i\boldsymbol{\alpha}_i$ and $\mathbf{v'} = \oplus_i u_i\boldsymbol{\alpha'}_i$. This operation implements the twirling operation in Eq. \eqref{eq:intermediate_1}, such that $c_\mathbf{u}$ stores the Pauli vector of $\mathcal{G}_{G(\ket{s})(\rho)}$. Then, we compute $\mathbf{d} =  H^{\otimes N} \mathbf{c} / 2^N$, where $H=[1, 1;1 ,-1]$ is the unnormalized Hadamard matrix. The vector $\mathbf{d}$ stores the coefficients in the stabilizer basis associated to $G(\ket{s})$. Finally, we compute the Shannon entropies of the elements of $\bold{d}$. 

The computation allows us to compute $A_{G\ket{s}(\rho)}$ for a given stabilizer state $\ket{s}$. The numerical computation of $\mathcal{A}(\rho)$ is completed by minimizing over all possible stabilizer states $\ket{s}$, which can be done for small system sizes. In practice, we are able to follow this approach up to $N=5$.

\subsection{Exact computation via branch and bound} \label{sec:branch_and_bound}

Going beyond the brute force method described previously, we explain a more sophisticated approach to compute the BMSA exactly. In particular, we adapt the branch and bound method recently introduced in \cite{hamaguchi2024faster} to compute the BMSA. First of all, we enumerate the stabilizer states as follows. Let $\mathbb{F}_2$ be the finite field with two elements.
  For all $k \in \{1,\dots,N\}$, we define
  \begin{align*}
    \mathcal{Q}_k &= \left\{ Q \bigm| Q \in \mathbb{F}_2^{k \times k} \textrm{ is an upper triangular matrix} \right\},\\
    \mathcal{R}_k &= \left\{ R \bigm| R \in \mathbb{F}_2^{N \times k}, \textrm{$\mathrm{rank}(R)=k$} \right\},\\
    \mathcal{T}_R &= \left\{ \mathrlap{\:t}\phantom{Q} \bigm| \mathrlap{\:t}\phantom{Q} \in   \mathbb{F}_2^N / \Im(R) \right\}. 
  \end{align*}
  We also define the set of states ${\rm PSTAB}_{N,k}$ as
  \begin{equation}\label{eq:stabilizerStateStandardForm}
  \begin{split}
    {\rm PSTAB}_{N,k} =
    \left\{ \frac{1}{2^{k/2}} \sum_{x=0}^{2^k-1}(-1)^{x^\top Q x} i^{c^\top x}\ket{R x+t} \right. 
    \\
    \left. \bigm| Q \in \mathcal{Q}_k, c \in \mathbb{F}_2^k, R \in \mathcal{R}_k, t\in \mathcal{T}_R \vphantom{\frac{1}{2^{k/2}}} \right\},
    \end{split}
  \end{equation}
  and define ${\rm PSTAB}_{N,0} = \left\{ \ket{t} \bigm| t \in \mathbb{F}_2^{N} \right\}$.
  Then, we have $\bigcup_{k=0}^{N} {\rm PSTAB}_{N,k} = {\rm PSTAB}_N$. \cite{VandenNest2010,Struchalin2021,hamaguchi2024faster}

  The number of $Q,c,R,t$ is
  $2^{k(k+1)/2},2^k,$ ${N \brack k}_2$ $,2^{N-k}$, respectively \cite{hamaguchi2024faster}. Here, ${N \brack k}_2$ is a q-binomial coefficient with $q=2$ \cite{Kac2002}. Let $Q=Q_{\mathrm{d}} + Q'$, where $Q_{\mathrm{d}}$ is the diagonal part of $Q$. It can be shown that the $2^N$ stabilizer states characterized by the same $Q',c,R$ form an orthonormal basis. For a quantum state $\ket{\psi}$, we denote the squared overlap with the stabilizer state $\ket{\phi} = \frac{1}{2^{k/2}}\sum_{x=0}^{2^k-1} (-1)^{x^\top Q x} i^{c^\top x} \ket{Rx+t}$ by $p_{Q',c,R}(Q_\mathrm{d},t) = \abs{\braket{\phi}{\psi}}^2$, which forms a probability distribution, i.e., $\sum_{Q_\mathrm{d},t}p_{Q',c,R}(Q_\mathrm{d},t)=1$. 

  The BMSA can be written as 
  \begin{equation}
    \label{eq:bmsa2}\mathcal{A}_\alpha(\ket{\psi}) = \min_k \min_{Q',c,R}  S_\alpha(\{p_{Q',c,R}(Q_\mathrm{d},t) \}_{Q_\mathrm{d},t}) 
  \end{equation}

  Next, we derive a bound for a given $R$. We denote $q_{R}(Q_\mathrm{d},t) = \abs{\braket{\phi'}{\psi'}}^2$ where $\ket{\phi'} = \frac{1}{2^{k/2}}\sum_{x=0}^{2^k-1} (-1)^{ Q_{\mathrm{d}} x}  \ket{Rx+t}$ and we construct a state $\ket{\psi'}$ whose coefficients are given by $\braket{x}{\psi'} = \abs{\braket{x}{\psi}}$. It can be shown that, for fixed $R$ and integer $\alpha>1$,
  \begin{equation} \label{eq:bound_r}
      \min_{Q',c}  S_\alpha(\{p_{Q',c,R}(Q_\mathrm{d},t) \}_{Q_\mathrm{d},t}) \geq S_\alpha(\{q_{R}(Q_\mathrm{d},t) \}_{Q_\mathrm{d},t}) 
  \end{equation}
  The derivation can be found in Appendix~\ref{sec:deriv_bound}. This bound can be utilized to implement the branch and bound method, as discussed in detail in Ref.~\cite{hamaguchi2024faster}. The key idea is that, since we are solving for a minimum, any branch whose lower bound is higher than the current best solution cannot contain the global minimum. This allows us to skip this branch and significantly reduce the search space compared to the brute-force approach. Additionally,  the computation can be further simplified by exploiting the special structures of the wavefunction. We discuss this in Appendix~\ref{sec:real_states}, where we consider both the case of wavefunctions with real amplitudes and positive real amplitudes.
  
We employed this method to examine the additivity properties of the BSMA. Note that, while we are unable to prove Eq. \eqref{eq:bound_r} for the case $\alpha=1$, we conjecture that the lower bound remains valid, as we have not found any counter-examples. In the following, we applied the branch and bound method while assuming that the inequality holds for $\alpha=1$.  Specifically, we considered product states of the form $|\Psi(\theta)\rangle = (|0\rangle + e^{i\theta}|1\rangle)^{\otimes N}$, with two distinct choices of $\theta$: $\theta_1 = \frac{\pi}{4}$, corresponding to the $T$-state, and $\theta_2 = \frac{\pi}{8}$. Figure~\ref{fig:product_states} displays the computational results as a function of system size $N$. Our findings indicate that BMSA scales linearly with $N$ in both cases. For the case of $\theta_2$ we were able to access larger system sizes compared to $\theta_1$. This is attributed to the lower BMSA values for $\theta_2$, which enable the algorithm to more efficiently prune the search space and find the optimal solution. Additionally, in the asymptotic limit of large $N$, we demonstrate linearity in $N$ using the argument detailed in Section~\ref{sec:bounds}.

\begin{figure}
    \centering 
    \includegraphics[width=1.\linewidth]{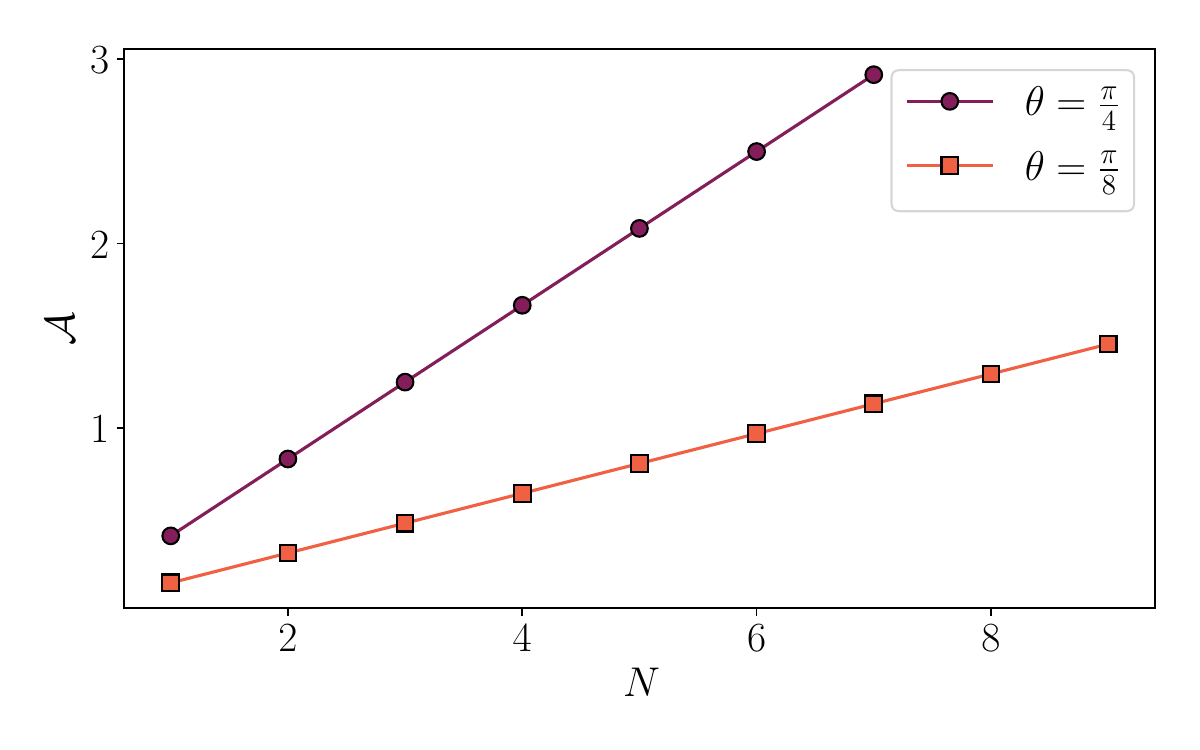}
    \caption{Scaling of BMSA with system size $N$ for two different product states, both of the form $|\Psi(\theta)\rangle = (|0\rangle + e^{i\theta}|1\rangle)^{\otimes N}$, with $\theta_1 = \frac{\pi}{4}$, $\theta_2 = \frac{\pi}{8}$. }
    \label{fig:product_states}
\end{figure}

\subsection{Estimation via minimization of participation entropy}\label{sec:min_part_entropy}
The exact computation of magic asymmetry has a cost that is super-exponential with the system size. To reduce this cost, here we adapt the algorithm proposed in \cite{qian2024augmenting}. Using the relation described in Eq.(\ref{eq:equal_bmme}), we can find the asymmetry efficiently by minimizing the participation entropy. Given a state, we can apply a Clifford circuit to the wave function $|\psi\rangle$ to reduce the participation entropy. By acting the Clifford circuits, we obtain the wave function
\begin{equation}
    |\tilde{\psi} \rangle=U_C |\psi\rangle
\end{equation}
where $U_C$ denotes the Clifford circuits. By carefully selecting appropriate Clifford transformations, the resulting wave function can exhibit reduced participation entropy compared to the original wave function. The minimization routine constructs the optimal Clifford operator iteratively by sweeping over the chain and minimize the entanglement across neighboring sites of the chain using two-qubit Clifford unitaries. For each nearest-neighbor pair, the participation entropy is computed after applying each two-qubit Clifford unitary $C \in \mathcal{C}_2$. Let $\mathcal{C}_z$ be the group of Clifford unitaries that map Pauli strings in $\mathcal{P}_z$ to itself. Given that Clifford unitaries in $\mathcal{C}_z$ do not alter the participation entropy, only $|\mathcal{C}_2|/|\mathcal{C}_z|=15$ gates need to be considered at each step where $|\mathcal{C}_2|=11520$ and $|\mathcal{C}_z|=768$. The selection of a specific $C$ is performed using a heat bath method, where we introduce a fictitious temperature $T$, and the probabilities for each $C_i \in \mathcal{C}_2/\mathcal{C}_z$ are computed as
\begin{equation}
    p_{C_i} = \mathrm{e}^{-(S^{\mathrm{part}}_\alpha (C_i C_\mathrm{ref}| \psi \rangle) - S^{\mathrm{part}}_\alpha (C_\mathrm{ref} |\psi \rangle))/T}
\end{equation}
up to normalization. Here $C_\mathrm{ref}$ represents the Clifford gate that minimizes the participation entropy at the considered step. If the gate $C_i$ is selected, $C_\mathrm{ref}$ is updated as $C_\mathrm{ref} \to C_i C_\mathrm{ref}$. The temperature $T$ is decreased at each sweep, ranging on a logarithmic grid from $10^{-1}$ to $10^{-4}$. This algorithm resembles the simulated annealing \cite{kirkpatrick1983optimization}, a method often used to search for global minimum in optimization problems.

\begin{figure}
    \centering
    \includegraphics[width=1\linewidth]{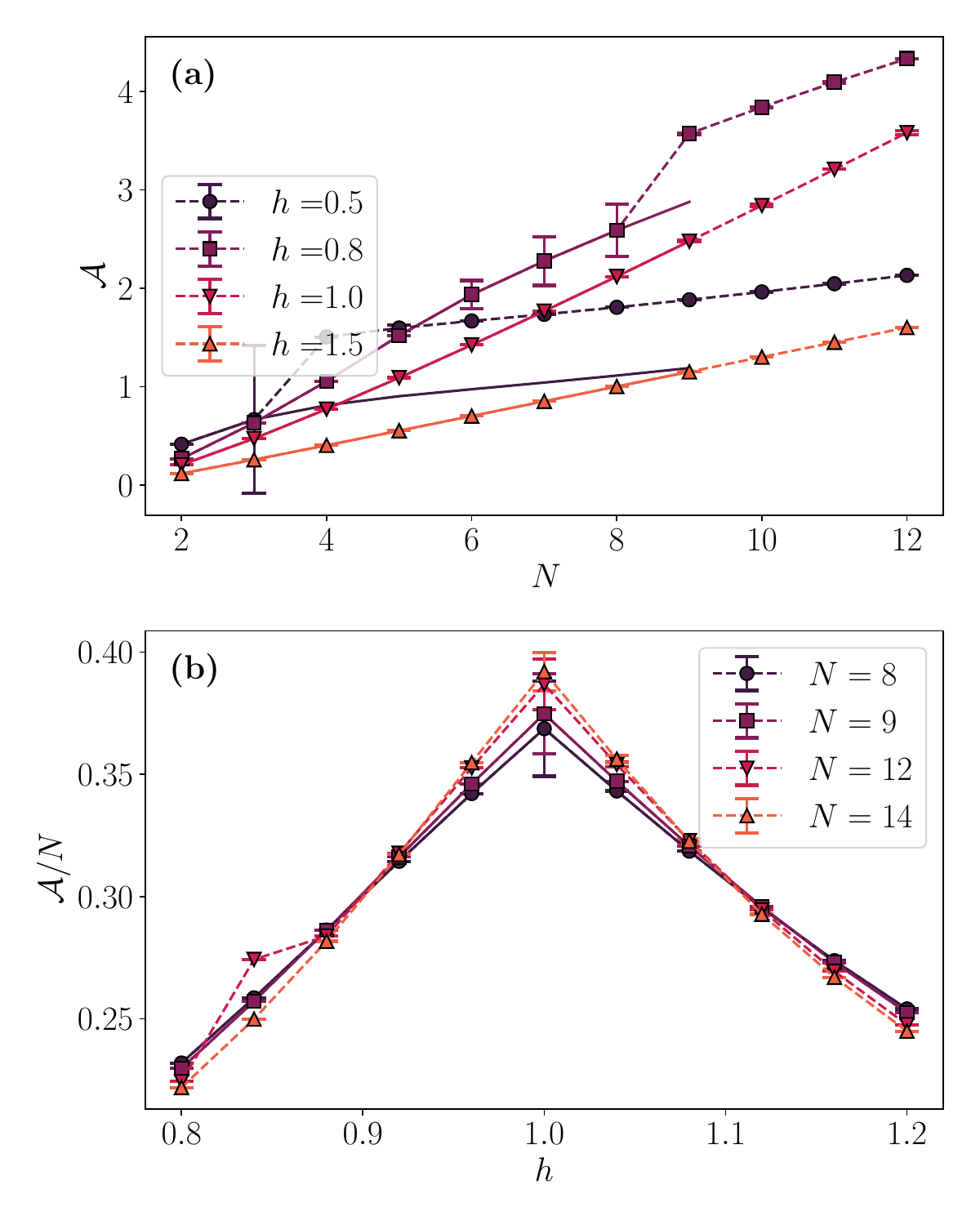}
    \caption{ BMSA obtained via Monte Carlo minimization of participation entropy as explained in Sec.~\ref{sec:min_part_entropy}. \textbf{(a)} Scaling of BMSA for increasing system sizes in the ground state of the quantum Ising chain with open boundary conditions. Solid lines correspond to the exact calculation for $N$ up to 9 obtained via the branch and bound method as explained in Sec.~\ref{sec:branch_and_bound}.
    \textbf{(b)} BMSA density in transverse field Ising model with periodic boundary conditions, for different system sizes. Solid lines correspond to the exact calculation for $N \in \{8, 9\}$, obtained via branch and bound method. }
    \label{fig:asymmetry_ising}
\end{figure}

We checked the performance of the algorithm in the quantum Ising chain described by the following Hamiltonian:
\begin{equation}
    H_{\text{Ising}} = -\sum_{i=1}^{N-1} \sigma_i^z \sigma_{i+1}^z - h \sum_i \sigma^x_i,
    \label{eq:1dising}
\end{equation}
with $\sigma^{z,x}$ spin-$1/2$ Puali matrices. Nonstabilizerness of the ground-state of the Ising Hamiltonian has been studied in Ref. \cite{oliviero2022magic} by means of free fermion techniques and in Refs. \cite{haug2022quantifying,tarabunga2024nonstabilizerness} using matrix product states. It has been shown that it peaks at the criticality $h_c=1$. In Fig. \ref{fig:asymmetry_ising}(a) we present the scaling of the BMSA as a function of system size $N$ for the ground state of $H_{\text{Ising}}$ for different transverse fields $h$ with open boundary conditions, showing extensive behavior in both phases. We observed that the algorithm occasionally becomes trapped in local minima within the ferromagnetic phase.  This is evident from the curves of Fig.~\ref{fig:asymmetry_ising}(a), showing disagreement with the numerically-exact data for $h=0.5$ and $N\geq 4$, and for $h=0.8$, $N=9$. To mitigate this issue, we initialized the algorithm with an initial guess of $C_\mathrm{ref}=H_1 CNOT_{1,2} CNOT_{2,3} \dots CNOT_{N-1,N}$. This choice corresponds to the Clifford gate that minimizes the participation entropy at $h=0$. In Fig. \ref{fig:asymmetry_ising}(b) we show the density of the BMSA as a function of the transverse field $h$ with periodic boundary conditions, focusing around the critical point $h_c=1$. For systems with $N \in \{8,9\}$, we observe perfect agreement between the numerical data and the exact values computed using the branch and bound method in Sec. \ref{sec:branch_and_bound}. Furthermore, we give a glimpse of the applicability of our technique for larger systems with $N \in \{12,14\}$, showcasing the potential of our approach to handle systems beyond the reach of exact numerical methods.

\subsection{Analytic computations for upper and lower bounds}\label{sec:bounds}
The previous methods allow us to numerically compute the BMSA with some versatility. However, in some cases, it is possible to arrive at analytic results in the large-$N$ limit. As an interesting example, we consider the so-called $W$-state
\begin{equation}
\ket{W} = \frac{1}{\sqrt{N}} \qty(\ket{100\dots0}+\ket{010\dots0}+\dots+\ket{000\dots1}).
\end{equation}
Our strategy is to exploit previously derived upper bounds, while using a specific stabilizer $\ket{s}$ to get lower bounds which are easily computable. While we illustrate this method for a simple state, we expect that similar strategies could give analytic large-$N$ results in special families of states, such as matrix-product states~\cite{haug2022quantifying}.

To be concrete, by definition the participation entropy in any stabilizer basis provides an upper bound to the BMSA. This can be easily computed for the $W$-state in the computational basis, which is given by $\log(N)$. We then use the lower bound provided by the SRE in Eq. \eqref{eq:relation_sre}. The SRE-2 for the W-state was computed in \cite{odavic2023complexity}, and is given by $M_2(\ket{W})=3\log(N)-\log(7N-6)$. We thus have
\begin{equation}
   \frac{3}{2} \log(N)- \frac{1}{2} \log(7N-6) \leq \mathcal{A}_\alpha(\ket{W}) \leq \log(N), 
\end{equation}
for $\alpha \leq 2$. For large $N$, both of the bounds collapse to $\log(N)$, yielding
\begin{equation}
\mathcal{A}_\alpha(\ket{W}) \sim \log(N)\,,
\end{equation}
for $\alpha \leq 2$ and large $N$.

Additionally, with a similar argument, we can show that the BMSA scales linearly in $N$ in the large $N$ limit for product states. Indeed, since both the participation entropy and the SRE are additive for product states, the BMSA is bounded above and below by extensive quantities. This shows that the BMSA is an extensive quantity for product states. 

\section{Outlook}
\label{sec:outlook}
We have introduced a measure of nonstabilizerness, the BMSA. We have shown that it is a strong monotone and provided evidence that it is additive, up to corrections that vanish in the limit of large numbers of qubits. We have shown that the BMSA can be computed exactly up to $N=9$ qubits, and developed a Monte Carlo approach to compute it for larger $N$. The Monte Carlo method appears to match well the numerically-exact results for the available system sizes. We have derived several inequalities between the BMSA and other nonstabilizerness monotones, showing how these inequalities may allow one to infer the leading behavior of the BMSA in the large-$N$ limit.

Our work opens several questions. Most prominently, it would be interesting to devise numerically exact or controlled approximate methods to compute the BMSA in special classes of many-body states, such as MPSs, for large number of qubits. As mentioned, we have shown that, in some cases, it is possible to compute analytically lower and upper bounds, from which one can infer the exact leading behavior of the BMSA in the large-$N$ limit. While we have detailed such calculation only for a simple example, the $W$-state, we expect that a similar approach could work to compute, possibly only numerically, the large-$N$ limit of the BMSA more generally. Finally, another important question to address is the strong monotonicity of the R\'enyi BMSA for $1<\alpha<2$, and $\alpha<1, \alpha \neq 1/2$.

Going further, our work contributes to the recent literature aiming at finding efficiently computable monotones in the nonstabilizerness resource theory. As we have stressed the importance of additivity and strong monotonicity in the context of many-body physics, we hope that our work will motivate further research to find additional monotones with these properties.

\section*{Acknowledgments}
We thank Marcello Dalmonte for inspiring discussions and collaboration at an early stage of this work. P.S.T. acknowledges funding by the Deutsche Forschungsgemeinschaft (DFG, German Research Foundation) under Germany’s Excellence Strategy – EXC-2111 – 390814868.
This work was co-funded by the European Union (ERC, QUANTHEM, 101114881). E.T. acknowledges support from ERC under grant agreement n.101053159 (RAVE).
Views and opinions expressed are however those of the author(s) only and do not necessarily reflect those of the European Union or the European Research Council Executive Agency. Neither the European Union nor the granting authority can be held responsible for them.

\appendix

\section{Properties of the BMSA}
\label{sec:properties_BMSA}

In this appendix, we provide additional details on the BMSA and prove that it is a strong monotone. 

\subsection{Preliminary properties}

We start by deriving Eq.~\eqref{eq:rho_sym}. Using the same notation as in the main text, the symmetrized state takes the form
\begin{equation}
    \begin{split}
        \mathcal{G}_G(\rho) &= \frac{1}{2^k} \sum_{g \in G} U_g \rho U_g^{\dagger} \\
        &= \frac{1}{2^k} \sum_{g \in G}  U_g \left( \frac{1}{2^N} \sum_{P\in \mathcal{P}_N} \Tr[\rho P] P \right) U_g^{\dagger} \\
        &= \frac{1}{2^{N+k}} \sum_{g \in G}    \sum_{P\in \mathcal{P}_N} \Tr[\rho P] U_g P  U_g^{\dagger} \\
        &= \frac{1}{2^N}  \sum_{P\in \mathcal{P}_N} \Tr[\rho P] \left( \frac{1}{2^k} \sum_{g \in G} U_g P  U_g^{\dagger} \right). \\ 
    \end{split}
\end{equation}
It is easy to see that the term inside the bracket is $P$ if and only if $P$ commutes with all $g \in G$, and 0 otherwise. Therefore, denoting by $G^{\perp}$ the group of Pauli strings we obtain Eq.~\eqref{eq:rho_sym}.

Next, we show Eq.~\eqref{eq:second_inequality}.
To this end, we follow the approach of Ref.~\cite{howard2017application} for the bound on robustness of magic and SRE. First, let us assume that we have the decomposition $\ket{\psi}=\sum_i x_i d_i\ket{s}$ with minimal asymmetry as in Eq.~\eqref{eq:asymmetryProgram}. 
Then, let us consider
\begin{gather}
    e^{\frac{1}{2}M_{1/2}(\ket{\psi})}=2^{-N}\sum_{P\in\mathcal{P}_N}\vert \bra{\psi}P\ket{\psi}\vert\nonumber\\
    =2^{-N}\sum_{P\in\mathcal{P}_N}\vert \sum_{ij} x_i^\ast x_j\bra{s}d_i^\dagger P d_j\ket{s}\vert\nonumber\\
    \leq 2^{-N}\sum_{ij}\vert x_i^\ast\vert \vert x_j\vert \sum_{P\in\mathcal{P}_N}\vert\bra{s}d_i^\dagger P d_j\ket{s}\vert
\end{gather}
where we used the triangle inequality.
Now, we note that as we sum over all Pauli group, the result remains unchanged for any $P\rightarrow U_\text{C} P U_\text{C}^\dagger$ with any Clifford $U_\text{C}$. Now, we choose $U_\text{C}$ such that $U_\text{C} d_j \ket{s}=\ket{0}^{\otimes N}$. Note that $U_C$ maps the stabilizer basis to the computational basis, so that $U_\text{C} d_i \ket{s}$ is a computaional basis state also for $i \neq j$. Then, we have
\begin{gather}
    2^{-N}\sum_{ij}\vert x_i^\ast\vert \vert x_j\vert \sum_{P\in\mathcal{P}_N}\vert\bra{s}d_i^\dagger U_\text{C}^\dagger P\ket{0}^{\otimes N}\vert  \\
    =2^{-N}\sum_{ij}\vert x_i^\ast\vert \vert x_j\vert \sum_{\sigma} 2^{N}\vert\bra{s}d_i^\dagger U_\text{C}^\dagger \ket{\sigma}\vert
\end{gather}
where $\ket{\sigma}$ denotes a computational basis state. Since $U_C d_i \ket{s}$ is a computational basis state, it follows that $\sum_{\sigma}\vert\bra{s}d_i^\dagger U_\text{C}^\dagger \ket{\sigma}\vert=1$. Thus, we have
\begin{align}
    e^{\frac{1}{2}M_{1/2}(\ket{\psi})} &\leq \left(\sum_{i} \abs{x_i}\right)^2=  e^{\mathcal{A}_{1/2}(\ket{\psi})}
\end{align}

\subsection{Strong monotonicity}
We now prove strong monotonicity. We begin by proving some preliminary results.

\begin{lem}\label{lem:measurements}
    The BMSA is non-increasing, on averaged, under Pauli measurements
\end{lem}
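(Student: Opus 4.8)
The plan is to prove that the BMSA does not increase, on average, under a projective measurement of a subset $\Lambda$ of qubits in the computational basis. Write $\rho = \ket{\psi}\bra{\psi}$, and let $\rho_\lambda$ denote the (normalised) post-measurement state with outcome $\lambda$ occurring with probability $p_\lambda = \Tr[\Pi_\lambda \rho]$, where $\Pi_\lambda = \ket{\lambda}\bra{\lambda}_\Lambda \otimes \openone$. Since $\mathcal{A}_1(\ket{\psi}) = \min_{\ket{s}} S(\mathcal{G}_{G(\ket{s})}(\rho))$ and the post-measurement states are pure, I want to exhibit, for \emph{each} outcome $\lambda$, a good choice of stabilizer basis for $\rho_\lambda$ and relate the resulting sum of entropies back to a single asymmetry of $\rho$.

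The key idea is to use the monotonicity of the $G$-asymmetry under $G$-covariant operations, which is the defining property of $G$-asymmetry as a monotone in the resource theory of $G$-frameness (cited in the excerpt from Refs.~\cite{vaccaro2008tradeoff,gour2009measuring}). First I would fix the stabilizer state $\ket{s}$ achieving the minimum in $\mathcal{A}_1(\ket{\psi})$ and observe that we are free to pick $\ket{s}$ so that its unsigned stabilizer group $G(\ket{s})$ contains $\sigma^z_j$ for every $j\in\Lambda$ — indeed, for any pure state the closest stabilizer state can be taken (after a suitable relabelling via the minimisation) to be compatible with the measured $Z$-operators; more carefully, I would instead argue on the level of the symmetrised state directly. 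Concretely, the computational-basis measurement on $\Lambda$ corresponds to the group $H = \langle \sigma^z_j : j \in \Lambda\rangle$, and the map $\rho \mapsto \sum_\lambda \Pi_\lambda \rho \Pi_\lambda = \mathcal{G}_H(\rho)$ is itself a $G$-twirl. For $G \supseteq H$, the twirl $\mathcal{G}_G$ commutes with $\mathcal{G}_H$, and on each block $\Pi_\lambda \mathcal{G}_G(\rho)\Pi_\lambda / p_\lambda = \mathcal{G}_{G}(\rho_\lambda)$ once we note that $\mathcal{G}_G$ acts within each measurement sector.

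The main steps are then: (i) pick $\ket{s}$ attaining $\mathcal{A}_1(\ket{\psi})$; by acting with a Clifford supported on $\Lambda$ that is "diagonal" (does not change the asymmetry structure relevant to the $Z_j$) reduce to the case $\sigma^z_j\in G(\ket{s})$ for $j\in\Lambda$ — this is the step I expect to be the main obstacle, since in general the minimising $\ket{s}$ need not be compatible with $H$, and one must argue that restricting the minimisation to such $\ket{s}$ does not help the adversary, or alternatively invoke strong monotonicity of $C_r$ over stabilizer bases as in Sec.~\ref{sec:coherence}; (ii) use Eq.~\eqref{eq:rho_sym} to write $\mathcal{G}_{G(\ket{s})}(\rho) = \frac{1}{2^N}\sum_{P\in G(\ket{s})^\perp}\Tr[\rho P]\,P$ and observe that, because $\sigma^z_j \in G(\ket{s})$ for $j\in\Lambda$, every $P\in G(\ket{s})^\perp$ acts trivially (as $I$ or $Z$) on $\Lambda$, so $\mathcal{G}_{G(\ket{s})}(\rho)$ is block-diagonal across the measurement sectors with blocks $p_\lambda\,\mathcal{G}_{G(\ket{s})}(\rho_\lambda)$; (iii) apply the grouping/additivity property of von Neumann entropy for block-diagonal states, $S(\mathcal{G}_{G(\ket{s})}(\rho)) = H(\{p_\lambda\}) + \sum_\lambda p_\lambda S(\mathcal{G}_{G(\ket{s})}(\rho_\lambda)) \geq \sum_\lambda p_\lambda S(\mathcal{G}_{G(\ket{s})}(\rho_\lambda)) \geq \sum_\lambda p_\lambda \mathcal{A}_1(\rho_\lambda)$, where the last step uses that $\ket{s}$ is a valid (not necessarily optimal) stabilizer state for each $\rho_\lambda$. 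Chaining these gives $\mathcal{A}_1(\ket{\psi}) = S(\mathcal{G}_{G(\ket{s})}(\rho)) \geq \sum_\lambda p_\lambda \mathcal{A}_1(\rho_\lambda)$, which is exactly the claim of the lemma.

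To close the gap in step (i), the cleanest route is probably the coherence-theory reformulation already noted in the excerpt: $\mathcal{A}_1(\ket{\psi}) = \min_{\mathcal{B}} C_r(\ket{\psi};\mathcal{B})$ where the minimum runs over stabilizer bases $\mathcal{B}$, and $C_r$ is a \emph{strong} coherence monotone, meaning it is non-increasing on average under incoherent (here: stabilizer-basis-covariant) measurements. A computational-basis measurement on $\Lambda$ is, with respect to \emph{some} stabilizer basis, such an operation, and one then optimises the basis after the measurement; this yields the inequality without needing the minimiser $\ket{s}$ to be adapted to $H$ a priori. I would present the block-diagonal argument (steps ii–iii) as the main computation and invoke the coherence-monotone fact to justify reducing to an $H$-compatible basis, citing \cite{baumgratz2014quantifying} for the strong monotonicity of $C_r$.
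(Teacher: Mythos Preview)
Your block-diagonal computation (steps ii--iii) is correct and is exactly what the paper does in its Case~1, where the measured Pauli $Q$ lies in the optimal stabilizer group $G(\ket{s^\ast})$. The genuine gap is in step~(i), and your proposed coherence-theory fix does not close it. The strong monotonicity of $C_r$ only applies with respect to a basis $\mathcal{B}$ for which the measurement is incoherent. If the optimal basis $\mathcal{B}^\ast$ for $\ket{\psi}$ does not contain the measured $\sigma^z_j$, you may switch to a compatible basis $\mathcal{B}'$ (e.g.\ the computational basis) and obtain $C_r(\ket{\psi};\mathcal{B}')\geq \sum_\lambda p_\lambda\,\mathcal{A}_1(\rho_\lambda)$; but then $C_r(\ket{\psi};\mathcal{B}')\geq \mathcal{A}_1(\ket{\psi})$, which is the wrong direction, and the chain breaks. ``Optimising the basis after the measurement'' helps only on the right-hand side; on the left you are stuck with whatever basis the minimiser hands you.

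The paper handles this second case---$Q\notin G(\ket{s^\ast})$---with a different idea that you are missing. It updates the stabilizer state via the standard stabilizer measurement rule, $\ket{s^\ast}\mapsto\ket{s'}$, and observes that the generators of $G(\ket{s'})$ can be taken as $Q,g_2,\dots,g_N$ with $g_2,\dots,g_N\in G(\ket{s^\ast})$. From this one checks that the \emph{projected twirled} state $\sigma_\lambda = V_\lambda\,\mathcal{G}_{G(\ket{s^\ast})}(\rho)\,V_\lambda/q_\lambda$ is $G(\ket{s'})$-invariant. Then, writing $\mathcal{A}_1(\ket{\psi})=S(\rho\Vert\mathcal{G}_{G(\ket{s^\ast})}(\rho))$ and invoking the measurement-monotonicity of relative entropy (Vedral~\cite{vedral1998entanglement}), one gets $\mathcal{A}_1(\ket{\psi})\geq \sum_\lambda p_\lambda S(\rho_\lambda\Vert\sigma_\lambda)\geq \sum_\lambda p_\lambda A_{G(\ket{s'})}(\rho_\lambda)\geq \sum_\lambda p_\lambda \mathcal{A}_1(\rho_\lambda)$, the middle step using that $\mathcal{G}_{G(\ket{s'})}(\rho_\lambda)$ is the closest $G(\ket{s'})$-invariant state. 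This stabilizer-update argument is the substantive content of the lemma; without it (or an equivalent replacement), the proof is incomplete.
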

\begin{proof}
Under the measurement of a Pauli operator $Q$, the state $\rho = \ket{\psi} \bra{\psi}$ is projected to $\rho_{\lambda} = V_{\lambda} \rho V_{\lambda} / p_{\lambda}$, where $V_{\lambda} = (I + \lambda Q)/2$ and $p_{\lambda} = \Tr \rho V_{\lambda}$. Let $G(\ket{s})$ be the stabilizer group such that $\mathcal{A}(\ket{\psi}) = A_{G(\ket{s})}(\rho) $. There are two cases:

\begin{enumerate}
\item $Q \in G(\ket{s})$ \\
In this case, measurement of $Q$ is a $G(\ket{s})$-invariant operation. Thus, we have
 \begin{equation}
        \begin{split}
             \mathcal{A}(\ket{\psi}) &= A_{G(\ket{S})}(\rho) \\ 
             &\geq \sum_{\lambda} p_\lambda A_{G(\ket{S})}(\rho_\lambda) \\
             &\geq \sum_{\lambda} p_\lambda \mathcal{A}(\ket{\psi_\lambda}).
        \end{split}
        \end{equation}
        The second line holds since $A_{G(\ket{s})}(\rho)$ is a $G(\ket{s})$-frame strong monotone \cite{vaccaro2008tradeoff}.
       
        \item $Q \notin G(\ket{s})$ \\
         Let $\sigma_{\lambda} = V_{\lambda} \mathcal{G}_{G(\ket{s})}(\rho) V_{\lambda} / q_{\lambda}$, where $q_{\lambda} = \Tr \mathcal{G}_{G(\ket{s})}(\rho) V_{\lambda}$. Let the Pauli measurement maps $\ket{s}$ to $\ket{s^{\prime}}$. Then, we claim that 
         $\sigma_\lambda$ is $G(\ket{s^{\prime}})$-invariant.
        
        To see this, let $Q, g_2, \cdots, g_N $ be the generators of $G(\ket{s^{\prime}})$. Note that we can choose these generators such that $g_2, \cdots, g_N \in G(\ket{s})$ (this follows from the update rule for Pauli measurement, see \cite{aaronson2004improved}). To prove the claim, it is sufficient to show that $\sigma_\lambda$ is invariant under the generators of $G(\ket{s^{\prime}})$. The invariance under $Q$ is obvious since $\sigma_\lambda$ is stabilized by $Q$.
        Now, for $j \in \{2, \cdots, N \}$, we have
        \begin{align}
        g_j \sigma_{\lambda} g_j &= \frac{g_j V_{\lambda} \mathcal{G}_{G(\ket{s})}(\rho) V_{\lambda} g_j}{q_\lambda}\nonumber\\
        & =  \frac{V_{\lambda} g_j \mathcal{G}_{G(\ket{s})}(\rho) g_j  V_{\lambda}}{q_\lambda}  \nonumber\\
        & =\frac{V_{\lambda} \mathcal{G}_{G(\ket{s})}(\rho)   V_{\lambda}}{q_\lambda}  = \sigma_{\lambda}.  
        \end{align}
  
        The second equality holds since $g_j$ commutes with $Q$, while the third equation follows since $g_j \in G(\ket{s})$ and $\mathcal{G}_{G(\ket{s})}(\rho)$ is $G(\ket{s})$-invariant. This proves our claim.
        
        Therefore,
        \begin{equation} \label{eq:a4}
        \begin{split}
             \mathcal{A}(\ket{\psi}) &= S(\rho \Vert \mathcal{G}_{G(\ket{s})}(\rho)) \\ 
             &\geq \sum_{\lambda} p_\lambda S(\rho_\lambda \Vert \sigma_\lambda) \\
             &\geq \sum_{\lambda} p_i A_{G(\ket{S^{\prime}})}(\rho_\lambda ) \\
             &\geq \sum_{\lambda} p_\lambda \mathcal{A}(\ket{\psi_\lambda}) .
        \end{split}
        \end{equation}
        The second line was proven in Ref.~\cite{vedral1998entanglement} (and also used to prove the strong monotonicity of the relative entropy of magic, see Theorem 7 in Ref.~\cite{veitch2014resource}).
        The third line follows from the fact that $\mathcal{G}_{G(\ket{s^{\prime}})}(\rho_i)$ is the nearest  $G(\ket{s^{\prime}})$-invariant state to $\rho_i$~\cite{gour2009measuring}. 
\end{enumerate}
\end{proof}

Next, we prove
\begin{lem}\label{lem:appending}
    The BMSA is invariant when discarding stabilizer qubits, namely
    \begin{equation}
        \mathcal{A}(\ket{u}\otimes \ket{\psi})=\mathcal{A}(\ket{\psi})\,,
    \end{equation}
    for any stabilizer state $\ket{u}$.
\end{lem}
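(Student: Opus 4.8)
The plan is to prove the two inequalities $\mathcal{A}(\ket{u}\otimes\ket{\psi}) \leq \mathcal{A}(\ket{\psi})$ and $\mathcal{A}(\ket{u}\otimes\ket{\psi}) \geq \mathcal{A}(\ket{\psi})$ separately. For the first direction, I would start from an optimal stabilizer state $\ket{s_\psi}$ achieving the minimum for $\ket{\psi}$, and build the product stabilizer state $\ket{u}\otimes\ket{s_\psi}$ on the larger Hilbert space. Its unsigned stabilizer group $G(\ket{u}\otimes\ket{s_\psi})$ is generated by the generators of $G(\ket{u})$ (acting on the first register) together with those of $G(\ket{s_\psi})$ (acting on the second). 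Applying the twirl $\mathcal{G}_{G(\ket{u}\otimes\ket{s_\psi})}$ to $\ket{u}\bra{u}\otimes\rho$ factorizes, using Eq.~\eqref{eq:rho_sym} and the product structure of the Pauli expansion, as $\ket{u}\bra{u}\otimes\mathcal{G}_{G(\ket{s_\psi})}(\rho)$ (here the twirl over $G(\ket{u})$ leaves the pure stabilizer state $\ket{u}\bra{u}$ untouched since it is already $G(\ket{u})$-invariant, while the twirl over $G(\ket{s_\psi})$ acts only on the second factor). Since von Neumann (or R\'enyi) entropy is additive over tensor products and $S(\ket{u}\bra{u})=0$, we get $S(\mathcal{G}_{G(\ket{u}\otimes\ket{s_\psi})}(\ket{u}\bra{u}\otimes\rho)) = S(\mathcal{G}_{G(\ket{s_\psi})}(\rho)) = \mathcal{A}(\ket{\psi})$, and since the BMSA is a minimum over all pure stabilizer states this particular choice gives the desired upper bound.

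For the reverse direction, the cleanest route is to invoke monotonicity: discarding the qubits carrying the stabilizer state $\ket{u}$ is a legitimate stabilizer protocol operation (partial trace of stabilizer qubits), and since the BMSA is a monotone (which follows from strong monotonicity, proven earlier in this appendix), we have $\mathcal{A}(\ket{\psi}) = \mathcal{A}(\Tr_u[\ket{u}\bra{u}\otimes\rho]) \leq \mathcal{A}(\ket{u}\otimes\ket{\psi})$. Alternatively, and more self-containedly, one can argue directly: given an optimal stabilizer state $\ket{S}$ for $\ket{u}\otimes\ket{\psi}$, one would need to show that the twirled state still carries at least as much entropy as the best twirl for $\ket{\psi}$ alone — but this is exactly the content of the subadditivity/monotonicity already established, so leaning on monotonicity is the economical choice.

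The main obstacle I anticipate is the factorization step in the first direction: one must be careful that $G(\ket{u}\otimes\ket{s_\psi})$ is precisely the group generated by the ``lifted'' generators of the two factors and contains no extra elements mixing the registers in an unexpected way — but since any Pauli string in $\mathcal{P}_{N+|u|}$ splits as a tensor product of a Pauli on the $\ket{u}$-register and a Pauli on the $\ket{\psi}$-register, and such a product stabilizes $\ket{u}\otimes\ket{s_\psi}$ up to sign iff each factor does, this is routine. One also has to confirm that the twirl of $\ket{u}\bra{u}$ over $G(\ket{u})$ returns $\ket{u}\bra{u}$ itself, which is immediate because $\ket{u}$ is a stabilizer state with stabilizer group (up to signs) exactly $G(\ket{u})$, so it is already symmetric. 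With these remarks in place the lemma follows, and in fact the argument shows the stronger statement $\mathcal{A}_\alpha(\ket{u}\otimes\ket{\psi}) = \mathcal{A}_\alpha(\ket{\psi})$ for all $\alpha$, not just $\alpha=1$.
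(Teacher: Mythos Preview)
Your upper bound $\mathcal{A}(\ket{u}\otimes\ket{\psi}) \leq \mathcal{A}(\ket{\psi})$ is correct and is essentially the subadditivity argument the paper invokes (you spell out the factorization of the twirl explicitly, whereas the paper simply cites the subadditivity property stated earlier).

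The problem is the reverse inequality. You appeal to monotonicity under discarding qubits, but in the paper's logical structure this lemma is precisely the ingredient used to \emph{establish} that monotonicity: after proving Lemmas~\ref{lem:measurements} and~\ref{lem:appending}, the paper combines them to show $\mathcal{A}$ is a stabilizer monotone, and the step ``$\mathcal{A}$ is invariant if we discard/append ancillary stabilizer qubits'' is exactly Lemma~\ref{lem:appending}. So your argument is circular. Note also that Lemma~\ref{lem:measurements} alone does not rescue you: a Pauli measurement on the $\ket{u}$ register leaves the post-measurement state on the full Hilbert space, and reducing to $\ket{\psi}$ on the smaller space is again the content of the present lemma.

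The paper instead proves $\mathcal{A}(\ket{\psi}) \leq \mathcal{A}(\ket{u}\otimes\ket{\psi})$ directly. Starting from the optimal stabilizer $\ket{s}$ for $\ket{u}\otimes\ket{\psi}$, it splits into two cases according to whether $\Tr_1[\ket{s}\bra{s}]$ is pure or mixed. In the pure case it uses that the $G$-asymmetry is non-increasing under partial trace when $G$ factorizes~\cite{vaccaro2008tradeoff}. In the mixed case it computes the partial trace of the twirled state via Eq.~\eqref{eq:twirled_stab}, extends the (now under-sized) stabilizer group $G(\rho_s)$ to a full pure-state stabilizer group $G(\ket{s'})$ on the second register, checks that the traced twirl is $G(\ket{s'})$-invariant, and concludes via the relative-entropy inequality as in Eq.~\eqref{eq:a4}. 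Your ``alternative, more self-contained'' sketch is pointing in this direction, but you dismissed it as equivalent to monotonicity rather than carrying it out; the actual work lies in handling the entangled-$\ket{s}$ case.
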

\begin{proof}
Let us denote by $\mathcal{H}_1$, $\mathcal{H}_2$ the Hilbert spaces corresponding to $\ket{u}$ and $\ket{\psi}$, respectively, so that
\begin{equation}
    \ket{\psi}\bra{\psi}=\Tr_1\left[\ket{u}\bra{u}\otimes \ket{\psi}\bra{\psi}\right]\,.
\end{equation}
We first prove that
\begin{equation}\label{eq:ineq_1}
    \mathcal{A}(\ket{\psi}\bra{\psi})\leq    \mathcal{A}(\ket{u}\bra{u}\otimes \ket{\psi}\bra{\psi})
\end{equation}
To see this, let $\ket{s}\in \mathcal{H}_1\otimes \mathcal{H}_2$ be a stabilizer state and $G(\ket{s})$ the stabilizer group such that $\mathcal{A}(\ket{u}\otimes\ket{\psi}) = A_{G(\ket{s})}(\rho)$, where $\rho = \ket{u}\bra{u}\otimes \ket{\psi}\bra{\psi}$. We again consider two cases:
    \begin{enumerate}
        \item $\Tr_1\left[\ket{s} \bra{s}\right]$ is a pure state. \\
        It was shown in \cite{vaccaro2008tradeoff} that $A_{G}$ is non-increasing under partial trace when $G$ is factorized, from which the result follows immediately.
        \item $\Tr_1\left[\ket{s} \bra{s}\right]$ is a mixed state. \\
        Let $\Tr_1\left[\ket{s} \bra{s}\right]= \rho_s$. Let $G(\rho_s)$ be the stabilizer group of $\rho_s$, which is generated by $g_1, \cdots,g_r$ where $r<N-1$. Using 
\begin{equation} \label{eq:twirled_stab}
\mathcal{G}_{G(| s \rangle)}(\rho) = \frac{1}{2^N} \sum_{P \in G(| s \rangle)} \Tr[\rho P] P,
\end{equation}
we have 
        \begin{equation} \label{eq:trace_twirled}
            \Tr_1 \mathcal{G}_{G(| s \rangle)}(\rho) = \frac{1}{2^N} \sum_{P \in G(\rho_s)} \Tr[\rho P] P.
        \end{equation}
        We can add generators $g_{r+1}, \cdots, g_{N-1}$, all mutually commute with $g_j, j \leq r$, to $G(\rho_s)$ to make it a stabilizer group of a pure stabilizer state, say $G(\ket{s'})$. It follows from Eq. \eqref{eq:trace_twirled} that $\Tr_1 \mathcal{G}_{G(| s \rangle)}(\rho)$ is also invariant under $g_j, j > r$. Thus, $\Tr_1 \mathcal{G}_{G(| s \rangle)}(\rho)$ is $G(\ket{s'})$-invariant. The proof then proceeds similarly as in Eq. \eqref{eq:a4}.
        \end{enumerate}
On the other hand, by sub-additivity, we have
\begin{align}
\mathcal{A}(\ket{u}\bra{u}\otimes\ket{\psi}\bra{\psi})&\leq\mathcal{A}(\ket{u}\bra{u})+\mathcal{A}(\ket{\psi}\bra{\psi})\nonumber\\
&=\mathcal{A}(\ket{\psi}\bra{\psi})\,.\label{eq:ineq_2}
\end{align}
Finally, the statement of the Lemma follows combining Eqs.~\eqref{eq:ineq_1} and ~\eqref{eq:ineq_2}

\end{proof}
With the help of Lemmas~\ref{lem:measurements} and ~\ref{lem:appending} we can easily prove that  $\mathcal{A}$ is a stabilizerness monotone.

First, it is immediate to see that $\mathcal{A}$ is invariant under unitary Clifford operations. Next, Lemma~\ref{lem:appending} implies immediately that $\mathcal{A}$ is invariant if we discard/append ancillary qubits initialized in pure stabilizer states. Now, let $\mathcal{E}$ be a stabilizer protocol, consisting in a local measurement followed by conditioned Clifford operations that are either unitary operators or the removal/addition of an ancilla. Suppose that $\mathcal{E}(\ket{\psi}\bra{\psi})=\ket{\phi}\bra{\phi}$. We denote by $M_k$ the Kraus operator encoding the update of the system state after obtaining the measurement outcome $k$, so that
\begin{equation}
    \mathcal{E}(\cdot)= \sum_k M_k(\cdot) M^\dagger_k\,.
\end{equation}
Then, setting $p_k$ as the probability of obtaining the measurement outcome $k$ Lemma~\ref{lem:measurements} implies
\begin{equation}
    \sum_k p_k \mathcal{A}(M_k \rho_k M_k^\dagger)\leq \mathcal{A}(\ket{\psi}\bra{\psi})
\end{equation}
On the other hand, since $\mathcal{E}(\ket{\psi}\bra{\psi})$ is pure, $M_k \rho_k M_k^\dagger=\ket{\phi}\bra{\phi}$ for all $k$, so that 
$\sum_k p_k \mathcal{A}(M_k \rho_k M_k^\dagger)= \mathcal{A}(\ket{\phi}\bra{\phi})$ which proves monotonicity. The same argument applies for arbitrary sequences of measurements followed by Clifford operations.

Finally, the strong monotonicity is precisely the statement of Lemma~\ref{lem:measurements}.

\subsection{Proof of Eq.~\eqref{eq:to_prove_1}}
\label{sec:proof_inequality}
In this Appendix we prove Eq.~\eqref{eq:to_prove_1}. To this end, we introduce the Gowers distance~\cite{arunachalam2024tolerant}
    \begin{equation}
        \text{Gowers}(\ket{\psi},3)^8=2^{-N}\sum_{P\in \mathcal{P}_N} \bra{\psi}P\ket{\psi}^4\,.
    \end{equation}
    The Gowers distance is related to the SRE via
    \begin{equation}
        \text{Gowers}(\ket{\psi},3)^8=\exp(-M_2(\ket{\psi}))\,.
    \end{equation}
    Then, Ref.~\cite{arunachalam2024notepolynomialtimetoleranttesting} showed that there exist a constant $C>1$ such that
    \begin{equation}
        F_\text{STAB}(\ket{\psi})\geq \text{Gowers}(\ket{\psi},3)^{8C}\,.
    \end{equation}
    This directly implies 
        \begin{equation}
        D_\text{min}(\ket{\psi})\leq C M_2(\ket{\psi})
    \end{equation}
    and together with Eq.~\eqref{eq:upper_dmin} we get
        \begin{equation}
       \mathcal{A}_2(\ket{\psi})\leq 2C M_2(\ket{\psi})\,.
    \end{equation}

\section{Rényi BMSA}
\label{sec:renyi_BMSA}

In this appendix we discuss the properties of the Rényi-$2$ BMSA. It has the simple form
\begin{equation} \label{eq:A2}
    \mathcal{A}_{2} ( \ket{\psi}) = \min_{|S \rangle \in \text{PSTAB}_N} -\log \sum_{P\in G(\ket{S})} \frac{\Tr[\rho P]^2}{2^N} .
\end{equation}
As anticipated, this is not a strong monotone, as can be shown using the construction in Ref.~\cite{haug2023stabilizer}. We can construct a strong monotone using the linear version:
\begin{equation} \label{eq:A2_lin}
    \mathcal{A}_{2}^{\mathrm{lin}} ( \ket{\psi}) = \min_{|S \rangle \in \text{PSTAB}_N} 1- \sum_{P\in G(\ket{S})} \frac{\Tr[\rho P]^2}{2^N} .
\end{equation}
The proof is as follows.
\begin{lem}\label{lem:measurements_Alin}
$\mathcal{A}_{2}^{\mathrm{lin}} ( \ket{\psi})$ is non-increasing, on averaged, under Pauli measurements
\end{lem}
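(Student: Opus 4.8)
The plan is to mirror the two-case structure of the proof of Lemma~\ref{lem:measurements}. First I would fix the measured Pauli $Q$, set $V_\lambda=(I+\lambda Q)/2$, $p_\lambda=\Tr[\rho V_\lambda]$, $\rho_\lambda=V_\lambda\rho V_\lambda/p_\lambda$, and pick a pure stabilizer state $\ket{s}$ attaining the minimum, so that $\mathcal{A}_2^{\mathrm{lin}}(\ket{\psi})=1-\Tr[\mathcal{G}_{G(\ket{s})}(\rho)^2]$ (using that for a pure stabilizer state $G(\ket{s})^{\perp}=G(\ket{s})$, the quantity $2^{-N}\sum_{P\in G(\ket{s})}\Tr[\rho P]^2$ is exactly the purity of the symmetrized state). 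Since each $\rho_\lambda$ is pure and $\mathcal{A}_2^{\mathrm{lin}}(\rho_\lambda)\le 1-\Tr[\mathcal{G}_{G(\ket{t})}(\rho_\lambda)^2]$ for any stabilizer state $\ket{t}$, it is enough to exhibit, for each outcome $\lambda$, a stabilizer state $\ket{s_\lambda}$ with $\sum_\lambda p_\lambda\,\Tr[\mathcal{G}_{G(\ket{s_\lambda})}(\rho_\lambda)^2]\ge\Tr[\mathcal{G}_{G(\ket{s})}(\rho)^2]$; this at once gives $\sum_\lambda p_\lambda\mathcal{A}_2^{\mathrm{lin}}(\rho_\lambda)\le 1-\sum_\lambda p_\lambda\Tr[\mathcal{G}_{G(\ket{s_\lambda})}(\rho_\lambda)^2]\le\mathcal{A}_2^{\mathrm{lin}}(\ket{\psi})$.

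In the first case, $Q\in G(\ket{s})$, the measurement commutes with the twirl: $V_\lambda$ commutes with every element of $G(\ket{s})$, $\Tr[\mathcal{G}_{G(\ket{s})}(\rho)V_\lambda]=p_\lambda$, and with $\ket{s_\lambda}=\ket{s}$ one gets $\mathcal{G}_{G(\ket{s})}(\rho_\lambda)=V_\lambda\mathcal{G}_{G(\ket{s})}(\rho)V_\lambda/p_\lambda$. Writing $\sigma=\mathcal{G}_{G(\ket{s})}(\rho)$ and using $V_\lambda^2=V_\lambda$ together with $[\sigma,V_\lambda]=0$, one finds $\Tr[\mathcal{G}_{G(\ket{s})}(\rho_\lambda)^2]=p_\lambda^{-2}\Tr[\sigma^2V_\lambda]$; multiplying by $p_\lambda$, summing over $\lambda$, and using $p_\lambda\le1$ (each summand being non-negative) yields $\sum_\lambda p_\lambda\Tr[\mathcal{G}_{G(\ket{s})}(\rho_\lambda)^2]\ge\sum_\lambda\Tr[\sigma^2V_\lambda]=\Tr[\sigma^2]$, which is exactly what is needed.

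In the second case, $Q\notin G(\ket{s})$, let $\ket{s'}$ be the post-measurement stabilizer state; exactly as in Lemma~\ref{lem:measurements}, its unsigned group is $G(\ket{s'})=\langle Q,g_2,\dots,g_N\rangle$ with $g_2,\dots,g_N$ generating $H:=G(\ket{s})\cap C(Q)$, where $C(Q)$ denotes the Paulis commuting with $Q$; this $H$ is an index-$2$ subgroup of $G(\ket{s})$. Taking $\ket{s_\lambda}=\ket{s'}$ and using that $\rho_\lambda$ is invariant under conjugation by $Q$ while every $h\in H$ commutes with $V_\lambda$, one shows $\mathcal{G}_{G(\ket{s'})}(\rho_\lambda)=p_\lambda^{-1}V_\lambda\mathcal{G}_{H}(\rho)V_\lambda$. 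Expanding everything in the Pauli vector $r_P=\Tr[\rho P]$, the target inequality becomes an elementary relation among the coset weights $A_H=\sum_{P\in H}r_P^2$, $A_Q=\sum_{P\in QH}r_P^2$, $A_1=\sum_{P\in g_1H}r_P^2$ (with $g_1\in G(\ket{s})$ anticommuting with $Q$), the cross term $B_H=\sum_{P\in H}r_Pr_{PQ}$, and $r_Q$ — concretely, $2(A_H+A_Q-2r_QB_H)/(1-r_Q^2)\ge A_H+A_1$ — to be established from Cauchy--Schwarz ($|B_H|\le\sqrt{A_HA_Q}$), the optimality of $\ket{s}$ ($A_1\ge A_Q$), and the pure-state normalization $\sum_P r_P^2=2^N$.

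The hard part will be precisely this last estimate in the second case. If the optimal basis already respects the measured Pauli, so that $A_Q=A_1$, the inequality reduces to $2(A_H+A_Q-2r_QB_H)/(1-r_Q^2)\ge A_H+A_Q$, which follows from $|B_H|\le\sqrt{A_HA_Q}$ and $|r_Q|\le1$ by elementary triangle/Cauchy--Schwarz manipulations of the kind used in the main text for the $D_{\mathrm{min}}$ bound. In general, however, $A_1$ can strictly exceed $A_Q$ (since $G(\ket{s})$ is the global optimum), and then one must genuinely invoke the global normalization of the pure state to bound $A_1$ from above in terms of $A_H$, $A_Q$, $B_H$, and $r_Q$; getting this bookkeeping right is the crux of the argument. (Alternatively, one could try to deduce the lemma from strong monotonicity of a Tsallis/R\'enyi-$2$ coherence monotone in the spirit of Sec.~\ref{sec:coherence}, but recasting the Pauli measurement as an incoherent operation in that framework does not appear to make matters easier.)
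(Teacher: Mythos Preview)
Your setup and Case~1 are fine and parallel the paper. The reduction in Case~2 to the coset-weight inequality
\[
\frac{2\bigl(A_H+A_Q-2r_QB_H\bigr)}{1-r_Q^2}\ \ge\ A_H+A_1
\]
is also correct. But you stop short of proving it, and the tools you list (Cauchy--Schwarz, optimality of $\ket{s}$, and a vague appeal to ``global normalization'') are not quite organized into a proof. In particular, optimality of $\ket{s}$ only gives $A_1\ge A_Q$, which goes the wrong way and is never used in the paper.

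The paper closes the gap in two clean steps. First, it proves the \emph{termwise} inequality
\[
\sum_{\lambda}p_\lambda\,\Tr[\rho_\lambda P]^2-\Tr[\rho P]^2
=\frac{\bigl(\Tr[\rho P]\,\Tr[\rho Q]-\Tr[\rho PQ]\bigr)^2}{1-\Tr[\rho Q]^2}\ \ge 0
\]
for every $P$ commuting with $Q$. Summed over $P\in H$ this is exactly your Cauchy--Schwarz observation and yields the uniform bound $\text{LHS}\ge 2A_H$ (equivalently $r_Q^2A_H-2r_QB_H+A_Q=(\,|r_Q|\sqrt{A_H}-\sqrt{A_Q}\,)^2\ge 0$), valid for all $A_1$, not only when $A_1=A_Q$. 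Second, and this is the ingredient you are missing, the paper invokes the identity of Grewal \emph{et al.}\ for pure states,
\[
\sum_{P\in H}\Tr[\rho P]^2\;=\;\frac{|H|}{2^N}\sum_{P\in H^{\perp}}\Tr[\rho P]^2,
\]
with $|H|=2^{N-1}$ and $H^{\perp}=H\cup QH\cup g_1H\cup Qg_1H$. This gives $A_H=A_Q+A_1+A_{Qg_1}\ge A_1$, hence $2A_H\ge A_H+A_1$, completing the chain. This identity is precisely the ``global normalization of the pure state'' you were reaching for; without it (or an equivalent uncertainty-type relation), the inequality cannot be closed, since $A_1$ is otherwise unconstrained by $A_H,A_Q,B_H,r_Q$.
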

\begin{proof}
Under the measurement of a Pauli operator $Q$, the state $\rho = \ket{\psi} \bra{\psi}$ is projected to $\rho_{\lambda} = V_{\lambda} \rho V_{\lambda} / p_{\lambda}$ for $\lambda=\pm 1$, where $V_{\lambda} = (I + \lambda Q)/2$ and $p_{\lambda} = \Tr \rho V_{\lambda}$. Let $G(\ket{s})$ be the stabilizer group such that $\mathcal{A}(\ket{\psi}) = A_{G(\ket{s})}(\rho) $. We will first show that, for any Pauli operator $P$ that commutes with $Q$, the following holds:
    \begin{equation} \label{eq:lemma2}
        \sum_{\lambda} p_\lambda  \Tr[\rho_\lambda P]^2  \geq  \Tr[\rho P]^2 .
    \end{equation}
    Indeed, we have
    \begin{equation}
    \begin{split}
        \Tr[\rho_\lambda P] &= \frac{1}{p_\lambda}  \Tr[V_\lambda \rho V_\lambda P] \\
        &= \frac{1}{p_\lambda}  \Tr[V_\lambda \rho P V_\lambda] \\
        &= \frac{1}{p_\lambda}  \Tr[\rho P V_\lambda] \\
        &= \frac{1}{2 p_\lambda}  \Tr[\rho P] + \lambda \Tr[\rho PQ].  \\
    \end{split}
    \end{equation}
    We also have $p_\lambda = (1 + \lambda \Tr[\rho Q])/2$. Plugging these to the l.h.s. of Eq. \eqref{eq:lemma2}, we get
    \begin{equation}
    \begin{split}
        \sum_{\lambda} p_\lambda  \Tr[\rho_\lambda P]^2 
        = \left(\Tr[\rho P]^2 + \Tr[\rho PQ]^2 - 
        \right.
        \\
        \left.
        2\Tr[\rho P]\Tr[\rho PQ]\Tr[\rho Q] \vphantom{\Tr[\rho PQ]^2 -}\right)/\left(1 - \Tr[\rho Q]^2\right).     \\
    \end{split}
    \end{equation}
    Therefore, 
    \begin{equation}
    \begin{split}
        \sum_{\lambda} p_\lambda  \Tr[\rho_\lambda P]^2 -  \Tr[\rho P]^2 
        &= \frac{(\Tr[\rho P] \Tr[\rho Q] - \Tr[\rho PQ])^2}{(1 - \Tr[\rho Q]^2)} \\
        & \geq 0.  \\ 
    \end{split}
    \end{equation}

    There are two cases:
    \begin{enumerate}
        \item $Q \in G(\ket{s})$ \\
        Summing Eq. \eqref{eq:lemma2} over all $P \in G(\ket{s})$ directly leads to the desired result.
        \item $Q \notin G(\ket{s})$ \\
        Let the Pauli measurement maps $\ket{s}$ to $\ket{s^{\prime}}$.  Let $Q, g_2, \cdots, g_N $ be the generators of $G(\ket{s^{\prime}})$, such that $g_j \in G(\ket{s})$. Let $T$ be the group generated by $g_j, j\in \{2, \cdots, N\}$. We have
        \begin{equation} \label{eq:24}
            \begin{split}
                \sum_{\lambda} p_\lambda \sum_{P\in G(\ket{s'})} \Tr[\rho_\lambda P]^2 &= 2\sum_{\lambda} p_\lambda \sum_{P\in T} \Tr[\rho_\lambda P]^2 \\
                &\geq 2\sum_{P\in T} \Tr[\rho P]^2. \\
            \end{split}
        \end{equation}
        The last line follows by summing Eq. \eqref{eq:lemma2} over all $P \in T$. Now, let $T^{\perp}$ be the group of Pauli operators that commute with $T$. We have
        \begin{equation} \label{eq:25}
            \begin{split}
                \sum_{P\in T} \Tr[\rho P]^2 &= \frac{|T|}{2^N} \sum_{P\in T^{\perp}} \Tr[\rho P]^2 \\
                &\geq \frac{1}{2} \sum_{P\in G(\ket{s'})} \Tr[\rho P]^2. \\
            \end{split}
        \end{equation}
        The first line is the Theorem 3.1 in \cite{grewal2024improved}, while the last line comes from the fact that $G(\ket{s'}) \subset T^{\perp}$. Finally, the result follows by combining Eq. \eqref{eq:24} and Eq. \eqref{eq:25}.
        \end{enumerate}
\end{proof}

Next, we prove
\begin{lem}\label{lem:appending_alin}
$\mathcal{A}_{2}^{\mathrm{lin}} ( \ket{\psi})$ is invariant when discarding stabilizer qubits, namely
    \begin{equation}
    \mathcal{A}_{2}^{\mathrm{lin}}(\ket{u}\otimes \ket{\psi})=\mathcal{A}_{2}^{\mathrm{lin}}(\ket{\psi})\,,
    \end{equation}
    for any stabilizer state $\ket{u}$.
\end{lem}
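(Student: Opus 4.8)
The plan is to adapt the two-inequality strategy used in the proof of Lemma~\ref{lem:appending}, establishing $\mathcal{A}_{2}^{\mathrm{lin}}(\ket{u}\otimes\ket{\psi}) \le \mathcal{A}_{2}^{\mathrm{lin}}(\ket{\psi})$ and $\mathcal{A}_{2}^{\mathrm{lin}}(\ket{\psi}) \le \mathcal{A}_{2}^{\mathrm{lin}}(\ket{u}\otimes\ket{\psi})$ separately. Write $N=N_1+N_2$, with $N_1,N_2$ the qubit counts of $\ket{u},\ket{\psi}$, set $\rho=\ket{u}\bra{u}\otimes\ket{\psi}\bra{\psi}$, and for a pure stabilizer state $\ket{S}$ on the bipartite system let $\mathcal{Q}_{\ket{S}}(\rho)=2^{-N}\sum_{P\in G(\ket{S})}\Tr[\rho P]^2$ be the stabilizer purity appearing in Eq.~\eqref{eq:A2_lin}, so that $\mathcal{A}_{2}^{\mathrm{lin}}(\ket{u}\otimes\ket{\psi})=1-\max_{\ket{S}}\mathcal{Q}_{\ket{S}}(\rho)$. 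The one elementary input I will use throughout is that, for the stabilizer state $\ket{u}$, one has $\langle u|P_1|u\rangle\in\{0,\pm1\}$ for every Pauli string $P_1\in\mathcal{P}_{N_1}$, equal to $\pm1$ exactly when $P_1\in G(\ket{u})$; writing a bipartite Pauli string as $P=P_1\otimes P_2$ and using $\Tr[\rho P]^2=\langle u|P_1|u\rangle^2\langle\psi|P_2|\psi\rangle^2$, only the terms with $P_1\in G(\ket{u})$ survive in $\mathcal{Q}_{\ket{S}}(\rho)$.

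For the first inequality I would exhibit the trial state $\ket{S}=\ket{u}\otimes\ket{s^\ast}$, where $\ket{s^\ast}$ attains the minimum defining $\mathcal{A}_{2}^{\mathrm{lin}}(\ket{\psi})$. Its unsigned stabilizer group is $\{P_1\otimes P_2: P_1\in G(\ket{u}),\,P_2\in G(\ket{s^\ast})\}$, of size $2^N$, and for such $P$ the factor $\langle u|P_1|u\rangle^2=1$, so summing over $P_1$ contributes $2^{N_1}$ and $\mathcal{Q}_{\ket{S}}(\rho)=2^{-N_2}\sum_{P_2\in G(\ket{s^\ast})}\langle\psi|P_2|\psi\rangle^2=1-\mathcal{A}_{2}^{\mathrm{lin}}(\ket{\psi})$. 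Since $\mathcal{A}_{2}^{\mathrm{lin}}(\ket{u}\otimes\ket{\psi})$ is a minimum over all stabilizer states, the inequality follows. This replaces the appeal to sub-additivity used in the $\mathcal{A}_1$ proof, which is not available for the linear quantity.

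For the reverse inequality I would show $\mathcal{Q}_{\ket{S}}(\rho)\le 1-\mathcal{A}_{2}^{\mathrm{lin}}(\ket{\psi})$ for \emph{every} pure stabilizer state $\ket{S}$, after which minimizing over $\ket{S}$ finishes the proof. Introduce $\mathcal{S}_0=\{P_1\otimes P_2\in G(\ket{S}): P_1\in G(\ket{u})\}$, a subgroup of $G(\ket{S})$ (it is the intersection of $G(\ket{S})$ with the set of strings whose first factor lies in $G(\ket{u})$), and the projection $\pi_2:\mathcal{S}_0\to\mathcal{P}_{N_2}$, $P_1\otimes P_2\mapsto P_2$. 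Its image $K$ is a set of mutually commuting Pauli strings on $N_2$ qubits: two elements of $G(\ket{S})$ commute while their first factors, lying in the abelian group $G(\ket{u})$, also commute, so their second factors must commute; hence $|K|\le 2^{N_2}$. The kernel $H_1=\{P_1:P_1\otimes I\in\mathcal{S}_0\}$ is a subgroup of $G(\ket{u})$, so $|H_1|\le 2^{N_1}$ and $|\mathcal{S}_0|=|H_1|\,|K|$, with every $P_2\in K$ hit by exactly $|H_1|$ elements of $\mathcal{S}_0$. Grouping the sum accordingly, $\mathcal{Q}_{\ket{S}}(\rho)=2^{-N}|H_1|\sum_{P_2\in K}\langle\psi|P_2|\psi\rangle^2\le 2^{-N_2}\sum_{P_2\in K}\langle\psi|P_2|\psi\rangle^2$; completing $K$ to any maximal commuting group $G(\ket{s_K})\supseteq K$ of an $N_2$-qubit stabilizer state only adds non-negative terms, so $\mathcal{Q}_{\ket{S}}(\rho)\le 2^{-N_2}\sum_{P\in G(\ket{s_K})}\langle\psi|P|\psi\rangle^2\le 1-\mathcal{A}_{2}^{\mathrm{lin}}(\ket{\psi})$.

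I expect the main obstacle to be the reverse inequality, specifically controlling the case where $\ket{S}$ is entangled across the cut (equivalently $\Tr_1\ket{S}\bra{S}$ mixed), which in Lemma~\ref{lem:appending} required a separate argument built on joint convexity and monotonicity of the relative entropy — tools unavailable here. The projection-and-fiber-counting argument above is designed to handle entangled and product $\ket{S}$ uniformly; its correctness rests on two structural facts I would spell out, namely that $G(\ket{u})$ is an abelian Pauli group and that $\langle u|\cdot|u\rangle$ is $\{0,\pm1\}$-valued, together with the routine group theory ($\mathcal{S}_0$ a subgroup, the first-isomorphism counting for $\pi_2$, and the maximal-commuting extension of $K$).
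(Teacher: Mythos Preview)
Your argument is correct. Both directions are sound: the forward inequality via the product trial state is immediate, and the reverse inequality via the subgroup $\mathcal{S}_0$, the projection $\pi_2$, the fiber count $|H_1|\le 2^{N_1}$, and the extension of the commuting image $K$ to a full stabilizer group on the second factor all go through as stated. The one place to be slightly careful is that the ``groups'' in play are unsigned Pauli groups (subsets of $\mathcal{P}_N$ closed under multiply-and-strip-phase); once this is understood your first-isomorphism-style counting is valid.

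Your route is genuinely different from the paper's. The paper observes that the quantity being maximised, $2^{-N}\sum_{P\in G(\ket{S})}\Tr[\rho P]^2=\Tr\!\bigl(\mathcal{G}_{G(\ket{S})}(\rho)^2\bigr)=\Tr\!\bigl(\rho\,\mathcal{G}_{G(\ket{S})}(\rho)\bigr)$, is for pure $\rho$ precisely the fidelity between $\rho$ and its symmetrisation; it then invokes the monotonicity of fidelity under partial trace and re-runs the two-case structure of Lemma~\ref{lem:appending} verbatim, with fidelity playing the role previously played by the relative entropy. Your proof, by contrast, is purely combinatorial: you never pass through fidelity or any information-theoretic monotonicity, you handle product and entangled $\ket{S}$ in one stroke with no case split, and the only inputs are the $\{0,\pm1\}$-valuedness of stabilizer expectation values and elementary group counting. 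The trade-off is that the paper's argument is a one-liner that highlights a conceptual parallel with Lemma~\ref{lem:appending} and would generalise to any setting where an analogous data-processing inequality holds, while yours is more explicit and self-contained but tied to the specific Pauli structure of $\mathcal{A}_2^{\mathrm{lin}}$.
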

\begin{proof}
    Note that $ \Tr(\mathcal{G}_{G(\ket{s})}(\rho)^2) = \Tr(\rho \mathcal{G}_{G(\ket{s})}(\rho))$, which is equivalent to the fidelity between $\rho$ and $\mathcal{G}_{G(\ket{s})}(\rho)$. The proof then proceeds similarly as in Lemma \ref{lem:appending}, using the fact that the fidelity is also monotonic under partial trace.
\end{proof}

Note that, using the strong monotonicity of $\mathcal{A}_{2}^{\mathrm{lin}} ( \ket{\psi})$, one can show that $\mathcal{A}_{2}( \ket{\psi})$ is a genuine monotone (although not strong), using the same arguments as in Ref.~\cite{leone2024stabilizer}.

Finally, we comment on efficient measurement schemes for $\mathcal{A}_{2}( \ket{\psi})$ and $\mathcal{A}_{2}^{\mathrm{lin}} ( \ket{\psi})$. 
This can be achieved by using Bell difference sampling~\cite{montanaro2017learning} as a measurement primitive, where we outline two different methods.

In the first approach, one can adapt the algorithm of  Ref.~\cite{grewal2024improved} for measuring the min-relative entropy of magic, which relies on Bell difference sampling, to also estimate $\mathcal{A}_{2}( \ket{\psi})$.

In the second approach, we define the convoluted BMSA
\begin{equation} \label{eq:A2_conv}
    \mathcal{A}_{2}^\text{conv} ( \ket{\psi}) = \min_{|S \rangle \in \text{PSTAB}_N} -\log \sum_{P_{\boldsymbol{a}}\in G(\ket{S})} \sum_{\boldsymbol{a}} \Xi(\boldsymbol{a}) \Xi(\boldsymbol{a} \oplus \boldsymbol{b})\,,
\end{equation}
as another magic monotone which is closely related to the R\'enyi-2 BMSA. The convoluted BMSA is constructed by replacing the characteristic function $\Xi(\boldsymbol{a})=\langle \psi | P_{\boldsymbol{a}} | \psi \rangle^{2}/2^N$ of the BMSA by its convolution $Q(\boldsymbol{b})=\sum_{\boldsymbol{a}} \Xi(\boldsymbol{a}) \Xi(\boldsymbol{a} \oplus \boldsymbol{b})=\langle \psi | P_{\boldsymbol{a}} | \psi^\ast \rangle^{2}/2^N$ where there is a bijective mapping between $\boldsymbol{a}\in\{0,1\}^{2N}$ and Pauli $P_{\boldsymbol{a}}$~\cite{montanaro2017learning,gross2021schur}. One can show that $\mathcal{A}_{2}^\text{conv} ( \ket{\psi})$ has the same properties as $\mathcal{A}_{2} ( \ket{\psi})$. Conveniently, the convolution $Q(\boldsymbol{b})$ can be efficiently sampled via Bell difference sampling~\cite{montanaro2017learning}, allowing one to estimate $ \mathcal{A}_{2}^\text{conv} ( \ket{\psi})$ directly from the sampling outcomes.

\section{Branch and bound}
In this section, we provide further details on the method explained in Sec.~\eqref{sec:branch_and_bound}.

\subsection{Derivation of lower bound in Eq. \eqref{eq:bound_r}} \label{sec:deriv_bound}
We first derive the lower bound given in Eq.~\eqref{eq:bound_r}. We will show this for $\alpha=2$ and note that similar steps can be carried out for any integer $\alpha>2$. By expanding the power and after some algebra, we can show
  \begin{equation}
  \begin{split}
\sum_{Q_\mathrm{d},t}p^2_{Q',c,R}(Q_\mathrm{d},t) = \sum_t \sum_{x_1=0}^{2^k-1} \sum_{x_2=0}^{2^k-1} \sum_{x_3=0}^{2^k-1}   \left[f_{x_1,x_2,x_3} \vphantom{c_{Rx_{123}+t}^\dagger} \right. 
  \\ 
  \left.
  c_{Rx_1+t} c_{Rx_2+t}^\dagger c_{Rx_3+t} c_{Rx_{123}+t}^\dagger \right],
  \end{split}
  \end{equation}
    where $f_{x_1,x_2,x_3}=(-1)^{x_1^\top Q' x_1 + x_2^\top Q' x_2 + x_3^\top Q' x_3 + x_{123}^\top Q' x_{123}} \\(-i)^{c^\top x_1} i^{c^\top x_2} (-i)^{c^\top x_3} i^{c^\top x_{123}}$  and $x_{123} = x_1 + x_2 + x_3$. Then, by triangle inequality, we have
    \begin{equation}
    \begin{split}
\sum_{Q_\mathrm{d},t}p^2_{Q',c,R}(Q_\mathrm{d},t) &\leq \sum_t \sum_{x_1=0}^{2^k-1} \sum_{x_2=0}^{2^k-1} \sum_{x_3=0}^{2^k-1}    \left[\abs{c_{Rx_1+t}} \abs{c_{Rx_2+t}^\dagger} \right. 
\\
\left.
  \abs{c_{Rx_3+t}} \abs{c_{Rx_{123}+t}^\dagger}\right] \\
        &= \sum_{Q_\mathrm{d},t} q^2_{R}(Q_\mathrm{d},t),
        \end{split}
    \end{equation}
which immediately implies Eq. \eqref{eq:bound_r}.

\subsection{Simplification for wavefunction with real amplitudes} \label{sec:real_states}

Next, we show that the minimization in Eq.~\eqref{eq:bound_r} can be simplified if the wavefunction has real amplitudes. Some physically relevant applications are groundstates of Hamiltonians with time-reversal symmetry. Indeed, if $\ket{\psi}$ has real amplitudes, it can be shown that the minimum in Eq. \eqref{eq:bmsa2} is achieved by $c=\mathbf{0}^{k}$, for any $\alpha \geq 0$. To see this, it suffices to show that 
  \begin{equation}  \label{eq:real_amp}
  S_\alpha(\{p_{Q',c=\mathbf{0}^{k},R}(Q_\mathrm{d},t) \}_{Q_\mathrm{d},t}) \leq S_\alpha(\{p_{Q',c,R}(Q_\mathrm{d},t) \}_{Q_\mathrm{d},t})
  \end{equation}
  for any $Q',c$. We consider three cases:
  \begin{enumerate}
      \item $\alpha>1$. For a stabilizer state 
      \begin{equation}
          \ket{\phi} = \frac{1}{2^{k/2}}\sum_{x=0}^{2^k-1} (-1)^{x^\top Q' x} (-1)^{Q_\mathrm{d}^\top x} i^{c^\top x} \ket{Rx+t}\,,
      \end{equation}
     suppose $\braket{\phi}{\psi} = \beta + i \gamma$, where $\beta, \gamma \in \mathbb{R} $. If $c\neq \mathbf{0}^{k}$, we can construct stabilizer states
      \begin{align*}
      \ket{\phi'} &= \frac{1}{2^{k/2}}\sum_{x=0}^{2^k-1} (-1)^{x^\top Q' x} (-1)^{(Q_\mathrm{d}^\top + c^\top) x} i^{c^\top x}\ket{Rx+t}, \\
    \ket{\phi_+} &= \frac{1}{2^{k/2}}\sum_{x=0}^{2^k-1} (-1)^{x^\top Q' x} (-1)^{Q_\mathrm{d}^\top x}\ket{Rx+t}, \\
    \ket{\phi_-} &= \frac{1}{2^{k/2}}\sum_{x=0}^{2^k-1} (-1)^{x^\top Q' x} (-1)^{(Q_\mathrm{d}^\top + c^\top) x} \ket{Rx+t},
  \end{align*}
  such that  $\braket{\phi'}{\psi} = \beta - i \gamma$, $\braket{\phi_+}{\psi} = \beta +  \gamma$, and $\braket{\phi_-}{\psi} = \beta - \gamma$. We claim that
  \begin{equation} \label{eq:real_alpha>1}
      \abs{\braket{\phi'}{\psi}}^{2\alpha} + \abs{\braket{\phi'}{\psi}}^{2\alpha} \leq \abs{\braket{\phi_+}{\psi}}^{2\alpha} + \abs{\braket{\phi_-}{\psi}}^{2\alpha},
  \end{equation}
  for any $\alpha > 1$, which can be written as
  \begin{align*}
      2(\beta^2 + \gamma^2)^\alpha &\leq (\beta + \gamma)^{2\alpha} + (\beta - \gamma)^{2\alpha} \\
      1 &\leq  (1+r)^{\alpha} + (1-r)^{\alpha},
  \end{align*}
  where $r=\beta \gamma/(\beta^2 + \gamma^2)$. The latter inequality can be directly shown through H\"older's inequality. Finally, the inequality in Eq. \eqref{eq:real_amp} is obtained by summing Eq. \eqref{eq:real_alpha>1} over $Q_d,t$ and taking the logarithm.
  \item $0<\alpha<1$. This case is similar as the previous case, except that we need to show the reversed inequality
    \begin{equation}
      \abs{\braket{\phi'}{\psi}}^{2\alpha} + \abs{\braket{\phi'}{\psi}}^{2\alpha} \geq \abs{\braket{\phi_+}{\psi}}^{2\alpha} + \abs{\braket{\phi_-}{\psi}}^{2\alpha},  \end{equation}  
      for $\alpha<1$. This can again be shown through H\"older's inequality.
    \item $\alpha=1$ and  $\alpha=0$. This case follows by continuity.
    \end{enumerate}

Finally,  if $\ket{\psi}$ has positive real amplitudes, it is easy to see that the lower bound in Eq. \eqref{eq:bound_r} can be achieved by $Q' = \mathbf{0}^{k \times k}$ and $c=\mathbf{0}^{k}$.

\section{Additional numerics}
In this Appendix, we show numerical evaluation of BMSA and other measures of nonstabilizerness, in particular $\text{LR}$, $r_\mathcal{M}$ and $D_\text{min}$. We evaluate the relative entropy of magic $r_\mathcal{M}$ using the routine for the relative entropy provided in Ref.~\cite{fawzi2018efficient}.

In Fig.~\ref{fig:GUE}a, we study the evolution in time $t$ of random Hamiltonians $H$ drawn from the GUE with $\ket{\psi(t)}=\exp(-iHt)\ket{0}$, starting from initial stabilizer states $\ket{0}$. Here, we normalize $H$ such that its eigenvalues are bounded between $-2$ and $2$~\cite{haug2024probing}. 
We find that the nonstabilizerness increases with time $t$, until converging to a constant for large $t$. 
The increase in nonstabilizerness depends on the chosen magic measure, where we find roughly three different groups. 
BMSA with $\alpha=1/2$ and log-free robustness of magic $\text{LR}$ behave similar, showing relative large values of nonstabilizerness even for short times. 
Next, BMSA with $\alpha=1$ and relative entropy of magic $r_\mathcal{M}$ have nearly the same values, where we find that $\mathcal{A}_1\geq r_\mathcal{M}$ as proven in the main text.
Finally, BMSA with $\alpha=2$ and min-relative entropy of magic $D_\text{min}$ show similar scaling and values, where as expected $\mathcal{A}_2$ is always larger than $D_\text{min}$.

In Fig.~\ref{fig:GUE}b, we study random Clifford circuits doped with $N_\text{T}$ T-gates. We consider a circuit of $N_\text{T}$ layers consisting of randomly sampled Clifford circuits $U_\text{C}^{(k)}$ and the single-qubit T-gate $T=\text{diag}(1,\exp(-i\pi/4)$
\begin{equation}\label{eq:random_CliffT}
\ket{\psi(N_\text{T})}=U_\text{C}^{(0)}\left[\prod_{k=1}^{N_\text{T}} (T\otimes I_{N-1}) U_\text{C}^{(k)} \right]\ket{0}\,.
\end{equation}
We show nonstabilizerness against T-gate density $q=N_\text{T}/N$ for different measures of nonstabilizerness. For all measures, nonstabilizerness increases with $q$ until converging. We find that all our proven bounds are respected. Note that we have $\text{LR}<\mathcal{A}_{1}$ for the Clifford circuits doped with T-gates, while for the random Hamiltonian evolution we find $\text{LR}>\mathcal{A}_{1}$.

\begin{figure*}[htbp]
	\centering	
	\subfigimg[width=0.4\textwidth]{a}{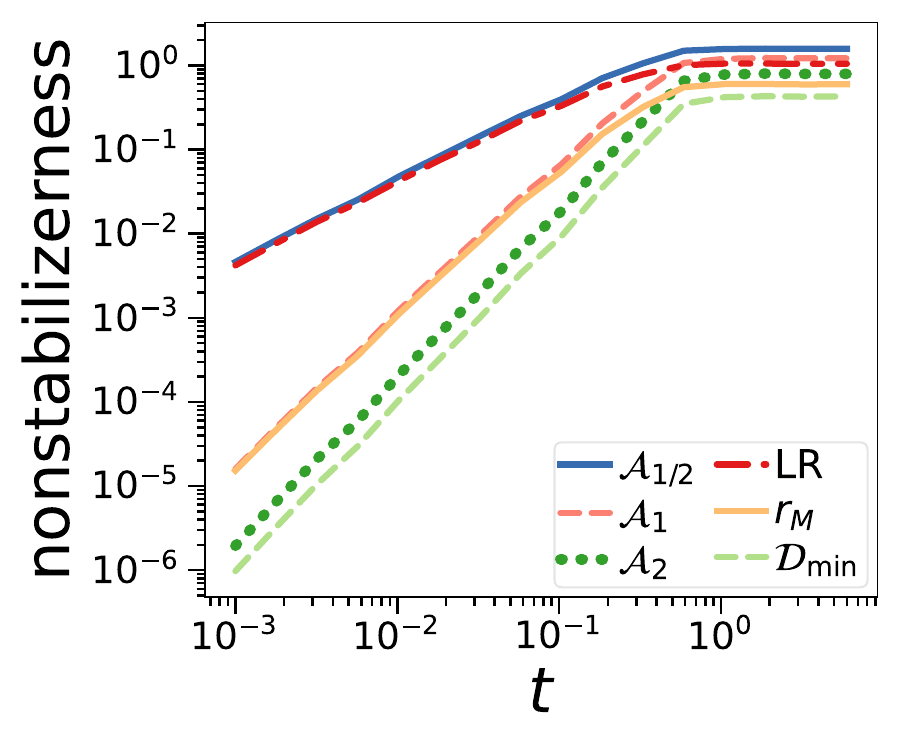}
 \subfigimg[width=0.4\textwidth]{b}{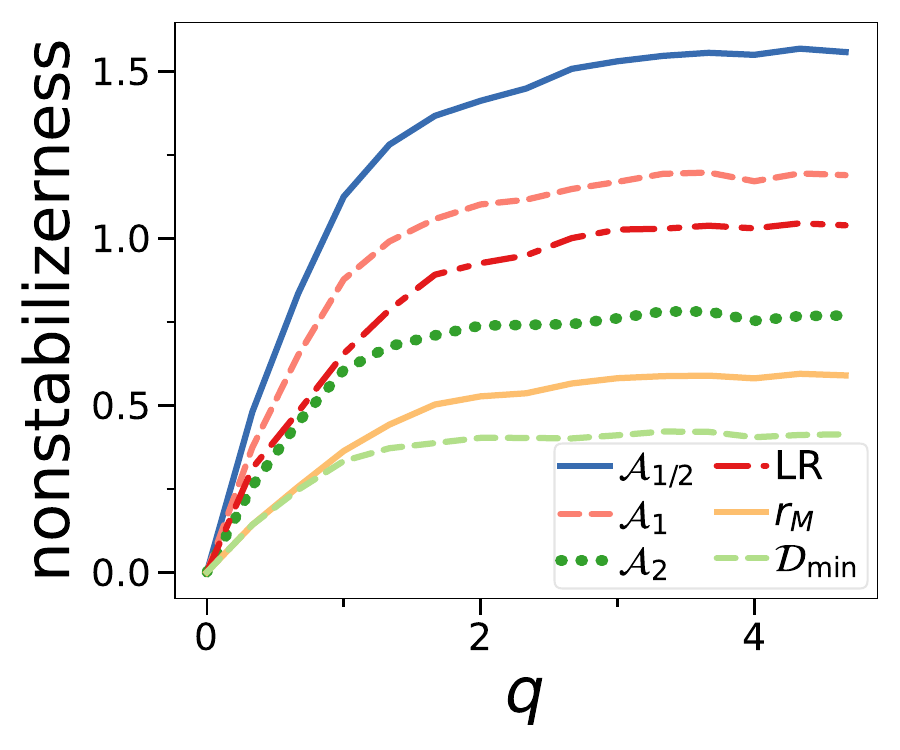}
	\caption{Different measures of nonstabilizerness for \idg{a} random Hamiltonian evolution in time $t$ and \idg{b} doped Clifford circuits with T-gate density $q$. We show BMSA $\mathcal{A}_{\alpha}$ for $\alpha=\{1/2,1,2\}$, log-free robustness of magic $\text{LR}$, relative entropy of magic $r_\mathcal{M}$ and min-relative entropy of magic $D_\text{min}$.  We show $N=3$ qubits where we average over $100$ random instances of the Hamiltonian and circuit.
	}
	\label{fig:GUE}
\end{figure*}

\bibliography{bibliography}

\begin{thebibliography}{92}%
\makeatletter
\providecommand \@ifxundefined [1]{%
 \@ifx{#1\undefined}
}%
\providecommand \@ifnum [1]{%
 \ifnum #1\expandafter \@firstoftwo
 \else \expandafter \@secondoftwo
 \fi
}%
\providecommand \@ifx [1]{%
 \ifx #1\expandafter \@firstoftwo
 \else \expandafter \@secondoftwo
 \fi
}%
\providecommand \natexlab [1]{#1}%
\providecommand \enquote  [1]{``#1''}%
\providecommand \bibnamefont  [1]{#1}%
\providecommand \bibfnamefont [1]{#1}%
\providecommand \citenamefont [1]{#1}%
\providecommand \href@noop [0]{\@secondoftwo}%
\providecommand \href [0]{\begingroup \@sanitize@url \@href}%
\providecommand \@href[1]{\@@startlink{#1}\@@href}%
\providecommand \@@href[1]{\endgroup#1\@@endlink}%
\providecommand \@sanitize@url [0]{\catcode `\\12\catcode `\$12\catcode `\&12\catcode `\#12\catcode `\^12\catcode `\_12\catcode `\%12\relax}%
\providecommand \@@startlink[1]{}%
\providecommand \@@endlink[0]{}%
\providecommand \url  [0]{\begingroup\@sanitize@url \@url }%
\providecommand \@url [1]{\endgroup\@href {#1}{\urlprefix }}%
\providecommand \urlprefix  [0]{URL }%
\providecommand \Eprint [0]{\href }%
\providecommand \doibase [0]{https://doi.org/}%
\providecommand \selectlanguage [0]{\@gobble}%
\providecommand \bibinfo  [0]{\@secondoftwo}%
\providecommand \bibfield  [0]{\@secondoftwo}%
\providecommand \translation [1]{[#1]}%
\providecommand \BibitemOpen [0]{}%
\providecommand \bibitemStop [0]{}%
\providecommand \bibitemNoStop [0]{.\EOS\space}%
\providecommand \EOS [0]{\spacefactor3000\relax}%
\providecommand \BibitemShut  [1]{\csname bibitem#1\endcsname}%
\let\auto@bib@innerbib\@empty
\bibitem [{\citenamefont {Bravyi}\ and\ \citenamefont {Kitaev}(2005)}]{bravyi2005universal}%
  \BibitemOpen
  \bibfield  {author} {\bibinfo {author} {\bibfnamefont {S.}~\bibnamefont {Bravyi}}\ and\ \bibinfo {author} {\bibfnamefont {A.}~\bibnamefont {Kitaev}},\ }\bibfield  {title} {\bibinfo {title} {Universal quantum computation with ideal clifford gates and noisy ancillas},\ }\href {https://doi.org/10.1103/PhysRevA.71.022316} {\bibfield  {journal} {\bibinfo  {journal} {Phys. Rev. A}\ }\textbf {\bibinfo {volume} {71}},\ \bibinfo {pages} {022316} (\bibinfo {year} {2005})}\BibitemShut {NoStop}%
\bibitem [{\citenamefont {Veitch}\ \emph {et~al.}(2014)\citenamefont {Veitch}, \citenamefont {Mousavian}, \citenamefont {Gottesman},\ and\ \citenamefont {Emerson}}]{veitch2014resource}%
  \BibitemOpen
  \bibfield  {author} {\bibinfo {author} {\bibfnamefont {V.}~\bibnamefont {Veitch}}, \bibinfo {author} {\bibfnamefont {S.~H.}\ \bibnamefont {Mousavian}}, \bibinfo {author} {\bibfnamefont {D.}~\bibnamefont {Gottesman}},\ and\ \bibinfo {author} {\bibfnamefont {J.}~\bibnamefont {Emerson}},\ }\bibfield  {title} {\bibinfo {title} {The resource theory of stabilizer quantum computation},\ }\href {https://doi.org/10.1088/1367-2630/16/1/013009} {\bibfield  {journal} {\bibinfo  {journal} {New J. Phys.}\ }\textbf {\bibinfo {volume} {16}},\ \bibinfo {pages} {013009} (\bibinfo {year} {2014})}\BibitemShut {NoStop}%
\bibitem [{\citenamefont {Gottesman}(1997)}]{gottesman1997stabilizer}%
  \BibitemOpen
  \bibfield  {author} {\bibinfo {author} {\bibfnamefont {D.}~\bibnamefont {Gottesman}},\ }\emph {\bibinfo {title} {Stabilizer codes and quantum error correction. Caltech Ph. D}},\ \href@noop {} {Ph.D. thesis},\ \bibinfo  {school} {Thesis, eprint: quant-ph/9705052} (\bibinfo {year} {1997})\BibitemShut {NoStop}%
\bibitem [{\citenamefont {Nielsen}\ and\ \citenamefont {Chuang}(2011)}]{nielsen2011quantum}%
  \BibitemOpen
  \bibfield  {author} {\bibinfo {author} {\bibfnamefont {M.~A.}\ \bibnamefont {Nielsen}}\ and\ \bibinfo {author} {\bibfnamefont {I.~L.}\ \bibnamefont {Chuang}},\ }\href@noop {} {\emph {\bibinfo {title} {Quantum Computation and Quantum Information: 10th Anniversary Edition}}}\ (\bibinfo  {publisher} {Cambridge University Press},\ \bibinfo {year} {2011})\BibitemShut {NoStop}%
\bibitem [{\citenamefont {Kitaev}(2003)}]{kitaev2003fault}%
  \BibitemOpen
  \bibfield  {author} {\bibinfo {author} {\bibfnamefont {A.~Y.}\ \bibnamefont {Kitaev}},\ }\bibfield  {title} {\bibinfo {title} {Fault-tolerant quantum computation by anyons},\ }\href {https://doi.org/10.1016/S0003-4916(02)00018-0} {\bibfield  {journal} {\bibinfo  {journal} {Ann. Phys.}\ }\textbf {\bibinfo {volume} {303}},\ \bibinfo {pages} {2} (\bibinfo {year} {2003})}\BibitemShut {NoStop}%
\bibitem [{\citenamefont {Eastin}\ and\ \citenamefont {Knill}(2009)}]{eastin2009restriction}%
  \BibitemOpen
  \bibfield  {author} {\bibinfo {author} {\bibfnamefont {B.}~\bibnamefont {Eastin}}\ and\ \bibinfo {author} {\bibfnamefont {E.}~\bibnamefont {Knill}},\ }\bibfield  {title} {\bibinfo {title} {Restrictions on transversal encoded quantum gate sets},\ }\href {https://doi.org/10.1103/PhysRevLett.102.110502} {\bibfield  {journal} {\bibinfo  {journal} {Phys. Rev. Lett.}\ }\textbf {\bibinfo {volume} {102}},\ \bibinfo {pages} {110502} (\bibinfo {year} {2009})}\BibitemShut {NoStop}%
\bibitem [{\citenamefont {Gottesman}(1998{\natexlab{a}})}]{gottesman1998theory}%
  \BibitemOpen
  \bibfield  {author} {\bibinfo {author} {\bibfnamefont {D.}~\bibnamefont {Gottesman}},\ }\bibfield  {title} {\bibinfo {title} {Theory of fault-tolerant quantum computation},\ }\href {https://doi.org/10.1103/PhysRevA.57.127} {\bibfield  {journal} {\bibinfo  {journal} {Phys. Rev. A}\ }\textbf {\bibinfo {volume} {57}},\ \bibinfo {pages} {127} (\bibinfo {year} {1998}{\natexlab{a}})}\BibitemShut {NoStop}%
\bibitem [{\citenamefont {Gottesman}(1998{\natexlab{b}})}]{gottesman1998heisenberg}%
  \BibitemOpen
  \bibfield  {author} {\bibinfo {author} {\bibfnamefont {D.}~\bibnamefont {Gottesman}},\ }\bibfield  {title} {\bibinfo {title} {The heisenberg representation of quantum computers},\ }\href {https://arxiv.org/abs/quant-ph/9807006} {\bibfield  {journal} {\bibinfo  {journal} {arXiv quant-ph/9807006}\ } (\bibinfo {year} {1998}{\natexlab{b}})}\BibitemShut {NoStop}%
\bibitem [{\citenamefont {Aaronson}\ and\ \citenamefont {Gottesman}(2004)}]{aaronson2004improved}%
  \BibitemOpen
  \bibfield  {author} {\bibinfo {author} {\bibfnamefont {S.}~\bibnamefont {Aaronson}}\ and\ \bibinfo {author} {\bibfnamefont {D.}~\bibnamefont {Gottesman}},\ }\bibfield  {title} {\bibinfo {title} {Improved simulation of stabilizer circuits},\ }\href {https://doi.org/10.1103/PhysRevA.70.052328} {\bibfield  {journal} {\bibinfo  {journal} {Phys. Rev. A}\ }\textbf {\bibinfo {volume} {70}},\ \bibinfo {pages} {052328} (\bibinfo {year} {2004})}\BibitemShut {NoStop}%
\bibitem [{\citenamefont {Chitambar}\ and\ \citenamefont {Gour}(2019)}]{chitambar2019quantum}%
  \BibitemOpen
  \bibfield  {author} {\bibinfo {author} {\bibfnamefont {E.}~\bibnamefont {Chitambar}}\ and\ \bibinfo {author} {\bibfnamefont {G.}~\bibnamefont {Gour}},\ }\bibfield  {title} {\bibinfo {title} {Quantum resource theories},\ }\href {https://doi.org/10.1103/RevModPhys.91.025001} {\bibfield  {journal} {\bibinfo  {journal} {Rev. Mod. Phys.}\ }\textbf {\bibinfo {volume} {91}},\ \bibinfo {pages} {025001} (\bibinfo {year} {2019})}\BibitemShut {NoStop}%
\bibitem [{\citenamefont {White}\ \emph {et~al.}(2021)\citenamefont {White}, \citenamefont {Cao},\ and\ \citenamefont {Swingle}}]{white2021conformal}%
  \BibitemOpen
  \bibfield  {author} {\bibinfo {author} {\bibfnamefont {C.~D.}\ \bibnamefont {White}}, \bibinfo {author} {\bibfnamefont {C.}~\bibnamefont {Cao}},\ and\ \bibinfo {author} {\bibfnamefont {B.}~\bibnamefont {Swingle}},\ }\bibfield  {title} {\bibinfo {title} {Conformal field theories are magical},\ }\href {https://doi.org/10.1103/PhysRevB.103.075145} {\bibfield  {journal} {\bibinfo  {journal} {Phys. Rev. B}\ }\textbf {\bibinfo {volume} {103}},\ \bibinfo {pages} {075145} (\bibinfo {year} {2021})}\BibitemShut {NoStop}%
\bibitem [{\citenamefont {Ellison}\ \emph {et~al.}(2021)\citenamefont {Ellison}, \citenamefont {Kato}, \citenamefont {Liu},\ and\ \citenamefont {Hsieh}}]{ellison2021symmetry}%
  \BibitemOpen
  \bibfield  {author} {\bibinfo {author} {\bibfnamefont {T.~D.}\ \bibnamefont {Ellison}}, \bibinfo {author} {\bibfnamefont {K.}~\bibnamefont {Kato}}, \bibinfo {author} {\bibfnamefont {Z.-W.}\ \bibnamefont {Liu}},\ and\ \bibinfo {author} {\bibfnamefont {T.~H.}\ \bibnamefont {Hsieh}},\ }\bibfield  {title} {\bibinfo {title} {Symmetry-protected sign problem and magic in quantum phases of matter},\ }\href {https://doi.org/10.22331/q-2021-12-28-612} {\bibfield  {journal} {\bibinfo  {journal} {Quantum}\ }\textbf {\bibinfo {volume} {5}},\ \bibinfo {pages} {612} (\bibinfo {year} {2021})}\BibitemShut {NoStop}%
\bibitem [{\citenamefont {Sarkar}\ \emph {et~al.}(2020)\citenamefont {Sarkar}, \citenamefont {Mukhopadhyay},\ and\ \citenamefont {Bayat}}]{sarkar2020characterization}%
  \BibitemOpen
  \bibfield  {author} {\bibinfo {author} {\bibfnamefont {S.}~\bibnamefont {Sarkar}}, \bibinfo {author} {\bibfnamefont {C.}~\bibnamefont {Mukhopadhyay}},\ and\ \bibinfo {author} {\bibfnamefont {A.}~\bibnamefont {Bayat}},\ }\bibfield  {title} {\bibinfo {title} {Characterization of an operational quantum resource in a critical many-body system},\ }\href {https://doi.org/10.1088/1367-2630/aba919} {\bibfield  {journal} {\bibinfo  {journal} {New J. Phys.}\ }\textbf {\bibinfo {volume} {22}},\ \bibinfo {pages} {083077} (\bibinfo {year} {2020})}\BibitemShut {NoStop}%
\bibitem [{\citenamefont {Sewell}\ and\ \citenamefont {White}(2022)}]{sewell2022mana}%
  \BibitemOpen
  \bibfield  {author} {\bibinfo {author} {\bibfnamefont {T.~J.}\ \bibnamefont {Sewell}}\ and\ \bibinfo {author} {\bibfnamefont {C.~D.}\ \bibnamefont {White}},\ }\bibfield  {title} {\bibinfo {title} {Mana and thermalization: Probing the feasibility of near-clifford hamiltonian simulation},\ }\href {https://doi.org/10.1103/PhysRevB.106.125130} {\bibfield  {journal} {\bibinfo  {journal} {Phys. Rev. B}\ }\textbf {\bibinfo {volume} {106}},\ \bibinfo {pages} {125130} (\bibinfo {year} {2022})}\BibitemShut {NoStop}%
\bibitem [{\citenamefont {Oliviero}\ \emph {et~al.}(2022{\natexlab{a}})\citenamefont {Oliviero}, \citenamefont {Leone},\ and\ \citenamefont {Hamma}}]{oliviero2022magic}%
  \BibitemOpen
  \bibfield  {author} {\bibinfo {author} {\bibfnamefont {S.~F.~E.}\ \bibnamefont {Oliviero}}, \bibinfo {author} {\bibfnamefont {L.}~\bibnamefont {Leone}},\ and\ \bibinfo {author} {\bibfnamefont {A.}~\bibnamefont {Hamma}},\ }\bibfield  {title} {\bibinfo {title} {Magic-state resource theory for the ground state of the transverse-field ising model},\ }\href {https://doi.org/10.1103/PhysRevA.106.042426} {\bibfield  {journal} {\bibinfo  {journal} {Phys. Rev. A}\ }\textbf {\bibinfo {volume} {106}},\ \bibinfo {pages} {042426} (\bibinfo {year} {2022}{\natexlab{a}})}\BibitemShut {NoStop}%
\bibitem [{\citenamefont {Liu}\ and\ \citenamefont {Winter}(2022)}]{liu2022many}%
  \BibitemOpen
  \bibfield  {author} {\bibinfo {author} {\bibfnamefont {Z.-W.}\ \bibnamefont {Liu}}\ and\ \bibinfo {author} {\bibfnamefont {A.}~\bibnamefont {Winter}},\ }\bibfield  {title} {\bibinfo {title} {Many-body quantum magic},\ }\href {https://doi.org/10.1103/PRXQuantum.3.020333} {\bibfield  {journal} {\bibinfo  {journal} {PRX Quantum}\ }\textbf {\bibinfo {volume} {3}},\ \bibinfo {pages} {020333} (\bibinfo {year} {2022})}\BibitemShut {NoStop}%
\bibitem [{\citenamefont {Haug}\ and\ \citenamefont {Piroli}(2023{\natexlab{a}})}]{haug2022quantifying}%
  \BibitemOpen
  \bibfield  {author} {\bibinfo {author} {\bibfnamefont {T.}~\bibnamefont {Haug}}\ and\ \bibinfo {author} {\bibfnamefont {L.}~\bibnamefont {Piroli}},\ }\bibfield  {title} {\bibinfo {title} {Quantifying nonstabilizerness of matrix product states},\ }\href {https://doi.org/10.1103/PhysRevB.107.035148} {\bibfield  {journal} {\bibinfo  {journal} {Phys. Rev. B}\ }\textbf {\bibinfo {volume} {107}},\ \bibinfo {pages} {035148} (\bibinfo {year} {2023}{\natexlab{a}})}\BibitemShut {NoStop}%
\bibitem [{\citenamefont {Bluvstein}\ \emph {et~al.}(2024)\citenamefont {Bluvstein}, \citenamefont {Evered}, \citenamefont {Geim}, \citenamefont {Li}, \citenamefont {Zhou}, \citenamefont {Manovitz}, \citenamefont {Ebadi}, \citenamefont {Cain}, \citenamefont {Kalinowski}, \citenamefont {Hangleiter} \emph {et~al.}}]{bluvstein2024logical}%
  \BibitemOpen
  \bibfield  {author} {\bibinfo {author} {\bibfnamefont {D.}~\bibnamefont {Bluvstein}}, \bibinfo {author} {\bibfnamefont {S.~J.}\ \bibnamefont {Evered}}, \bibinfo {author} {\bibfnamefont {A.~A.}\ \bibnamefont {Geim}}, \bibinfo {author} {\bibfnamefont {S.~H.}\ \bibnamefont {Li}}, \bibinfo {author} {\bibfnamefont {H.}~\bibnamefont {Zhou}}, \bibinfo {author} {\bibfnamefont {T.}~\bibnamefont {Manovitz}}, \bibinfo {author} {\bibfnamefont {S.}~\bibnamefont {Ebadi}}, \bibinfo {author} {\bibfnamefont {M.}~\bibnamefont {Cain}}, \bibinfo {author} {\bibfnamefont {M.}~\bibnamefont {Kalinowski}}, \bibinfo {author} {\bibfnamefont {D.}~\bibnamefont {Hangleiter}}, \emph {et~al.},\ }\bibfield  {title} {\bibinfo {title} {Logical quantum processor based on reconfigurable atom arrays},\ }\href {https://doi.org/10.1038/s41586-023-06927-3} {\bibfield  {journal} {\bibinfo  {journal} {Nature}\ }\textbf {\bibinfo {volume} {626}},\ \bibinfo {pages} {58} (\bibinfo {year} {2024})}\BibitemShut {NoStop}%
\bibitem [{\citenamefont {Acharya}\ \emph {et~al.}(2024)\citenamefont {Acharya}, \citenamefont {Aghababaie-Beni}, \citenamefont {Aleiner}, \citenamefont {Andersen}, \citenamefont {Ansmann}, \citenamefont {Arute}, \citenamefont {Arya}, \citenamefont {Asfaw}, \citenamefont {Astrakhantsev}, \citenamefont {Atalaya} \emph {et~al.}}]{acharya2024quantum}%
  \BibitemOpen
  \bibfield  {author} {\bibinfo {author} {\bibfnamefont {R.}~\bibnamefont {Acharya}}, \bibinfo {author} {\bibfnamefont {L.}~\bibnamefont {Aghababaie-Beni}}, \bibinfo {author} {\bibfnamefont {I.}~\bibnamefont {Aleiner}}, \bibinfo {author} {\bibfnamefont {T.~I.}\ \bibnamefont {Andersen}}, \bibinfo {author} {\bibfnamefont {M.}~\bibnamefont {Ansmann}}, \bibinfo {author} {\bibfnamefont {F.}~\bibnamefont {Arute}}, \bibinfo {author} {\bibfnamefont {K.}~\bibnamefont {Arya}}, \bibinfo {author} {\bibfnamefont {A.}~\bibnamefont {Asfaw}}, \bibinfo {author} {\bibfnamefont {N.}~\bibnamefont {Astrakhantsev}}, \bibinfo {author} {\bibfnamefont {J.}~\bibnamefont {Atalaya}}, \emph {et~al.},\ }\bibfield  {title} {\bibinfo {title} {Quantum error correction below the surface code threshold},\ }\href {https://arxiv.org/abs/2408.13687} {\bibfield  {journal} {\bibinfo  {journal} {arXiv:2408.13687}\ } (\bibinfo {year} {2024})}\BibitemShut {NoStop}%
\bibitem [{\citenamefont {Preskill}(1998)}]{preskill1998fault}%
  \BibitemOpen
  \bibfield  {author} {\bibinfo {author} {\bibfnamefont {J.}~\bibnamefont {Preskill}},\ }\bibfield  {title} {\bibinfo {title} {Fault-tolerant quantum computation},\ }in\ \href@noop {} {\emph {\bibinfo {booktitle} {Introduction to quantum computation and information}}}\ (\bibinfo  {publisher} {World Scientific},\ \bibinfo {year} {1998})\ pp.\ \bibinfo {pages} {213--269}\BibitemShut {NoStop}%
\bibitem [{\citenamefont {Shor}(1996)}]{shor1996fault}%
  \BibitemOpen
  \bibfield  {author} {\bibinfo {author} {\bibfnamefont {P.~W.}\ \bibnamefont {Shor}},\ }\bibfield  {title} {\bibinfo {title} {Fault-tolerant quantum computation},\ }in\ \href@noop {} {\emph {\bibinfo {booktitle} {Proceedings of 37th conference on foundations of computer science}}}\ (\bibinfo {organization} {IEEE},\ \bibinfo {year} {1996})\ pp.\ \bibinfo {pages} {56--65}\BibitemShut {NoStop}%
\bibitem [{\citenamefont {Litinski}(2019)}]{litinski2019magic}%
  \BibitemOpen
  \bibfield  {author} {\bibinfo {author} {\bibfnamefont {D.}~\bibnamefont {Litinski}},\ }\bibfield  {title} {\bibinfo {title} {Magic state distillation: Not as costly as you think},\ }\href {https://doi.org/10.22331/q-2019-12-02-205} {\bibfield  {journal} {\bibinfo  {journal} {Quantum}\ }\textbf {\bibinfo {volume} {3}},\ \bibinfo {pages} {205} (\bibinfo {year} {2019})}\BibitemShut {NoStop}%
\bibitem [{\citenamefont {Orts}\ \emph {et~al.}(2023)\citenamefont {Orts}, \citenamefont {Ortega}, \citenamefont {Combarro}, \citenamefont {R{\'u}a}, \citenamefont {Puertas},\ and\ \citenamefont {Garz{\'o}n}}]{orts2023efficient}%
  \BibitemOpen
  \bibfield  {author} {\bibinfo {author} {\bibfnamefont {F.}~\bibnamefont {Orts}}, \bibinfo {author} {\bibfnamefont {G.}~\bibnamefont {Ortega}}, \bibinfo {author} {\bibfnamefont {E.~F.}\ \bibnamefont {Combarro}}, \bibinfo {author} {\bibfnamefont {I.~F.}\ \bibnamefont {R{\'u}a}}, \bibinfo {author} {\bibfnamefont {A.~M.}\ \bibnamefont {Puertas}},\ and\ \bibinfo {author} {\bibfnamefont {E.~M.}\ \bibnamefont {Garz{\'o}n}},\ }\bibfield  {title} {\bibinfo {title} {Efficient design of a quantum absolute-value circuit using clifford+ t gates},\ }\href {https://doi.org/10.1007/s11227-023-05162-x} {\bibfield  {journal} {\bibinfo  {journal} {J. Supercomp.}\ }\textbf {\bibinfo {volume} {79}},\ \bibinfo {pages} {12656} (\bibinfo {year} {2023})}\BibitemShut {NoStop}%
\bibitem [{\citenamefont {Howard}\ and\ \citenamefont {Campbell}(2017)}]{howard2017application}%
  \BibitemOpen
  \bibfield  {author} {\bibinfo {author} {\bibfnamefont {M.}~\bibnamefont {Howard}}\ and\ \bibinfo {author} {\bibfnamefont {E.}~\bibnamefont {Campbell}},\ }\bibfield  {title} {\bibinfo {title} {Application of a resource theory for magic states to fault-tolerant quantum computing},\ }\href {https://doi.org/10.1103/PhysRevLett.118.090501} {\bibfield  {journal} {\bibinfo  {journal} {Phys. Rev. Lett.}\ }\textbf {\bibinfo {volume} {118}},\ \bibinfo {pages} {090501} (\bibinfo {year} {2017})}\BibitemShut {NoStop}%
\bibitem [{\citenamefont {Oliviero}\ \emph {et~al.}(2022{\natexlab{b}})\citenamefont {Oliviero}, \citenamefont {Leone}, \citenamefont {Hamma},\ and\ \citenamefont {Lloyd}}]{oliviero2022measuring}%
  \BibitemOpen
  \bibfield  {author} {\bibinfo {author} {\bibfnamefont {S.~F.~E.}\ \bibnamefont {Oliviero}}, \bibinfo {author} {\bibfnamefont {L.}~\bibnamefont {Leone}}, \bibinfo {author} {\bibfnamefont {A.}~\bibnamefont {Hamma}},\ and\ \bibinfo {author} {\bibfnamefont {S.}~\bibnamefont {Lloyd}},\ }\bibfield  {title} {\bibinfo {title} {Measuring magic on a quantum processor},\ }\href {https://doi.org/10.1038/s41534-022-00666-5} {\bibfield  {journal} {\bibinfo  {journal} {npj Quantum Information}\ }\textbf {\bibinfo {volume} {8}},\ \bibinfo {pages} {148} (\bibinfo {year} {2022}{\natexlab{b}})}\BibitemShut {NoStop}%
\bibitem [{\citenamefont {Haug}\ and\ \citenamefont {Kim}(2023)}]{haug2022scalable}%
  \BibitemOpen
  \bibfield  {author} {\bibinfo {author} {\bibfnamefont {T.}~\bibnamefont {Haug}}\ and\ \bibinfo {author} {\bibfnamefont {M.}~\bibnamefont {Kim}},\ }\bibfield  {title} {\bibinfo {title} {Scalable measures of magic resource for quantum computers},\ }\href {https://doi.org/10.1103/PRXQuantum.4.010301} {\bibfield  {journal} {\bibinfo  {journal} {PRX Quantum}\ }\textbf {\bibinfo {volume} {4}},\ \bibinfo {pages} {010301} (\bibinfo {year} {2023})}\BibitemShut {NoStop}%
\bibitem [{\citenamefont {Haug}\ \emph {et~al.}(2024{\natexlab{a}})\citenamefont {Haug}, \citenamefont {Lee},\ and\ \citenamefont {Kim}}]{haug2023efficient}%
  \BibitemOpen
  \bibfield  {author} {\bibinfo {author} {\bibfnamefont {T.}~\bibnamefont {Haug}}, \bibinfo {author} {\bibfnamefont {S.}~\bibnamefont {Lee}},\ and\ \bibinfo {author} {\bibfnamefont {M.~S.}\ \bibnamefont {Kim}},\ }\bibfield  {title} {\bibinfo {title} {Efficient quantum algorithms for stabilizer entropies},\ }\href {https://doi.org/10.1103/PhysRevLett.132.240602} {\bibfield  {journal} {\bibinfo  {journal} {Phys. Rev. Lett.}\ }\textbf {\bibinfo {volume} {132}},\ \bibinfo {pages} {240602} (\bibinfo {year} {2024}{\natexlab{a}})}\BibitemShut {NoStop}%
\bibitem [{\citenamefont {Gu}\ \emph {et~al.}(2024{\natexlab{a}})\citenamefont {Gu}, \citenamefont {Oliviero},\ and\ \citenamefont {Leone}}]{gu2024doped}%
  \BibitemOpen
  \bibfield  {author} {\bibinfo {author} {\bibfnamefont {A.}~\bibnamefont {Gu}}, \bibinfo {author} {\bibfnamefont {S.~F.}\ \bibnamefont {Oliviero}},\ and\ \bibinfo {author} {\bibfnamefont {L.}~\bibnamefont {Leone}},\ }\bibfield  {title} {\bibinfo {title} {Doped stabilizer states in many-body physics and where to find them},\ }\href {https://arxiv.org/abs/2403.14912} {\bibfield  {journal} {\bibinfo  {journal} {arXiv:2403.14912}\ } (\bibinfo {year} {2024}{\natexlab{a}})}\BibitemShut {NoStop}%
\bibitem [{\citenamefont {Haug}\ and\ \citenamefont {Piroli}(2023{\natexlab{b}})}]{haug2023stabilizer}%
  \BibitemOpen
  \bibfield  {author} {\bibinfo {author} {\bibfnamefont {T.}~\bibnamefont {Haug}}\ and\ \bibinfo {author} {\bibfnamefont {L.}~\bibnamefont {Piroli}},\ }\bibfield  {title} {\bibinfo {title} {Stabilizer entropies and nonstabilizerness monotones},\ }\href {https://doi.org/10.22331/q-2023-08-28-1092} {\bibfield  {journal} {\bibinfo  {journal} {Quantum}\ }\textbf {\bibinfo {volume} {7}},\ \bibinfo {pages} {1092} (\bibinfo {year} {2023}{\natexlab{b}})}\BibitemShut {NoStop}%
\bibitem [{\citenamefont {Lami}\ and\ \citenamefont {Collura}(2023)}]{lami2023nonstabilizerness}%
  \BibitemOpen
  \bibfield  {author} {\bibinfo {author} {\bibfnamefont {G.}~\bibnamefont {Lami}}\ and\ \bibinfo {author} {\bibfnamefont {M.}~\bibnamefont {Collura}},\ }\bibfield  {title} {\bibinfo {title} {Nonstabilizerness via perfect pauli sampling of matrix product states},\ }\href {https://doi.org/10.1103/PhysRevLett.131.180401} {\bibfield  {journal} {\bibinfo  {journal} {Phys. Rev. Lett.}\ }\textbf {\bibinfo {volume} {131}},\ \bibinfo {pages} {180401} (\bibinfo {year} {2023})}\BibitemShut {NoStop}%
\bibitem [{\citenamefont {Tarabunga}\ \emph {et~al.}(2024)\citenamefont {Tarabunga}, \citenamefont {Tirrito}, \citenamefont {Ba\~nuls},\ and\ \citenamefont {Dalmonte}}]{tarabunga2024nonstabilizerness}%
  \BibitemOpen
  \bibfield  {author} {\bibinfo {author} {\bibfnamefont {P.~S.}\ \bibnamefont {Tarabunga}}, \bibinfo {author} {\bibfnamefont {E.}~\bibnamefont {Tirrito}}, \bibinfo {author} {\bibfnamefont {M.~C.}\ \bibnamefont {Ba\~nuls}},\ and\ \bibinfo {author} {\bibfnamefont {M.}~\bibnamefont {Dalmonte}},\ }\bibfield  {title} {\bibinfo {title} {Nonstabilizerness via matrix product states in the pauli basis},\ }\href {https://doi.org/10.1103/PhysRevLett.133.010601} {\bibfield  {journal} {\bibinfo  {journal} {Phys. Rev. Lett.}\ }\textbf {\bibinfo {volume} {133}},\ \bibinfo {pages} {010601} (\bibinfo {year} {2024})}\BibitemShut {NoStop}%
\bibitem [{\citenamefont {Odavić}\ \emph {et~al.}(2023)\citenamefont {Odavić}, \citenamefont {Haug}, \citenamefont {Torre}, \citenamefont {Hamma}, \citenamefont {Franchini},\ and\ \citenamefont {Giampaolo}}]{odavic2023complexity}%
  \BibitemOpen
  \bibfield  {author} {\bibinfo {author} {\bibfnamefont {J.}~\bibnamefont {Odavić}}, \bibinfo {author} {\bibfnamefont {T.}~\bibnamefont {Haug}}, \bibinfo {author} {\bibfnamefont {G.}~\bibnamefont {Torre}}, \bibinfo {author} {\bibfnamefont {A.}~\bibnamefont {Hamma}}, \bibinfo {author} {\bibfnamefont {F.}~\bibnamefont {Franchini}},\ and\ \bibinfo {author} {\bibfnamefont {S.~M.}\ \bibnamefont {Giampaolo}},\ }\bibfield  {title} {\bibinfo {title} {{Complexity of frustration: A new source of non-local non-stabilizerness}},\ }\href {https://doi.org/10.21468/SciPostPhys.15.4.131} {\bibfield  {journal} {\bibinfo  {journal} {SciPost Phys.}\ }\textbf {\bibinfo {volume} {15}},\ \bibinfo {pages} {131} (\bibinfo {year} {2023})}\BibitemShut {NoStop}%
\bibitem [{\citenamefont {Tarabunga}\ \emph {et~al.}(2023)\citenamefont {Tarabunga}, \citenamefont {Tirrito}, \citenamefont {Chanda},\ and\ \citenamefont {Dalmonte}}]{tarabunga2023many}%
  \BibitemOpen
  \bibfield  {author} {\bibinfo {author} {\bibfnamefont {P.~S.}\ \bibnamefont {Tarabunga}}, \bibinfo {author} {\bibfnamefont {E.}~\bibnamefont {Tirrito}}, \bibinfo {author} {\bibfnamefont {T.}~\bibnamefont {Chanda}},\ and\ \bibinfo {author} {\bibfnamefont {M.}~\bibnamefont {Dalmonte}},\ }\bibfield  {title} {\bibinfo {title} {Many-body magic via pauli-markov chains---from criticality to gauge theories},\ }\href {https://doi.org/10.1103/PRXQuantum.4.040317} {\bibfield  {journal} {\bibinfo  {journal} {PRX Quantum}\ }\textbf {\bibinfo {volume} {4}},\ \bibinfo {pages} {040317} (\bibinfo {year} {2023})}\BibitemShut {NoStop}%
\bibitem [{\citenamefont {Tarabunga}(2024)}]{tarabunga2024critical}%
  \BibitemOpen
  \bibfield  {author} {\bibinfo {author} {\bibfnamefont {P.~S.}\ \bibnamefont {Tarabunga}},\ }\bibfield  {title} {\bibinfo {title} {Critical behaviors of non-stabilizerness in quantum spin chains},\ }\href {https://doi.org/10.22331/q-2024-07-17-1413} {\bibfield  {journal} {\bibinfo  {journal} {Quantum}\ }\textbf {\bibinfo {volume} {8}},\ \bibinfo {pages} {1413} (\bibinfo {year} {2024})}\BibitemShut {NoStop}%
\bibitem [{\citenamefont {Falcão}\ \emph {et~al.}(2024)\citenamefont {Falcão}, \citenamefont {Tarabunga}, \citenamefont {Frau}, \citenamefont {Tirrito}, \citenamefont {Zakrzewski},\ and\ \citenamefont {Dalmonte}}]{falcao2024nonstabilizerness}%
  \BibitemOpen
  \bibfield  {author} {\bibinfo {author} {\bibfnamefont {P.~R.~N.}\ \bibnamefont {Falcão}}, \bibinfo {author} {\bibfnamefont {P.~S.}\ \bibnamefont {Tarabunga}}, \bibinfo {author} {\bibfnamefont {M.}~\bibnamefont {Frau}}, \bibinfo {author} {\bibfnamefont {E.}~\bibnamefont {Tirrito}}, \bibinfo {author} {\bibfnamefont {J.}~\bibnamefont {Zakrzewski}},\ and\ \bibinfo {author} {\bibfnamefont {M.}~\bibnamefont {Dalmonte}},\ }\bibfield  {title} {\bibinfo {title} {Non-stabilizerness in u(1) lattice gauge theory},\ }\href {https://arxiv.org/abs/2409.01789} {\bibfield  {journal} {\bibinfo  {journal} {arXiv:2409.01789}\ } (\bibinfo {year} {2024})}\BibitemShut {NoStop}%
\bibitem [{\citenamefont {Gu}\ \emph {et~al.}(2024{\natexlab{b}})\citenamefont {Gu}, \citenamefont {Leone}, \citenamefont {Ghosh}, \citenamefont {Eisert}, \citenamefont {Yelin},\ and\ \citenamefont {Quek}}]{gu2023little}%
  \BibitemOpen
  \bibfield  {author} {\bibinfo {author} {\bibfnamefont {A.}~\bibnamefont {Gu}}, \bibinfo {author} {\bibfnamefont {L.}~\bibnamefont {Leone}}, \bibinfo {author} {\bibfnamefont {S.}~\bibnamefont {Ghosh}}, \bibinfo {author} {\bibfnamefont {J.}~\bibnamefont {Eisert}}, \bibinfo {author} {\bibfnamefont {S.~F.}\ \bibnamefont {Yelin}},\ and\ \bibinfo {author} {\bibfnamefont {Y.}~\bibnamefont {Quek}},\ }\bibfield  {title} {\bibinfo {title} {Pseudomagic quantum states},\ }\href {https://doi.org/10.1103/PhysRevLett.132.210602} {\bibfield  {journal} {\bibinfo  {journal} {Phys. Rev. Lett.}\ }\textbf {\bibinfo {volume} {132}},\ \bibinfo {pages} {210602} (\bibinfo {year} {2024}{\natexlab{b}})}\BibitemShut {NoStop}%
\bibitem [{\citenamefont {Haug}\ \emph {et~al.}(2023)\citenamefont {Haug}, \citenamefont {Bharti},\ and\ \citenamefont {Koh}}]{haug2023pseudorandom}%
  \BibitemOpen
  \bibfield  {author} {\bibinfo {author} {\bibfnamefont {T.}~\bibnamefont {Haug}}, \bibinfo {author} {\bibfnamefont {K.}~\bibnamefont {Bharti}},\ and\ \bibinfo {author} {\bibfnamefont {D.~E.}\ \bibnamefont {Koh}},\ }\bibfield  {title} {\bibinfo {title} {Pseudorandom unitaries are neither real nor sparse nor noise-robust},\ }\href {https://arxiv.org/abs/2306.11677} {\bibfield  {journal} {\bibinfo  {journal} {arXiv:2306.11677}\ } (\bibinfo {year} {2023})}\BibitemShut {NoStop}%
\bibitem [{\citenamefont {Bansal}\ \emph {et~al.}(2024)\citenamefont {Bansal}, \citenamefont {Mok}, \citenamefont {Bharti}, \citenamefont {Koh},\ and\ \citenamefont {Haug}}]{bansal2024pseudorandomdensitymatrices}%
  \BibitemOpen
  \bibfield  {author} {\bibinfo {author} {\bibfnamefont {N.}~\bibnamefont {Bansal}}, \bibinfo {author} {\bibfnamefont {W.-K.}\ \bibnamefont {Mok}}, \bibinfo {author} {\bibfnamefont {K.}~\bibnamefont {Bharti}}, \bibinfo {author} {\bibfnamefont {D.~E.}\ \bibnamefont {Koh}},\ and\ \bibinfo {author} {\bibfnamefont {T.}~\bibnamefont {Haug}},\ }\bibfield  {title} {\bibinfo {title} {Pseudorandom density matrices},\ }\href {https://arxiv.org/abs/2407.11607} {\bibfield  {journal} {\bibinfo  {journal} {arXiv:2407.11607}\ } (\bibinfo {year} {2024})}\BibitemShut {NoStop}%
\bibitem [{\citenamefont {Leone}\ \emph {et~al.}(2022)\citenamefont {Leone}, \citenamefont {Oliviero},\ and\ \citenamefont {Hamma}}]{leone2022stabilizer}%
  \BibitemOpen
  \bibfield  {author} {\bibinfo {author} {\bibfnamefont {L.}~\bibnamefont {Leone}}, \bibinfo {author} {\bibfnamefont {S.~F.~E.}\ \bibnamefont {Oliviero}},\ and\ \bibinfo {author} {\bibfnamefont {A.}~\bibnamefont {Hamma}},\ }\bibfield  {title} {\bibinfo {title} {Stabilizer r\'enyi entropy},\ }\href {https://doi.org/10.1103/PhysRevLett.128.050402} {\bibfield  {journal} {\bibinfo  {journal} {Phys. Rev. Lett.}\ }\textbf {\bibinfo {volume} {128}},\ \bibinfo {pages} {050402} (\bibinfo {year} {2022})}\BibitemShut {NoStop}%
\bibitem [{\citenamefont {Leone}\ \emph {et~al.}(2021)\citenamefont {Leone}, \citenamefont {Oliviero}, \citenamefont {Zhou},\ and\ \citenamefont {Hamma}}]{leone2021quantum}%
  \BibitemOpen
  \bibfield  {author} {\bibinfo {author} {\bibfnamefont {L.}~\bibnamefont {Leone}}, \bibinfo {author} {\bibfnamefont {S.~F.}\ \bibnamefont {Oliviero}}, \bibinfo {author} {\bibfnamefont {Y.}~\bibnamefont {Zhou}},\ and\ \bibinfo {author} {\bibfnamefont {A.}~\bibnamefont {Hamma}},\ }\bibfield  {title} {\bibinfo {title} {Quantum chaos is quantum},\ }\href {https://arxiv.org/abs/2102.08406} {\bibfield  {journal} {\bibinfo  {journal} {Quantum}\ }\textbf {\bibinfo {volume} {5}},\ \bibinfo {pages} {453} (\bibinfo {year} {2021})}\BibitemShut {NoStop}%
\bibitem [{\citenamefont {Haferkamp}(2022)}]{haferkamp2022random}%
  \BibitemOpen
  \bibfield  {author} {\bibinfo {author} {\bibfnamefont {J.}~\bibnamefont {Haferkamp}},\ }\bibfield  {title} {\bibinfo {title} {Random quantum circuits are approximate unitary $ t $-designs in depth $o(nt^{5+ o (1)})$},\ }\href {https://arxiv.org/abs/2203.16571} {\bibfield  {journal} {\bibinfo  {journal} {arXiv:2203.16571}\ } (\bibinfo {year} {2022})}\BibitemShut {NoStop}%
\bibitem [{\citenamefont {Haug}\ \emph {et~al.}(2024{\natexlab{b}})\citenamefont {Haug}, \citenamefont {Aolita},\ and\ \citenamefont {Kim}}]{haug2024probing}%
  \BibitemOpen
  \bibfield  {author} {\bibinfo {author} {\bibfnamefont {T.}~\bibnamefont {Haug}}, \bibinfo {author} {\bibfnamefont {L.}~\bibnamefont {Aolita}},\ and\ \bibinfo {author} {\bibfnamefont {M.}~\bibnamefont {Kim}},\ }\bibfield  {title} {\bibinfo {title} {Probing quantum complexity via universal saturation of stabilizer entropies},\ }\href {https://arxiv.org/abs/2406.04190} {\bibfield  {journal} {\bibinfo  {journal} {arXiv:2406.04190}\ } (\bibinfo {year} {2024}{\natexlab{b}})}\BibitemShut {NoStop}%
\bibitem [{\citenamefont {L{\'o}pez}\ and\ \citenamefont {Kos}(2024)}]{lopez2024exact}%
  \BibitemOpen
  \bibfield  {author} {\bibinfo {author} {\bibfnamefont {J.~A.~M.}\ \bibnamefont {L{\'o}pez}}\ and\ \bibinfo {author} {\bibfnamefont {P.}~\bibnamefont {Kos}},\ }\bibfield  {title} {\bibinfo {title} {Exact solution of long-range stabilizer renyi entropy in the dual-unitary xxz model},\ }\href {https://arxiv.org/abs/2405.04448} {\bibfield  {journal} {\bibinfo  {journal} {arXiv:2405.04448}\ } (\bibinfo {year} {2024})}\BibitemShut {NoStop}%
\bibitem [{\citenamefont {Turkeshi}\ \emph {et~al.}(2024)\citenamefont {Turkeshi}, \citenamefont {Tirrito},\ and\ \citenamefont {Sierant}}]{turkeshi2024magic}%
  \BibitemOpen
  \bibfield  {author} {\bibinfo {author} {\bibfnamefont {X.}~\bibnamefont {Turkeshi}}, \bibinfo {author} {\bibfnamefont {E.}~\bibnamefont {Tirrito}},\ and\ \bibinfo {author} {\bibfnamefont {P.}~\bibnamefont {Sierant}},\ }\bibfield  {title} {\bibinfo {title} {Magic spreading in random quantum circuits},\ }\href {https://arxiv.org/abs/2407.03929} {\bibfield  {journal} {\bibinfo  {journal} {arXiv:2407.03929}\ } (\bibinfo {year} {2024})}\BibitemShut {NoStop}%
\bibitem [{\citenamefont {Dowling}\ \emph {et~al.}(2024)\citenamefont {Dowling}, \citenamefont {Kos},\ and\ \citenamefont {Turkeshi}}]{dowling2024magic}%
  \BibitemOpen
  \bibfield  {author} {\bibinfo {author} {\bibfnamefont {N.}~\bibnamefont {Dowling}}, \bibinfo {author} {\bibfnamefont {P.}~\bibnamefont {Kos}},\ and\ \bibinfo {author} {\bibfnamefont {X.}~\bibnamefont {Turkeshi}},\ }\bibfield  {title} {\bibinfo {title} {Magic of the heisenberg picture},\ }\href {https://arxiv.org/abs/2408.16047} {\bibfield  {journal} {\bibinfo  {journal} {arXiv:2408.16047}\ } (\bibinfo {year} {2024})}\BibitemShut {NoStop}%
\bibitem [{\citenamefont {Tirrito}\ \emph {et~al.}(2024)\citenamefont {Tirrito}, \citenamefont {Tarabunga}, \citenamefont {Lami}, \citenamefont {Chanda}, \citenamefont {Leone}, \citenamefont {Oliviero}, \citenamefont {Dalmonte}, \citenamefont {Collura},\ and\ \citenamefont {Hamma}}]{tirrito2023quantifying}%
  \BibitemOpen
  \bibfield  {author} {\bibinfo {author} {\bibfnamefont {E.}~\bibnamefont {Tirrito}}, \bibinfo {author} {\bibfnamefont {P.~S.}\ \bibnamefont {Tarabunga}}, \bibinfo {author} {\bibfnamefont {G.}~\bibnamefont {Lami}}, \bibinfo {author} {\bibfnamefont {T.}~\bibnamefont {Chanda}}, \bibinfo {author} {\bibfnamefont {L.}~\bibnamefont {Leone}}, \bibinfo {author} {\bibfnamefont {S.~F.}\ \bibnamefont {Oliviero}}, \bibinfo {author} {\bibfnamefont {M.}~\bibnamefont {Dalmonte}}, \bibinfo {author} {\bibfnamefont {M.}~\bibnamefont {Collura}},\ and\ \bibinfo {author} {\bibfnamefont {A.}~\bibnamefont {Hamma}},\ }\bibfield  {title} {\bibinfo {title} {Quantifying nonstabilizerness through entanglement spectrum flatness},\ }\href {https://doi.org/10.1103/PhysRevA.109.L040401} {\bibfield  {journal} {\bibinfo  {journal} {Physical Review A}\ }\textbf {\bibinfo {volume} {109}},\ \bibinfo {pages} {L040401} (\bibinfo {year} {2024})}\BibitemShut {NoStop}%
\bibitem [{\citenamefont {Gu}\ \emph {et~al.}(2024{\natexlab{c}})\citenamefont {Gu}, \citenamefont {Oliviero},\ and\ \citenamefont {Leone}}]{gu2024magic}%
  \BibitemOpen
  \bibfield  {author} {\bibinfo {author} {\bibfnamefont {A.}~\bibnamefont {Gu}}, \bibinfo {author} {\bibfnamefont {S.~F.}\ \bibnamefont {Oliviero}},\ and\ \bibinfo {author} {\bibfnamefont {L.}~\bibnamefont {Leone}},\ }\bibfield  {title} {\bibinfo {title} {Magic-induced computational separation in entanglement theory},\ }\href {https://arxiv.org/abs/2403.19610} {\bibfield  {journal} {\bibinfo  {journal} {arXiv:2403.19610}\ } (\bibinfo {year} {2024}{\natexlab{c}})}\BibitemShut {NoStop}%
\bibitem [{\citenamefont {Fux}\ \emph {et~al.}(2023)\citenamefont {Fux}, \citenamefont {Tirrito}, \citenamefont {Dalmonte},\ and\ \citenamefont {Fazio}}]{fux2023entanglement}%
  \BibitemOpen
  \bibfield  {author} {\bibinfo {author} {\bibfnamefont {G.~E.}\ \bibnamefont {Fux}}, \bibinfo {author} {\bibfnamefont {E.}~\bibnamefont {Tirrito}}, \bibinfo {author} {\bibfnamefont {M.}~\bibnamefont {Dalmonte}},\ and\ \bibinfo {author} {\bibfnamefont {R.}~\bibnamefont {Fazio}},\ }\bibfield  {title} {\bibinfo {title} {Entanglement-magic separation in hybrid quantum circuits},\ }\href {https://arxiv.org/abs/2312.02039} {\bibfield  {journal} {\bibinfo  {journal} {arXiv:2312.02039}\ } (\bibinfo {year} {2023})}\BibitemShut {NoStop}%
\bibitem [{\citenamefont {Frau}\ \emph {et~al.}(2024)\citenamefont {Frau}, \citenamefont {Tarabunga}, \citenamefont {Collura}, \citenamefont {Dalmonte},\ and\ \citenamefont {Tirrito}}]{frau2024nonstabilizerness}%
  \BibitemOpen
  \bibfield  {author} {\bibinfo {author} {\bibfnamefont {M.}~\bibnamefont {Frau}}, \bibinfo {author} {\bibfnamefont {P.}~\bibnamefont {Tarabunga}}, \bibinfo {author} {\bibfnamefont {M.}~\bibnamefont {Collura}}, \bibinfo {author} {\bibfnamefont {M.}~\bibnamefont {Dalmonte}},\ and\ \bibinfo {author} {\bibfnamefont {E.}~\bibnamefont {Tirrito}},\ }\bibfield  {title} {\bibinfo {title} {Nonstabilizerness versus entanglement in matrix product states},\ }\href {https://doi.org/10.1103/PhysRevB.110.045101} {\bibfield  {journal} {\bibinfo  {journal} {Physical Review B}\ }\textbf {\bibinfo {volume} {110}},\ \bibinfo {pages} {045101} (\bibinfo {year} {2024})}\BibitemShut {NoStop}%
\bibitem [{\citenamefont {Bejan}\ \emph {et~al.}(2024)\citenamefont {Bejan}, \citenamefont {McLauchlan},\ and\ \citenamefont {B{\'e}ri}}]{bejan2024dynamical}%
  \BibitemOpen
  \bibfield  {author} {\bibinfo {author} {\bibfnamefont {M.}~\bibnamefont {Bejan}}, \bibinfo {author} {\bibfnamefont {C.}~\bibnamefont {McLauchlan}},\ and\ \bibinfo {author} {\bibfnamefont {B.}~\bibnamefont {B{\'e}ri}},\ }\bibfield  {title} {\bibinfo {title} {Dynamical magic transitions in monitored clifford+ t circuits},\ }\href {https://doi.org/10.1103/PRXQuantum.5.030332} {\bibfield  {journal} {\bibinfo  {journal} {PRX Quantum}\ }\textbf {\bibinfo {volume} {5}},\ \bibinfo {pages} {030332} (\bibinfo {year} {2024})}\BibitemShut {NoStop}%
\bibitem [{\citenamefont {Tarabunga}\ and\ \citenamefont {Tirrito}(2024)}]{tarabunga2024magictransition}%
  \BibitemOpen
  \bibfield  {author} {\bibinfo {author} {\bibfnamefont {P.~S.}\ \bibnamefont {Tarabunga}}\ and\ \bibinfo {author} {\bibfnamefont {E.}~\bibnamefont {Tirrito}},\ }\bibfield  {title} {\bibinfo {title} {Magic transition in measurement-only circuits},\ }\href {https://arxiv.org/abs/2407.15939} {\bibfield  {journal} {\bibinfo  {journal} {arXiv:2407.15939}\ } (\bibinfo {year} {2024})}\BibitemShut {NoStop}%
\bibitem [{\citenamefont {Lami}\ \emph {et~al.}(2024)\citenamefont {Lami}, \citenamefont {Haug},\ and\ \citenamefont {De~Nardis}}]{lami2024quantum}%
  \BibitemOpen
  \bibfield  {author} {\bibinfo {author} {\bibfnamefont {G.}~\bibnamefont {Lami}}, \bibinfo {author} {\bibfnamefont {T.}~\bibnamefont {Haug}},\ and\ \bibinfo {author} {\bibfnamefont {J.}~\bibnamefont {De~Nardis}},\ }\bibfield  {title} {\bibinfo {title} {Quantum state designs with clifford enhanced matrix product states},\ }\href {https://arxiv.org/abs/2404.18751} {\bibfield  {journal} {\bibinfo  {journal} {arXiv:2404.18751}\ } (\bibinfo {year} {2024})}\BibitemShut {NoStop}%
\bibitem [{\citenamefont {Turkeshi}\ \emph {et~al.}(2023)\citenamefont {Turkeshi}, \citenamefont {Schir{\`o}},\ and\ \citenamefont {Sierant}}]{turkeshi2023measuring}%
  \BibitemOpen
  \bibfield  {author} {\bibinfo {author} {\bibfnamefont {X.}~\bibnamefont {Turkeshi}}, \bibinfo {author} {\bibfnamefont {M.}~\bibnamefont {Schir{\`o}}},\ and\ \bibinfo {author} {\bibfnamefont {P.}~\bibnamefont {Sierant}},\ }\bibfield  {title} {\bibinfo {title} {Measuring nonstabilizerness via multifractal flatness},\ }\href {https://doi.org/10.1103/PhysRevA.108.042408} {\bibfield  {journal} {\bibinfo  {journal} {Physical Review A}\ }\textbf {\bibinfo {volume} {108}},\ \bibinfo {pages} {042408} (\bibinfo {year} {2023})}\BibitemShut {NoStop}%
\bibitem [{\citenamefont {Leone}\ \emph {et~al.}(2023{\natexlab{a}})\citenamefont {Leone}, \citenamefont {Oliviero},\ and\ \citenamefont {Hamma}}]{leone2023nonstabilizerness}%
  \BibitemOpen
  \bibfield  {author} {\bibinfo {author} {\bibfnamefont {L.}~\bibnamefont {Leone}}, \bibinfo {author} {\bibfnamefont {S.~F.~E.}\ \bibnamefont {Oliviero}},\ and\ \bibinfo {author} {\bibfnamefont {A.}~\bibnamefont {Hamma}},\ }\bibfield  {title} {\bibinfo {title} {Nonstabilizerness determining the hardness of direct fidelity estimation},\ }\href {https://doi.org/10.1103/PhysRevA.107.022429} {\bibfield  {journal} {\bibinfo  {journal} {Phys. Rev. A}\ }\textbf {\bibinfo {volume} {107}},\ \bibinfo {pages} {022429} (\bibinfo {year} {2023}{\natexlab{a}})}\BibitemShut {NoStop}%
\bibitem [{\citenamefont {Leone}\ \emph {et~al.}(2023{\natexlab{b}})\citenamefont {Leone}, \citenamefont {Oliviero}, \citenamefont {Esposito},\ and\ \citenamefont {Hamma}}]{leone2023phase}%
  \BibitemOpen
  \bibfield  {author} {\bibinfo {author} {\bibfnamefont {L.}~\bibnamefont {Leone}}, \bibinfo {author} {\bibfnamefont {S.~F.}\ \bibnamefont {Oliviero}}, \bibinfo {author} {\bibfnamefont {G.}~\bibnamefont {Esposito}},\ and\ \bibinfo {author} {\bibfnamefont {A.}~\bibnamefont {Hamma}},\ }\bibfield  {title} {\bibinfo {title} {Phase transition in stabilizer entropy and efficient purity estimation},\ }\href {https://arxiv.org/abs/2302.07895} {\bibfield  {journal} {\bibinfo  {journal} {arXiv:2302.07895}\ } (\bibinfo {year} {2023}{\natexlab{b}})}\BibitemShut {NoStop}%
\bibitem [{\citenamefont {Garcia}\ \emph {et~al.}(2023)\citenamefont {Garcia}, \citenamefont {Bu},\ and\ \citenamefont {Jaffe}}]{garcia2023resource}%
  \BibitemOpen
  \bibfield  {author} {\bibinfo {author} {\bibfnamefont {R.~J.}\ \bibnamefont {Garcia}}, \bibinfo {author} {\bibfnamefont {K.}~\bibnamefont {Bu}},\ and\ \bibinfo {author} {\bibfnamefont {A.}~\bibnamefont {Jaffe}},\ }\bibfield  {title} {\bibinfo {title} {Resource theory of quantum scrambling},\ }\href@noop {} {\bibfield  {journal} {\bibinfo  {journal} {Proceedings of the National Academy of Sciences}\ }\textbf {\bibinfo {volume} {120}},\ \bibinfo {pages} {e2217031120} (\bibinfo {year} {2023})}\BibitemShut {NoStop}%
\bibitem [{\citenamefont {Leone}\ and\ \citenamefont {Bittel}(2024)}]{leone2024stabilizer}%
  \BibitemOpen
  \bibfield  {author} {\bibinfo {author} {\bibfnamefont {L.}~\bibnamefont {Leone}}\ and\ \bibinfo {author} {\bibfnamefont {L.}~\bibnamefont {Bittel}},\ }\bibfield  {title} {\bibinfo {title} {Stabilizer entropies are monotones for magic-state resource theory},\ }\href {https://arxiv.org/abs/2404.11652} {\bibfield  {journal} {\bibinfo  {journal} {arXiv:2404.11652}\ } (\bibinfo {year} {2024})}\BibitemShut {NoStop}%
\bibitem [{\citenamefont {Tarabunga}\ and\ \citenamefont {Castelnovo}(2024)}]{tarabunga2023magic}%
  \BibitemOpen
  \bibfield  {author} {\bibinfo {author} {\bibfnamefont {P.~S.}\ \bibnamefont {Tarabunga}}\ and\ \bibinfo {author} {\bibfnamefont {C.}~\bibnamefont {Castelnovo}},\ }\bibfield  {title} {\bibinfo {title} {Magic in generalized {Rokhsar-Kivelson} wavefunctions},\ }\href {https://doi.org/10.22331/q-2024-05-14-1347} {\bibfield  {journal} {\bibinfo  {journal} {Quantum}\ }\textbf {\bibinfo {volume} {8}},\ \bibinfo {pages} {1347} (\bibinfo {year} {2024})}\BibitemShut {NoStop}%
\bibitem [{\citenamefont {Liu}\ and\ \citenamefont {Clark}(2024)}]{liu2024nonequilibrium}%
  \BibitemOpen
  \bibfield  {author} {\bibinfo {author} {\bibfnamefont {Z.}~\bibnamefont {Liu}}\ and\ \bibinfo {author} {\bibfnamefont {B.~K.}\ \bibnamefont {Clark}},\ }\bibfield  {title} {\bibinfo {title} {Non-equilibrium quantum monte carlo algorithm for stabilizer r\'enyi entropy in spin systems},\ }\href {https://arxiv.orgabs/2405.19577} {\bibfield  {journal} {\bibinfo  {journal} {arXiv:2405.19577}\ } (\bibinfo {year} {2024})}\BibitemShut {NoStop}%
\bibitem [{\citenamefont {Vaccaro}\ \emph {et~al.}(2008)\citenamefont {Vaccaro}, \citenamefont {Anselmi}, \citenamefont {Wiseman},\ and\ \citenamefont {Jacobs}}]{vaccaro2008tradeoff}%
  \BibitemOpen
  \bibfield  {author} {\bibinfo {author} {\bibfnamefont {J.~A.}\ \bibnamefont {Vaccaro}}, \bibinfo {author} {\bibfnamefont {F.}~\bibnamefont {Anselmi}}, \bibinfo {author} {\bibfnamefont {H.~M.}\ \bibnamefont {Wiseman}},\ and\ \bibinfo {author} {\bibfnamefont {K.}~\bibnamefont {Jacobs}},\ }\bibfield  {title} {\bibinfo {title} {Tradeoff between extractable mechanical work, accessible entanglement, and ability to act as a reference system, under arbitrary superselection rules},\ }\href {https://doi.org/10.1103/PhysRevA.77.032114} {\bibfield  {journal} {\bibinfo  {journal} {Phys. Rev. A}\ }\textbf {\bibinfo {volume} {77}},\ \bibinfo {pages} {032114} (\bibinfo {year} {2008})}\BibitemShut {NoStop}%
\bibitem [{\citenamefont {Gour}\ \emph {et~al.}(2009)\citenamefont {Gour}, \citenamefont {Marvian},\ and\ \citenamefont {Spekkens}}]{gour2009measuring}%
  \BibitemOpen
  \bibfield  {author} {\bibinfo {author} {\bibfnamefont {G.}~\bibnamefont {Gour}}, \bibinfo {author} {\bibfnamefont {I.}~\bibnamefont {Marvian}},\ and\ \bibinfo {author} {\bibfnamefont {R.~W.}\ \bibnamefont {Spekkens}},\ }\bibfield  {title} {\bibinfo {title} {Measuring the quality of a quantum reference frame: The relative entropy of frameness},\ }\href {https://doi.org/10.1103/PhysRevA.80.012307} {\bibfield  {journal} {\bibinfo  {journal} {Phys. Rev. A}\ }\textbf {\bibinfo {volume} {80}},\ \bibinfo {pages} {012307} (\bibinfo {year} {2009})}\BibitemShut {NoStop}%
\bibitem [{\citenamefont {Niroula}\ \emph {et~al.}(2024)\citenamefont {Niroula}, \citenamefont {White}, \citenamefont {Wang}, \citenamefont {Johri}, \citenamefont {Zhu}, \citenamefont {Monroe}, \citenamefont {Noel},\ and\ \citenamefont {Gullans}}]{niroula2023phase}%
  \BibitemOpen
  \bibfield  {author} {\bibinfo {author} {\bibfnamefont {P.}~\bibnamefont {Niroula}}, \bibinfo {author} {\bibfnamefont {C.~D.}\ \bibnamefont {White}}, \bibinfo {author} {\bibfnamefont {Q.}~\bibnamefont {Wang}}, \bibinfo {author} {\bibfnamefont {S.}~\bibnamefont {Johri}}, \bibinfo {author} {\bibfnamefont {D.}~\bibnamefont {Zhu}}, \bibinfo {author} {\bibfnamefont {C.}~\bibnamefont {Monroe}}, \bibinfo {author} {\bibfnamefont {C.}~\bibnamefont {Noel}},\ and\ \bibinfo {author} {\bibfnamefont {M.~J.}\ \bibnamefont {Gullans}},\ }\bibfield  {title} {\bibinfo {title} {Phase transition in magic with random quantum circuits},\ }\href {https://doi.org/10.1038/s41567-024-02637-3} {\bibfield  {journal} {\bibinfo  {journal} {Nature Phys.}\ ,\ \bibinfo {pages} {1}} (\bibinfo {year} {2024})}\BibitemShut {NoStop}%
\bibitem [{\citenamefont {Heimendahl}\ \emph {et~al.}(2022)\citenamefont {Heimendahl}, \citenamefont {Heinrich},\ and\ \citenamefont {Gross}}]{heimendahl2022axiomatic}%
  \BibitemOpen
  \bibfield  {author} {\bibinfo {author} {\bibfnamefont {A.}~\bibnamefont {Heimendahl}}, \bibinfo {author} {\bibfnamefont {M.}~\bibnamefont {Heinrich}},\ and\ \bibinfo {author} {\bibfnamefont {D.}~\bibnamefont {Gross}},\ }\bibfield  {title} {\bibinfo {title} {The axiomatic and the operational approaches to resource theories of magic do not coincide},\ }\href {https://doi.org/10.1063/5.0085774} {\bibfield  {journal} {\bibinfo  {journal} {J. Math. Phys.}\ }\textbf {\bibinfo {volume} {63}},\ \bibinfo {pages} {112201} (\bibinfo {year} {2022})}\BibitemShut {NoStop}%
\bibitem [{\citenamefont {Bravyi}\ \emph {et~al.}(2019)\citenamefont {Bravyi}, \citenamefont {Browne}, \citenamefont {Calpin}, \citenamefont {Campbell}, \citenamefont {Gosset},\ and\ \citenamefont {Howard}}]{bravyi2019simulation}%
  \BibitemOpen
  \bibfield  {author} {\bibinfo {author} {\bibfnamefont {S.}~\bibnamefont {Bravyi}}, \bibinfo {author} {\bibfnamefont {D.}~\bibnamefont {Browne}}, \bibinfo {author} {\bibfnamefont {P.}~\bibnamefont {Calpin}}, \bibinfo {author} {\bibfnamefont {E.}~\bibnamefont {Campbell}}, \bibinfo {author} {\bibfnamefont {D.}~\bibnamefont {Gosset}},\ and\ \bibinfo {author} {\bibfnamefont {M.}~\bibnamefont {Howard}},\ }\bibfield  {title} {\bibinfo {title} {Simulation of quantum circuits by low-rank stabilizer decompositions},\ }\href {https://doi.org/10.22331/q-2019-09-02-181} {\bibfield  {journal} {\bibinfo  {journal} {Quantum}\ }\textbf {\bibinfo {volume} {3}},\ \bibinfo {pages} {181} (\bibinfo {year} {2019})}\BibitemShut {NoStop}%
\bibitem [{\citenamefont {Marvian}\ and\ \citenamefont {Spekkens}(2014)}]{marvian2014extending}%
  \BibitemOpen
  \bibfield  {author} {\bibinfo {author} {\bibfnamefont {I.}~\bibnamefont {Marvian}}\ and\ \bibinfo {author} {\bibfnamefont {R.~W.}\ \bibnamefont {Spekkens}},\ }\bibfield  {title} {\bibinfo {title} {Extending noether’s theorem by quantifying the asymmetry of quantum states},\ }\href {https://doi.org/10.1038/ncomms4821} {\bibfield  {journal} {\bibinfo  {journal} {Nature Comm.}\ }\textbf {\bibinfo {volume} {5}},\ \bibinfo {pages} {3821} (\bibinfo {year} {2014})}\BibitemShut {NoStop}%
\bibitem [{\citenamefont {Ares}\ \emph {et~al.}(2023)\citenamefont {Ares}, \citenamefont {Murciano},\ and\ \citenamefont {Calabrese}}]{ares2023entanglement}%
  \BibitemOpen
  \bibfield  {author} {\bibinfo {author} {\bibfnamefont {F.}~\bibnamefont {Ares}}, \bibinfo {author} {\bibfnamefont {S.}~\bibnamefont {Murciano}},\ and\ \bibinfo {author} {\bibfnamefont {P.}~\bibnamefont {Calabrese}},\ }\bibfield  {title} {\bibinfo {title} {Entanglement asymmetry as a probe of symmetry breaking},\ }\href {https://doi.org/10.1038/s41467-023-37747-8} {\bibfield  {journal} {\bibinfo  {journal} {Nature Comm,}\ }\textbf {\bibinfo {volume} {14}},\ \bibinfo {pages} {2036} (\bibinfo {year} {2023})}\BibitemShut {NoStop}%
\bibitem [{\citenamefont {Joshi}\ \emph {et~al.}(2024)\citenamefont {Joshi}, \citenamefont {Franke}, \citenamefont {Rath}, \citenamefont {Ares}, \citenamefont {Murciano}, \citenamefont {Kranzl}, \citenamefont {Blatt}, \citenamefont {Zoller}, \citenamefont {Vermersch}, \citenamefont {Calabrese}, \citenamefont {Roos},\ and\ \citenamefont {Joshi}}]{joshi2024observing}%
  \BibitemOpen
  \bibfield  {author} {\bibinfo {author} {\bibfnamefont {L.~K.}\ \bibnamefont {Joshi}}, \bibinfo {author} {\bibfnamefont {J.}~\bibnamefont {Franke}}, \bibinfo {author} {\bibfnamefont {A.}~\bibnamefont {Rath}}, \bibinfo {author} {\bibfnamefont {F.}~\bibnamefont {Ares}}, \bibinfo {author} {\bibfnamefont {S.}~\bibnamefont {Murciano}}, \bibinfo {author} {\bibfnamefont {F.}~\bibnamefont {Kranzl}}, \bibinfo {author} {\bibfnamefont {R.}~\bibnamefont {Blatt}}, \bibinfo {author} {\bibfnamefont {P.}~\bibnamefont {Zoller}}, \bibinfo {author} {\bibfnamefont {B.}~\bibnamefont {Vermersch}}, \bibinfo {author} {\bibfnamefont {P.}~\bibnamefont {Calabrese}}, \bibinfo {author} {\bibfnamefont {C.~F.}\ \bibnamefont {Roos}},\ and\ \bibinfo {author} {\bibfnamefont {M.~K.}\ \bibnamefont {Joshi}},\ }\bibfield  {title} {\bibinfo {title} {Observing the quantum mpemba effect in quantum simulations},\ }\href {https://doi.org/10.1103/PhysRevLett.133.010402} {\bibfield  {journal} {\bibinfo  {journal} {Phys. Rev. Lett.}\ }\textbf {\bibinfo
  {volume} {133}},\ \bibinfo {pages} {010402} (\bibinfo {year} {2024})}\BibitemShut {NoStop}%
\bibitem [{\citenamefont {Arunachalam}\ \emph {et~al.}(2024)\citenamefont {Arunachalam}, \citenamefont {Bravyi},\ and\ \citenamefont {Dutt}}]{arunachalam2024notepolynomialtimetoleranttesting}%
  \BibitemOpen
  \bibfield  {author} {\bibinfo {author} {\bibfnamefont {S.}~\bibnamefont {Arunachalam}}, \bibinfo {author} {\bibfnamefont {S.}~\bibnamefont {Bravyi}},\ and\ \bibinfo {author} {\bibfnamefont {A.}~\bibnamefont {Dutt}},\ }\bibfield  {title} {\bibinfo {title} {A note on polynomial-time tolerant testing stabilizer states},\ }\href {https://arxiv.org/abs/2410.22220} {\bibfield  {journal} {\bibinfo  {journal} {arXiv:2410.22220}\ } (\bibinfo {year} {2024})}\BibitemShut {NoStop}%
\bibitem [{\citenamefont {Beverland}\ \emph {et~al.}(2020)\citenamefont {Beverland}, \citenamefont {Campbell}, \citenamefont {Howard},\ and\ \citenamefont {Kliuchnikov}}]{beverland2020lower}%
  \BibitemOpen
  \bibfield  {author} {\bibinfo {author} {\bibfnamefont {M.}~\bibnamefont {Beverland}}, \bibinfo {author} {\bibfnamefont {E.}~\bibnamefont {Campbell}}, \bibinfo {author} {\bibfnamefont {M.}~\bibnamefont {Howard}},\ and\ \bibinfo {author} {\bibfnamefont {V.}~\bibnamefont {Kliuchnikov}},\ }\bibfield  {title} {\bibinfo {title} {Lower bounds on the non-clifford resources for quantum computations},\ }\href {https://doi.org/10.1088/2058-9565/ab8963} {\bibfield  {journal} {\bibinfo  {journal} {Quantum Science Tech.}\ }\textbf {\bibinfo {volume} {5}},\ \bibinfo {pages} {035009} (\bibinfo {year} {2020})}\BibitemShut {NoStop}%
\bibitem [{\citenamefont {Yoder}(2012)}]{yoder2012generalization}%
  \BibitemOpen
  \bibfield  {author} {\bibinfo {author} {\bibfnamefont {T.~J.}\ \bibnamefont {Yoder}},\ }\bibfield  {title} {\bibinfo {title} {A generalization of the stabilizer formalism for simulating arbitrary quantum circuits},\ }\href@noop {} {\bibfield  {journal} {\bibinfo  {journal} {See http://www. scottaaronson. com/showcase2/report/ted-yoder. pdf}\ } (\bibinfo {year} {2012})}\BibitemShut {NoStop}%
\bibitem [{\citenamefont {Bravyi}\ \emph {et~al.}(2016)\citenamefont {Bravyi}, \citenamefont {Smith},\ and\ \citenamefont {Smolin}}]{bravyi2016trading}%
  \BibitemOpen
  \bibfield  {author} {\bibinfo {author} {\bibfnamefont {S.}~\bibnamefont {Bravyi}}, \bibinfo {author} {\bibfnamefont {G.}~\bibnamefont {Smith}},\ and\ \bibinfo {author} {\bibfnamefont {J.~A.}\ \bibnamefont {Smolin}},\ }\bibfield  {title} {\bibinfo {title} {Trading classical and quantum computational resources},\ }\href {https://doi.org/10.1103/PhysRevX.6.021043} {\bibfield  {journal} {\bibinfo  {journal} {Phys. Rev. X}\ }\textbf {\bibinfo {volume} {6}},\ \bibinfo {pages} {021043} (\bibinfo {year} {2016})}\BibitemShut {NoStop}%
\bibitem [{\citenamefont {Baumgratz}\ \emph {et~al.}(2014)\citenamefont {Baumgratz}, \citenamefont {Cramer},\ and\ \citenamefont {Plenio}}]{baumgratz2014quantifying}%
  \BibitemOpen
  \bibfield  {author} {\bibinfo {author} {\bibfnamefont {T.}~\bibnamefont {Baumgratz}}, \bibinfo {author} {\bibfnamefont {M.}~\bibnamefont {Cramer}},\ and\ \bibinfo {author} {\bibfnamefont {M.~B.}\ \bibnamefont {Plenio}},\ }\bibfield  {title} {\bibinfo {title} {Quantifying coherence},\ }\href {https://doi.org/10.1103/PhysRevLett.113.140401} {\bibfield  {journal} {\bibinfo  {journal} {Phys. Rev. Lett.}\ }\textbf {\bibinfo {volume} {113}},\ \bibinfo {pages} {140401} (\bibinfo {year} {2014})}\BibitemShut {NoStop}%
\bibitem [{\citenamefont {Vidal}(2000)}]{vidal2000entanglement}%
  \BibitemOpen
  \bibfield  {author} {\bibinfo {author} {\bibfnamefont {G.}~\bibnamefont {Vidal}},\ }\bibfield  {title} {\bibinfo {title} {Entanglement monotones},\ }\href {https://doi.org/10.1080/09500340008244048} {\bibfield  {journal} {\bibinfo  {journal} {J. Mod. Optics}\ }\textbf {\bibinfo {volume} {47}},\ \bibinfo {pages} {355–376} (\bibinfo {year} {2000})}\BibitemShut {NoStop}%
\bibitem [{\citenamefont {Horodecki}\ \emph {et~al.}(2009)\citenamefont {Horodecki}, \citenamefont {Horodecki}, \citenamefont {Horodecki},\ and\ \citenamefont {Horodecki}}]{horodecki2009quantum}%
  \BibitemOpen
  \bibfield  {author} {\bibinfo {author} {\bibfnamefont {R.}~\bibnamefont {Horodecki}}, \bibinfo {author} {\bibfnamefont {P.}~\bibnamefont {Horodecki}}, \bibinfo {author} {\bibfnamefont {M.}~\bibnamefont {Horodecki}},\ and\ \bibinfo {author} {\bibfnamefont {K.}~\bibnamefont {Horodecki}},\ }\bibfield  {title} {\bibinfo {title} {Quantum entanglement},\ }\href {https://doi.org/10.1103/revmodphys.81.865} {\bibfield  {journal} {\bibinfo  {journal} {Rev. Mod. Phys.}\ }\textbf {\bibinfo {volume} {81}},\ \bibinfo {pages} {865–942} (\bibinfo {year} {2009})}\BibitemShut {NoStop}%
\bibitem [{\citenamefont {Schuch}\ \emph {et~al.}(2008)\citenamefont {Schuch}, \citenamefont {Wolf}, \citenamefont {Verstraete},\ and\ \citenamefont {Cirac}}]{schuch2008entropy}%
  \BibitemOpen
  \bibfield  {author} {\bibinfo {author} {\bibfnamefont {N.}~\bibnamefont {Schuch}}, \bibinfo {author} {\bibfnamefont {M.~M.}\ \bibnamefont {Wolf}}, \bibinfo {author} {\bibfnamefont {F.}~\bibnamefont {Verstraete}},\ and\ \bibinfo {author} {\bibfnamefont {J.~I.}\ \bibnamefont {Cirac}},\ }\bibfield  {title} {\bibinfo {title} {Entropy scaling and simulability by matrix product states},\ }\href {https://doi.org/10.1103/PhysRevLett.100.030504} {\bibfield  {journal} {\bibinfo  {journal} {Phys. Rev. Lett.}\ }\textbf {\bibinfo {volume} {100}},\ \bibinfo {pages} {030504} (\bibinfo {year} {2008})}\BibitemShut {NoStop}%
\bibitem [{\citenamefont {Audenaert}(2007)}]{audenaert2007sharp}%
  \BibitemOpen
  \bibfield  {author} {\bibinfo {author} {\bibfnamefont {K.~M.}\ \bibnamefont {Audenaert}},\ }\bibfield  {title} {\bibinfo {title} {A sharp continuity estimate for the von neumann entropy},\ }\href {https://doi.org/10.1088/1751-8113/40/28/S18} {\bibfield  {journal} {\bibinfo  {journal} {J. Phys. A: Math. Theor.}\ }\textbf {\bibinfo {volume} {40}},\ \bibinfo {pages} {8127} (\bibinfo {year} {2007})}\BibitemShut {NoStop}%
\bibitem [{\citenamefont {Begušić}\ \emph {et~al.}(2024)\citenamefont {Begušić}, \citenamefont {Gray},\ and\ \citenamefont {Chan}}]{begui2024fast}%
  \BibitemOpen
  \bibfield  {author} {\bibinfo {author} {\bibfnamefont {T.}~\bibnamefont {Begušić}}, \bibinfo {author} {\bibfnamefont {J.}~\bibnamefont {Gray}},\ and\ \bibinfo {author} {\bibfnamefont {G.~K.-L.}\ \bibnamefont {Chan}},\ }\bibfield  {title} {\bibinfo {title} {Fast and converged classical simulations of evidence for the utility of quantum computing before fault tolerance},\ }\href {https://doi.org/10.1126/sciadv.adk4321} {\bibfield  {journal} {\bibinfo  {journal} {Science Advances}\ }\textbf {\bibinfo {volume} {10}},\ \bibinfo {pages} {eadk4321} (\bibinfo {year} {2024})}\BibitemShut {NoStop}%
\bibitem [{\citenamefont {Begušić}\ and\ \citenamefont {Chan}(2024)}]{begui2024realtime}%
  \BibitemOpen
  \bibfield  {author} {\bibinfo {author} {\bibfnamefont {T.}~\bibnamefont {Begušić}}\ and\ \bibinfo {author} {\bibfnamefont {G.~K.-L.}\ \bibnamefont {Chan}},\ }\bibfield  {title} {\bibinfo {title} {Real-time operator evolution in two and three dimensions via sparse pauli dynamics},\ }\href {https://arxiv.org/abs/2409.03097} {\bibfield  {journal} {\bibinfo  {journal} {arXiv:2409.03097}\ } (\bibinfo {year} {2024})}\BibitemShut {NoStop}%
\bibitem [{\citenamefont {Masot-Llima}\ and\ \citenamefont {Garcia-Saez}(2024)}]{masot2024stabilizer}%
  \BibitemOpen
  \bibfield  {author} {\bibinfo {author} {\bibfnamefont {S.}~\bibnamefont {Masot-Llima}}\ and\ \bibinfo {author} {\bibfnamefont {A.}~\bibnamefont {Garcia-Saez}},\ }\bibfield  {title} {\bibinfo {title} {Stabilizer tensor networks: universal quantum simulator on a basis of stabilizer states},\ }\href {https://arxiv.org/abs/2403.08724} {\bibfield  {journal} {\bibinfo  {journal} {arXiv:2403.08724}\ } (\bibinfo {year} {2024})}\BibitemShut {NoStop}%
\bibitem [{\citenamefont {Hamaguchi}\ \emph {et~al.}(2024{\natexlab{a}})\citenamefont {Hamaguchi}, \citenamefont {Hamada},\ and\ \citenamefont {Yoshioka}}]{hamaguchi2024handbook}%
  \BibitemOpen
  \bibfield  {author} {\bibinfo {author} {\bibfnamefont {H.}~\bibnamefont {Hamaguchi}}, \bibinfo {author} {\bibfnamefont {K.}~\bibnamefont {Hamada}},\ and\ \bibinfo {author} {\bibfnamefont {N.}~\bibnamefont {Yoshioka}},\ }\bibfield  {title} {\bibinfo {title} {Handbook for quantifying robustness of magic},\ }\href {https://doi.org/10.22331/q-2024-09-05-1461} {\bibfield  {journal} {\bibinfo  {journal} {Quantum}\ }\textbf {\bibinfo {volume} {8}},\ \bibinfo {pages} {1461} (\bibinfo {year} {2024}{\natexlab{a}})}\BibitemShut {NoStop}%
\bibitem [{\citenamefont {Hamaguchi}\ \emph {et~al.}(2024{\natexlab{b}})\citenamefont {Hamaguchi}, \citenamefont {Hamada}, \citenamefont {Marumo},\ and\ \citenamefont {Yoshioka}}]{hamaguchi2024faster}%
  \BibitemOpen
  \bibfield  {author} {\bibinfo {author} {\bibfnamefont {H.}~\bibnamefont {Hamaguchi}}, \bibinfo {author} {\bibfnamefont {K.}~\bibnamefont {Hamada}}, \bibinfo {author} {\bibfnamefont {N.}~\bibnamefont {Marumo}},\ and\ \bibinfo {author} {\bibfnamefont {N.}~\bibnamefont {Yoshioka}},\ }\bibfield  {title} {\bibinfo {title} {Faster computation of stabilizer extent},\ }\href {https://arxiv.org/abs/2406.16673} {\bibfield  {journal} {\bibinfo  {journal} {arXiv:2406.16673}\ } (\bibinfo {year} {2024}{\natexlab{b}})}\BibitemShut {NoStop}%
\bibitem [{\citenamefont {Van~den Nest}(2010)}]{VandenNest2010}%
  \BibitemOpen
  \bibfield  {author} {\bibinfo {author} {\bibfnamefont {M.}~\bibnamefont {Van~den Nest}},\ }\bibfield  {title} {\bibinfo {title} {Classical simulation of quantum computation, the gottesman-knill theorem, and slightly beyond},\ }\href {https://doi.org/10.26421/qic10.3-4-6} {\bibfield  {journal} {\bibinfo  {journal} {Quantum Information and Computation}\ }\textbf {\bibinfo {volume} {10}},\ \bibinfo {pages} {258–271} (\bibinfo {year} {2010})}\BibitemShut {NoStop}%
\bibitem [{\citenamefont {Struchalin}\ \emph {et~al.}(2021)\citenamefont {Struchalin}, \citenamefont {Zagorovskii}, \citenamefont {Kovlakov}, \citenamefont {Straupe},\ and\ \citenamefont {Kulik}}]{Struchalin2021}%
  \BibitemOpen
  \bibfield  {author} {\bibinfo {author} {\bibfnamefont {G.}~\bibnamefont {Struchalin}}, \bibinfo {author} {\bibfnamefont {Y.~A.}\ \bibnamefont {Zagorovskii}}, \bibinfo {author} {\bibfnamefont {E.}~\bibnamefont {Kovlakov}}, \bibinfo {author} {\bibfnamefont {S.}~\bibnamefont {Straupe}},\ and\ \bibinfo {author} {\bibfnamefont {S.}~\bibnamefont {Kulik}},\ }\bibfield  {title} {\bibinfo {title} {Experimental estimation of quantum state properties from classical shadows},\ }\bibfield  {journal} {\bibinfo  {journal} {PRX Quantum}\ }\textbf {\bibinfo {volume} {2}},\ \href {https://doi.org/10.1103/prxquantum.2.010307} {10.1103/prxquantum.2.010307} (\bibinfo {year} {2021})\BibitemShut {NoStop}%
\bibitem [{\citenamefont {Kac}\ and\ \citenamefont {Cheung}(2002)}]{Kac2002}%
  \BibitemOpen
  \bibfield  {author} {\bibinfo {author} {\bibfnamefont {V.}~\bibnamefont {Kac}}\ and\ \bibinfo {author} {\bibfnamefont {P.}~\bibnamefont {Cheung}},\ }\href {https://doi.org/10.1007/978-1-4613-0071-7} {\emph {\bibinfo {title} {Quantum Calculus}}}\ (\bibinfo  {publisher} {Springer New York},\ \bibinfo {year} {2002})\BibitemShut {NoStop}%
\bibitem [{\citenamefont {Qian}\ \emph {et~al.}(2024)\citenamefont {Qian}, \citenamefont {Huang},\ and\ \citenamefont {Qin}}]{qian2024augmenting}%
  \BibitemOpen
  \bibfield  {author} {\bibinfo {author} {\bibfnamefont {X.}~\bibnamefont {Qian}}, \bibinfo {author} {\bibfnamefont {J.}~\bibnamefont {Huang}},\ and\ \bibinfo {author} {\bibfnamefont {M.}~\bibnamefont {Qin}},\ }\bibfield  {title} {\bibinfo {title} {Augmenting density matrix renormalization group with clifford circuits},\ }\href {https://arxiv.org/abs/2405.09217} {\bibfield  {journal} {\bibinfo  {journal} {arXiv:2405.09217}\ } (\bibinfo {year} {2024})}\BibitemShut {NoStop}%
\bibitem [{\citenamefont {Kirkpatrick}\ \emph {et~al.}(1983)\citenamefont {Kirkpatrick}, \citenamefont {Gelatt},\ and\ \citenamefont {Vecchi}}]{kirkpatrick1983optimization}%
  \BibitemOpen
  \bibfield  {author} {\bibinfo {author} {\bibfnamefont {S.}~\bibnamefont {Kirkpatrick}}, \bibinfo {author} {\bibfnamefont {C.~D.}\ \bibnamefont {Gelatt}},\ and\ \bibinfo {author} {\bibfnamefont {M.~P.}\ \bibnamefont {Vecchi}},\ }\bibfield  {title} {\bibinfo {title} {Optimization by simulated annealing},\ }\href {https://doi.org/10.1126/science.220.4598.671} {\bibfield  {journal} {\bibinfo  {journal} {Science}\ }\textbf {\bibinfo {volume} {220}},\ \bibinfo {pages} {671–680} (\bibinfo {year} {1983})}\BibitemShut {NoStop}%
\bibitem [{\citenamefont {Vedral}\ and\ \citenamefont {Plenio}(1998)}]{vedral1998entanglement}%
  \BibitemOpen
  \bibfield  {author} {\bibinfo {author} {\bibfnamefont {V.}~\bibnamefont {Vedral}}\ and\ \bibinfo {author} {\bibfnamefont {M.~B.}\ \bibnamefont {Plenio}},\ }\bibfield  {title} {\bibinfo {title} {Entanglement measures and purification procedures},\ }\href {https://doi.org/10.1103/PhysRevA.57.1619} {\bibfield  {journal} {\bibinfo  {journal} {Phys. Rev. A}\ }\textbf {\bibinfo {volume} {57}},\ \bibinfo {pages} {1619} (\bibinfo {year} {1998})}\BibitemShut {NoStop}%
\bibitem [{\citenamefont {Arunachalam}\ and\ \citenamefont {Dutt}(2024)}]{arunachalam2024tolerant}%
  \BibitemOpen
  \bibfield  {author} {\bibinfo {author} {\bibfnamefont {S.}~\bibnamefont {Arunachalam}}\ and\ \bibinfo {author} {\bibfnamefont {A.}~\bibnamefont {Dutt}},\ }\bibfield  {title} {\bibinfo {title} {Towards tolerant testing stabilizer states},\ }\href {https://arxiv.org/abs/2408.06289} {\bibfield  {journal} {\bibinfo  {journal} {arXiv:2408.06289}\ } (\bibinfo {year} {2024})}\BibitemShut {NoStop}%
\bibitem [{\citenamefont {Grewal}\ \emph {et~al.}(2024)\citenamefont {Grewal}, \citenamefont {Iyer}, \citenamefont {Kretschmer},\ and\ \citenamefont {Liang}}]{grewal2024improved}%
  \BibitemOpen
  \bibfield  {author} {\bibinfo {author} {\bibfnamefont {S.}~\bibnamefont {Grewal}}, \bibinfo {author} {\bibfnamefont {V.}~\bibnamefont {Iyer}}, \bibinfo {author} {\bibfnamefont {W.}~\bibnamefont {Kretschmer}},\ and\ \bibinfo {author} {\bibfnamefont {D.}~\bibnamefont {Liang}},\ }\bibfield  {title} {\bibinfo {title} {Improved stabilizer estimation via bell difference sampling},\ }in\ \href {https://arxiv.org/abs/2304.13915} {\emph {\bibinfo {booktitle} {Proceedings of the 56th Annual ACM Symposium on Theory of Computing}}}\ (\bibinfo {year} {2024})\ pp.\ \bibinfo {pages} {1352--1363}\BibitemShut {NoStop}%
\bibitem [{\citenamefont {Montanaro}(2017)}]{montanaro2017learning}%
  \BibitemOpen
  \bibfield  {author} {\bibinfo {author} {\bibfnamefont {A.}~\bibnamefont {Montanaro}},\ }\bibfield  {title} {\bibinfo {title} {Learning stabilizer states by bell sampling},\ }\href {https://arxiv.org/abs/1707.04012} {\bibfield  {journal} {\bibinfo  {journal} {arXiv:1707.04012}\ } (\bibinfo {year} {2017})}\BibitemShut {NoStop}%
\bibitem [{\citenamefont {Gross}\ \emph {et~al.}(2021)\citenamefont {Gross}, \citenamefont {Nezami},\ and\ \citenamefont {Walter}}]{gross2021schur}%
  \BibitemOpen
  \bibfield  {author} {\bibinfo {author} {\bibfnamefont {D.}~\bibnamefont {Gross}}, \bibinfo {author} {\bibfnamefont {S.}~\bibnamefont {Nezami}},\ and\ \bibinfo {author} {\bibfnamefont {M.}~\bibnamefont {Walter}},\ }\bibfield  {title} {\bibinfo {title} {Schur--weyl duality for the clifford group with applications: Property testing, a robust hudson theorem, and de finetti representations},\ }\href {https://doi.org/10.1007/s00220-021-04118-7} {\bibfield  {journal} {\bibinfo  {journal} {Comm. Math. Phys.}\ }\textbf {\bibinfo {volume} {385}},\ \bibinfo {pages} {1325} (\bibinfo {year} {2021})}\BibitemShut {NoStop}%
\bibitem [{\citenamefont {Fawzi}\ and\ \citenamefont {Fawzi}(2018)}]{fawzi2018efficient}%
  \BibitemOpen
  \bibfield  {author} {\bibinfo {author} {\bibfnamefont {H.}~\bibnamefont {Fawzi}}\ and\ \bibinfo {author} {\bibfnamefont {O.}~\bibnamefont {Fawzi}},\ }\bibfield  {title} {\bibinfo {title} {Efficient optimization of the quantum relative entropy},\ }\href {https://doi.org/10.1088/1751-8121/aab285} {\bibfield  {journal} {\bibinfo  {journal} {Journal of Physics A: Mathematical and Theoretical}\ }\textbf {\bibinfo {volume} {51}},\ \bibinfo {pages} {154003} (\bibinfo {year} {2018})}\BibitemShut {NoStop}%
\end{thebibliography}%
	
\end{document}